\setlist{leftmargin=5.5mm}
\colorlet{light blue}{blue!20}
\newtheorem{proposition}{Proposition}[section]
\declaretheorem[name=Lemma,numberwithin=section]{lemma}
\declaretheorem[name=Definition,numberwithin=section]{definition}
\declaretheorem[name=Remark,numberwithin=section]{remark}
\declaretheorem[name=Example,numberwithin=section]{example}
\newcommand{\name}[1]{\textsc{HypHC}}
\newcommand{\CITE}[1]{\textcolor{red}{[CITE]}}
\newcommand{\REF}[1]{\textcolor{red}{[REF]}}
\newcommand{\dec}{\mathsf{dec}}
\begin{document}
\title{From Trees to Continuous Embeddings and Back:\\ Hyperbolic Hierarchical Clustering}
 \author[$\ddagger$]{Ines~Chami}
 \author[$\dagger$]{Albert Gu}
  \author[$\dagger\dagger$]{
Vaggos Chatziafratis}
 \author[$\dagger$]{Christopher~R{\'e}}
 \affil[$\dagger$]{Department of Computer Science, Stanford University}
 \affil[$\ddagger$]{Institute for Computational and Mathematical Engineering, Stanford University}
  \affil[$\dagger\dagger$]{Google Research, NY}
 \affil[ ]{\footnotesize{\texttt{\{chami, albertgu, vaggos, chrismre\}@cs.stanford.edu}}}

\maketitle
\makeatletter
\renewcommand\paragraph{\@startsection{paragraph}{4}{\z@}{0.15ex \@plus1ex \@minus.2ex}{-1em}{\normalfont\normalsize\bfseries}}
\makeatother

\begin{abstract}
Similarity-based Hierarchical Clustering (HC) is a classical unsupervised machine learning algorithm that has traditionally been solved with heuristic algorithms like Average-Linkage. 
Recently, Dasgupta~\cite{dasgupta2016cost} reframed HC as a discrete optimization problem by introducing a global cost function measuring the quality of a given tree. 
In this work, we provide the first continuous relaxation of Dasgupta's discrete optimization problem with provable quality guarantees. 
The key idea of our method, \name{}, is showing a direct correspondence from discrete trees to continuous representations (via the hyperbolic embeddings of their leaf nodes) and back (via a decoding algorithm that maps leaf embeddings to a dendrogram), allowing us to search the space of discrete binary trees with continuous optimization.
Building on analogies between trees and hyperbolic space, we derive a continuous analogue for the notion of lowest common ancestor, which leads to a continuous relaxation of Dasgupta's discrete objective.
We can show that after decoding, the global minimizer of our continuous relaxation yields a discrete tree with a $(1+\varepsilon)$-factor approximation for Dasgupta's optimal tree, where $\varepsilon$ can be made arbitrarily small and controls optimization challenges. 
We experimentally evaluate \name{} on a variety of HC benchmarks and find that even approximate solutions found with gradient descent have superior clustering quality than agglomerative heuristics or other gradient based algorithms.
Finally, we highlight the flexibility
of \name{} using end-to-end training in a downstream classification task.

\end{abstract}

\section{Introduction}
Hierarchical Clustering (HC) is a fundamental problem in data analysis, where given datapoints and their pairwise similarities, the goal is to construct a hierarchy over clusters, in the form of a tree whose leaves correspond to datapoints and internal nodes correspond to clusters. HC naturally arises in standard applications where data exhibits hierarchical structure, ranging from phylogenetics~\cite{felsenstein2004inferring} and cancer gene sequencing~\cite{sorlie2001gene, sotiriou2003breast} to text/image analysis~\cite{steinbach2000}, community detection~\cite{leskovec2019mining} and everything in between. A family of easy to implement, yet slow, algorithms includes agglomerative methods (e.g., Average-Linkage) that build the tree in a bottom-up manner by iteratively merging pairs of similar datapoints or clusters together. In contrast to ``flat'' clustering techniques like $k$-means, HC provides fine-grained interpretability and rich cluster information at all levels of granularity and alleviates the requirement of specifying a fixed number of clusters a priori.

Despite the abundance of HC algorithms, the HC theory was underdeveloped, since no “global” objective function was associated with the final tree output.
A well-formulated objective allows us to compare different algorithms, measure their quality, and explain their success or failure.\footnote{Contrast this lack of HC objectives with flat clustering, where $k$-means objectives and others have been studied intensively from as early as 1960s (e.g.,~\cite{hartigan1975clustering}), leading today to a comprehensive theory on clustering.} 
To address this issue, Dasgupta~\cite{dasgupta2016cost} recently introduced a discrete cost function over the space of binary trees with $n$ leaves. 
A key property of his cost function is that low-cost trees correspond to meaningful hierarchical partitions in the data.
He initially proved this for symmetric stochastic block models, and later works provided experimental evidence~\cite{roy2017hierarchical}, or showed it for hierarchical stochastic block models, suitable for inputs that contain a ground-truth hierarchical structure~\cite{cohen2017hierarchical,cohen2019hierarchical}. 
These works led to important steps towards understanding old and building new HC algorithms~\cite{charikar2017approximate,charikar2019hierarchical,moseley2017approximation,alon2020hierarchical}.

The goal of this paper is to improve the performance of HC algorithms using a differentiable relaxation of Dasgupta's discrete optimization problem. 
There have been recent attempts at gradient-based HC via embedding methods, which do not directly relax Dasgupta's optimization problem.
UFit~\cite{chierchia2019ultrametric} addresses a different ``ultrametric fitting'' problem using Euclidean embeddings, while gHHC~\cite{monath2019gradient} assumes that partial information about the optimal clustering---more specifically leaves' hyperbolic embeddings---is known.
Further, these two approaches lack theoretical guarantees in terms of clustering quality and are outperformed by discrete agglomerative algorithms.

A relaxation of Dasgupta's discrete optimization combined with the powerful toolbox of gradient-based optimization has the potential to yield improvements in terms of (a) \textbf{clustering quality} (both theoretically and empirically), (b) \textbf{scalability}, and (c) \textbf{flexibility}, since a gradient-based approach can be integrated into machine learning (ML) pipelines with end-to-end training. 
However, due to the inherent discreteness of the HC optimization problem, several challenges arise:
\begin{enumerate}[label=(\arabic*),topsep=0pt]
    \itemsep -2pt
    \item How can we continuously \textbf{parameterize the search space} of discrete binary trees?
    A promising direction is leveraging \emph{hyperbolic embeddings} which are more aligned with the geometry of trees than standard Euclidean embeddings~\cite{sarkar2011low}. 
    However, hyperbolic embedding methods typically assume a \emph{fully known}~\cite{nickel2017poincare,sala2018representation} or partially known graph~\cite{monath2019gradient} that will be embedded, whereas the challenge here is searching over an exponentially large space of trees with \emph{unknown} structure.
    \item How can we derive a \textbf{differentiable relaxation} of the HC cost? One of the key challenges is that this cost relies on discrete properties of trees such as the {lowest common ancestor} (LCA). 
    \item How can we \textbf{decode} a discrete binary tree from continuous representations, while ensuring that the ultimate discrete cost is close to the continuous relaxation?
\end{enumerate}
Here, we introduce \name{}, an end-to-end differentiable model for HC with provable guarantees in terms of clustering quality, which can be easily incorporated into ML pipelines. 
\begin{enumerate}[label=(\arabic*),topsep=0pt]
    \itemsep -2pt
    \item Rather than minimizing the cost function by optimizing over discrete trees, we parameterize trees using leaves' hyperbolic embeddings. 
    In contrast with Euclidean embeddings, hyperbolic embeddings can represent trees with arbitrarily low distortion in just two dimensions~\cite{sarkar2011low}.
    We show that the leaves themselves provide enough information about the underlying tree, avoiding the need to explicitly represent the discrete structure of internal nodes.
    \item We derive a {continuous} LCA analogue,
    which leverages the analogies between shortest paths in trees and hyperbolic geodesics~(\cref{fig:lca}),
    and propose a differentiable variant of Dasgupta's cost.
    \item  We propose a decoding algorithm for the internal nodes which maps the learned leaf embeddings to a dendrogram (cluster tree) of low distortion.
\end{enumerate}
We show (a) that our approach produces good \textbf{clustering quality}, in terms of Dasgupta cost.
Theoretically, assuming perfect optimization, the optimal clustering found using \name{} yields a $(1+\varepsilon)$-approximation to the minimizer of the discrete cost, where $\varepsilon$ can be made arbitrarily small, and controls the tradeoffs between quality guarantees and optimization challenges.
{Notice that due to our perfect optimization assumption, this does not contradict previously known computational hardness results based on the Small Set Expansion for achieving constant approximations~\cite{charikar2017approximate}, but allows us to leverage the powerful toolbox of nonconvex optimization.}
Empirically, we find that even approximate \name{} solutions found with gradient descent outperform or match the performance of the best discrete and continuous methods on a variety of HC benchmarks. 
Additionally, (b) we propose two extensions of \name{} that enable us to \textbf{scale} to large inputs and we also study the tradeoffs between clustering quality and speed.
Finally, (c) we demonstrate the \textbf{flexibility} of \name{} by jointly optimizing embeddings for HC and a downstream classification task, improving accuracy by up to 3.8\% compared to the two-stage embed-then-classify approach.

\begin{figure}
     \centering
     \begin{subfigure}[b]{0.24\textwidth}
     \resizebox{\textwidth}{!}{\renewcommand{\arraystretch}{0.95}
     \input{figures/tikz_tree}}
\caption{}\label{fig:triplet_cond}
     \end{subfigure}
     \begin{subfigure}[b]{0.24\textwidth}
         \centering
         \includegraphics[width=0.9\textwidth]{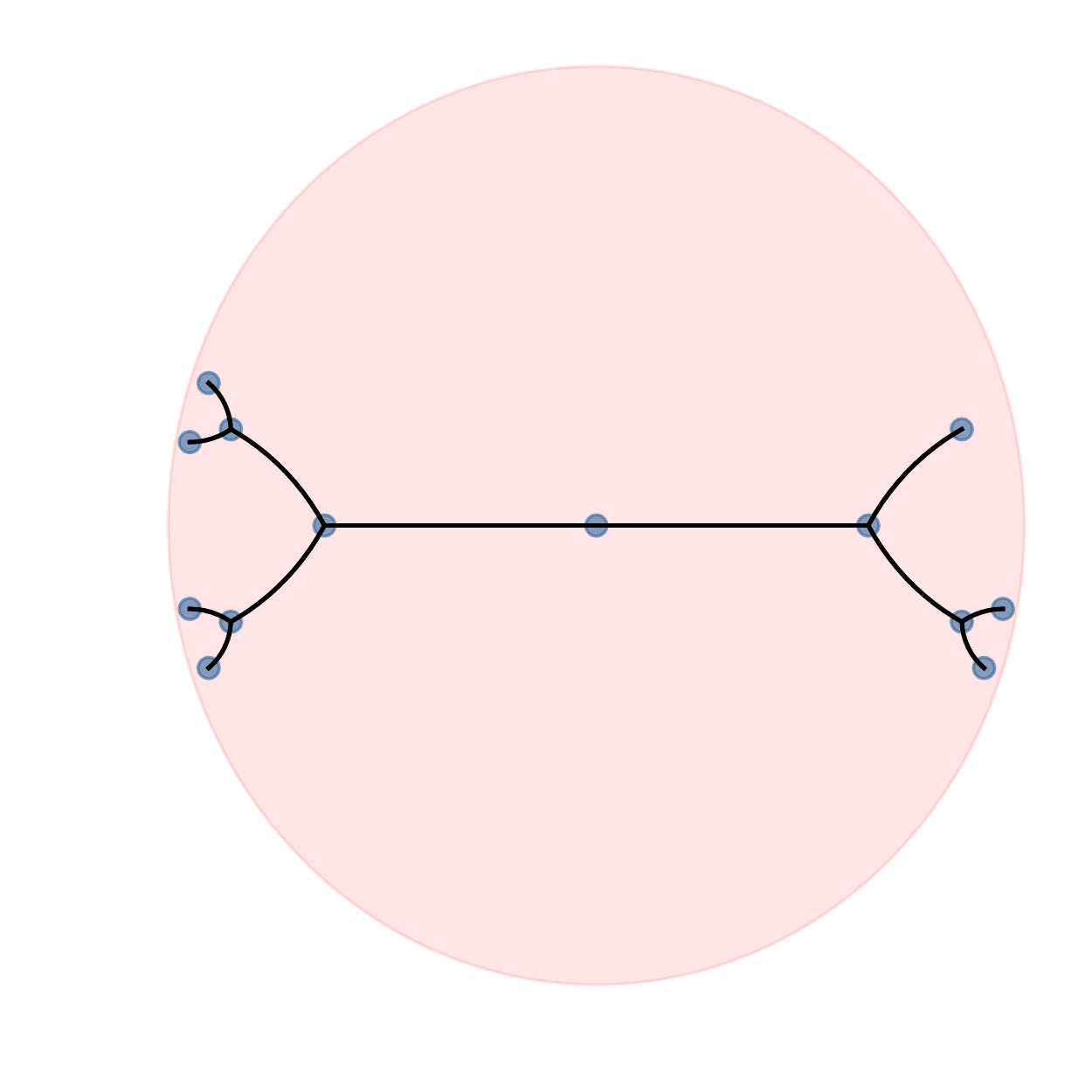}
         \caption{}\label{fig:sarkar}
     \end{subfigure}
     \begin{subfigure}[b]{0.24\textwidth}
         \centering
         \includegraphics[width=0.9\textwidth]{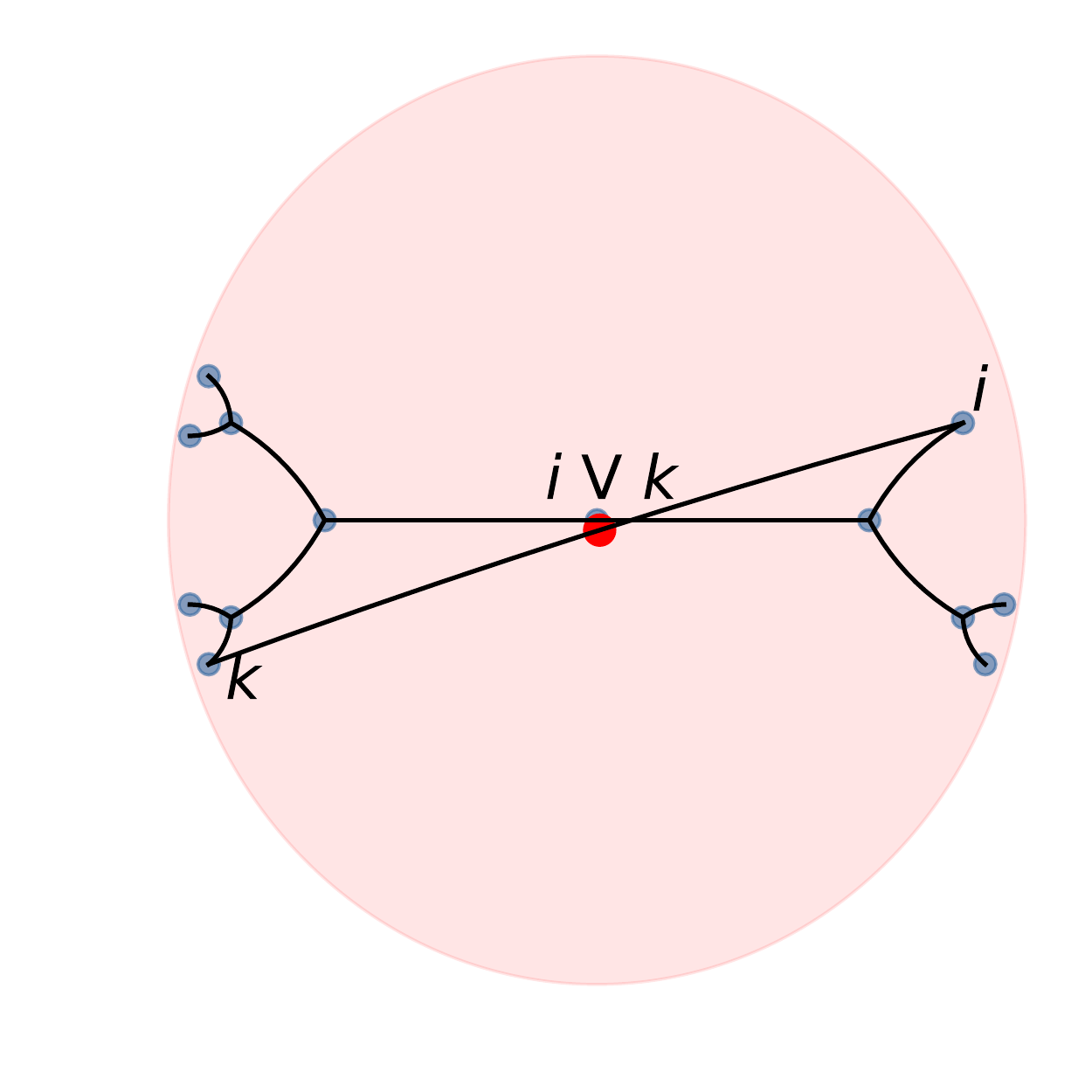}
         \caption{}
         \label{fig:hyp_lca_1}
     \end{subfigure}
     \begin{subfigure}[b]{0.24\textwidth}
         \centering
         \includegraphics[width=0.9\textwidth]{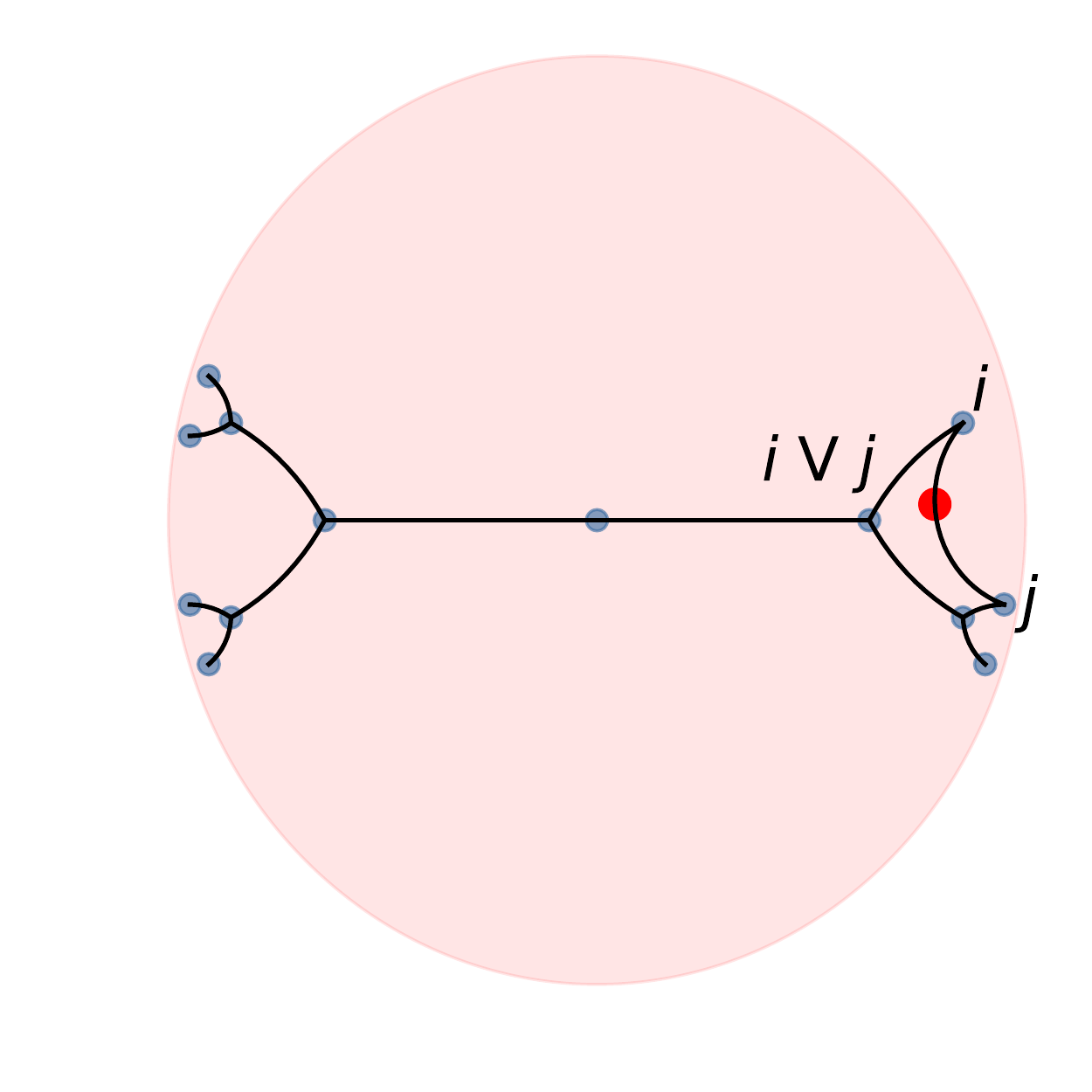}
         \caption{}
         \label{fig:hyp_lca_2}
     \end{subfigure}
     \caption{
     The tree shown in (a) is embedded into hyperbolic space ($\mathbb{B}_2$) in (b), (c), and (d). 
     In (c) and (d) we show the hyperbolic LCA (in red) and illustrate the relationship between the discrete LCA, which is central to Dasgupta's cost, and geodesics (shortest paths) in hyperbolic space. 
     }\label{fig:lca}
\end{figure}
\section{Related Work}
\paragraph{Gradient-based Hierarchical Clustering}
Chierchia et al.~\cite{chierchia2019ultrametric} introduce an ultra-metric fitting framework (UFit) to learn embeddings in ultra-metric spaces.
These are restrictive metric spaces (and a special case of tree metrics) where the triangle inequality is strengthened to $d(x, z)\le\mathrm{max}\{d(x, y), d(y, z)\}$.
UFit can be extended to HC using an agglomerative method to produce a dendrogram.
gHHC~\cite{monath2019gradient}, which inspired our work, is the first attempt at HC with hyperbolic embeddings.
This model assumes input leaves' hyperbolic embeddings are given (by fixing their distance to the origin) and optimizes embeddings of internal nodes.
Observing that the internal tree structure can be directly inferred from leaves' hyperbolic embeddings, we propose an approach that is a direct relaxation of Dasgupta's discrete optimization framework and does not assume any information about the optimal clustering. 
In contrast with previous gradient-based approaches, our approach has theoretical guarantees in terms of clustering quality and empirically outperforms agglomerative heuristics. 

\paragraph{Dasgupta's cost}
Having an objective function for HC is crucial not only for guiding the optimization, but also for having theoretically grounded solutions. Unfortunately, there has been a lack of global objectives measuring the quality of a HC, which is in stark contrast with the plethora of flat clustering objectives like $k$-means, $k$-center, correlation clustering (e.g.,~\cite{charikar2005clustering, bansal2004correlation}) and many more. This lack of optimization objectives for HC was initially addressed indirectly in~\cite{dasgupta2002performance} (by comparing a tree against solutions to $k$-clustering objectives for multiple $k$), and has again emerged recently by Dasgupta~\cite{dasgupta2016cost}, who introduced a cost to evaluate and compare the performance of HC algorithms. This discrete HC cost has the key property that good clustering trees should yield a low cost. The formulation of this objective favors binary trees and the optimum solution can always be assumed to be binary, as a tree with higher fan-out in any internal node can easily be modified into a binary tree with the same cost (or less).
In particular, one interesting aspect of this objective is that running Recursive Sparsest Cut would produce a HC with provable guarantees with respect to his cost function~\cite{charikar2017approximate}. 
Subsequent work was able to shed light to Average Linkage performance; specifically, \cite{moseley2017approximation} studied the complement to Dasgupta's cost function and showed that Average Linkage will find a solution within 33\% of the optimizer. 
Further techniques based on semidefinite programming relaxations led to improved approximations~\cite{charikar2019hierarchical,ahmadian2019bisect,alon2020hierarchical}, however with a significant overhead in runtime (see~\cite{chatziafratis2020hierarchical} for a survey).

\paragraph{Agglomerative Hierachical Clustering}
One of the first suite of algorithms developed to solve HC was bottom-up linkage methods. These are simple and easy to implement algorithms that recursively merge similar datapoints to form some small clusters and then gradually larger and larger clusters emerge. Well-known heuristics include Single Linkage, Complete Linkage and Average Linkage, that we briefly describe here. All three heuristics start with $n$ datapoints forming singleton clusters initially and perform exactly $n-1$ merges in total until they produce a binary tree corresponding to the HC output. At any given step, if $A$ and $B$ denote two already formed clusters, the criterion for which clusters to merge next is to \textit{minimize} the minimum, maximum and average distance between two clusters $A$ and $B$, for Single, Complete and Average Linkage respectively.  
If instead of pairwise distances the input was given as a similarity graph, analogous formulas can be used, where the criterion is to maximize the respective quantities. These algorithms can be made to run in time $O(n^2\log n)$. 
A bottleneck at the core of the computations for such algorithms is the nearest neighbor problem, for which known almost quadratic lower bounds apply~\cite{cs2018closest,alman2015probabilistic}. However, recent works~\cite{abboud2019subquadratic,charikar2019euclidean} try to address this in Euclidean spaces where we are provided features by using locality sensitive hashing techniques and approximate nearest neighbors. 
Instead, here, we learn these features using similarities only, and then use a fast decoding algorithm on the learned features to output the final tree. Finally, HC has been studied in the model of parallel computation, where a variation of Boruvka's minimum spanning tree algorithm was used~\cite{bateni2017affinity}.

\paragraph{Hyperbolic embeddings}
Hyperbolic geometry has been deeply studied in the network science community~\cite{krioukov2010hyperbolic,papadopoulos2012popularity}, with applications to network routing for instance~\cite{cvetkovski2009hyperbolic,kleinberg2007geographic}.
Recently, there has been interest in using hyperbolic space to embed data that exhibits hierarchical structures. 
In particular, Sarkar~\cite{sarkar2011low} introduced a combinatorial construction that can embed trees with high fidelity, in just two dimensions~(\cref{fig:sarkar}). 
Sala et al.~\cite{sala2018representation} extend this construction to higher dimensions and study the precision-quality tradeoffs that arise with hyperbolic embeddings. 
Nickel et al.~\cite{nickel2017poincare} propose a gradient-based method to embed taxonomies in the Poincar\'e model of hyperbolic space~\cite{nickel2017poincare}, which was further extended to the hyperboloid model~\cite{nickel2018learning}.
More recently, hyperbolic embeddings have been applied to neural networks~\cite{ganea2018hyperbolic} and later extended to graph neural networks~\cite{chami2019hyperbolic,liu2019hyperbolic} and knowledge graph embeddings~\cite{chami-etal-2020-low} (see~\cite{chami2020machine} for a survey).
In contrast with all the above methods that assume a known input graph structure, \name{} discovers and decodes a tree structure using pairwise similarities.

\section{Background}
We first introduce our notations and the problem setup.
We then briefly review basic notions from hyperbolic geometry and refer to standard texts for a more in-depth treatment~\cite{carmo1992riemannian}.
\cref{appendix:preliminaries} includes additional background about hyperbolic geometry for our technical proofs.
\subsection{Similarity-Based Hierarchical clustering}
\paragraph{Notations} 
We consider a dataset $\mathcal{D}$ with $n$ datapoints and pairwise similarities $(w_{ij})_{i, j \in [n]}$.
A HC of $\mathcal{D}$ is a rooted binary tree $T$ with exactly $n$ leaves, such that each leaf corresponds to a datapoint in $\mathcal{D}$, and intermediate nodes represent clusters. 
For two leaves $(i, j)$ in $T$, we denote $i\vee j$ their LCA, and $T[i\vee j]$ the subtree rooted at $i\vee j$, which represents the smallest cluster containing both $i$ and $j$.
In particular, we denote $\mathrm{leaves}(T[i\vee j])$ the leaves of $T[i\vee j]$, which correspond to datapoints in the cluster $i\vee j$.
Finally, given three leaf nodes $(i, j, k)$ in $T$, we say that the relation $\{i, j|k\}$ holds if $i\vee j$ is a descendant of $i\vee j\vee k$~(\cref{fig:triplet_cond}).
In this scenario, we have $i\vee k=j\vee k$.

\paragraph{HC discrete optimization framework}
The goal of HC is to find a binary tree $T$ that is mindful of pairwise similarities in the data. 
Dasgupta~\cite{dasgupta2016cost} introduced a cost function over possible binary trees with the crucial property that good trees should have a low cost: $C_{\mathrm{Dasgupta}}(T; w) = \sum_{ij}w_{ij}|\mathrm{leaves}(T[i\vee j])|$.
Intuitively, a good tree for Dasgupta's cost merges similar nodes (high $w_{ij}$) first in the hierarchy (small subtree).
In other words, a good tree should cluster the data such that similar datapoints have LCAs further from the root than dissimilar datapoints. As the quantity $|\mathrm{leaves}(T[i\vee j])|$ may be hard to parse, we note here a simpler way to think of this term based on triplets of datapoints $i,j,k$: whenever a tree splits $i,j,k$ for the first time, then $k\notin \mathrm{leaves}(T[i\vee j])$ if and only if $k$ was separated from $i,j$. This triplet perspective has been fruitful in approximation algorithms for constrained HC~\cite{chatziafratis2018hierarchical}, in the analysis of semidefinite programs~\cite{charikar2019hierarchical} and for formulating normalized variants of Dasgupta's objective~\cite{wang2018improved}. 

More formally, Wang and Wang~\cite{wang2018improved} observe that:
\begin{equation}\label{eq:dasgupta_wang}
\begin{split}
    C_{\mathrm{Dasgupta}}(T; w) 
    &=\sum_{ijk}[w_{ij} + w_{ik} + w_{jk}-w_{ijk}(T; w)] + 2 \sum_{ij}w_{ij}\\
    \text{where}\ \ w_{ijk}(T; w) & = w_{ij}\mathbbm{1}[\{i, j|k\}] + w_{ik}\mathbbm{1}[\{i, k|j\}] + w_{jk}\mathbbm{1}[\{j, k|i\}],
\end{split}
\end{equation}
which reduces difficult-to-compute quantities about subtree sizes to simpler statements about LCAs.
For binary trees, exactly one of $\mathbbm{1}[\{i, j|k\}], \mathbbm{1}[\{i, k|j\}], \mathbbm{1}[\{j, k|i\}]$ holds, and these are defined through the notion of LCA. 
As we shall see next, we can relax this objective using a continuous notion of LCA in hyperbolic space. 
Dasgupta's cost-based perspective leads to a natural optimization framework for HC, where the goal is to find $T^*$ such that:
\begin{equation}\label{eq:opt}
    T^* = \underset{\mathrm{all}\ \mathrm{binary}\ \mathrm{trees}\ T}{\mathrm{argmin}}\  C_{\mathrm{Dasgupta}}(T; w).
\end{equation}

\subsection{Hyperbolic geometry}
\paragraph{Poincar\'e model of hyperbolic space}
Hyperbolic geometry is a non-Euclidean geometry with a constant negative curvature. 
We work with the Poincar\'e model with negative curvature $-1$: $\mathbb{B}_d=\{x\in\mathbb{R}^d:||x||_2\le 1\}$.
Curvature measures how an object deviates from a flat surface;  small absolute curvature values recover Euclidean geometry, while negative curvatures become ``tree-like''.

\paragraph{Distance function and geodesics}
The hyperbolic distance between two points $(x, y)\in(\mathbb{B}_d)^2$ is:
\begin{equation}\label{eq:hyp_distance}
    d(x, y)=\mathrm{cosh}^{-1}\bigg(1+2\frac{||x-y||_2^2}{(1-||x||_2^2)(1-||y||_2^2)}\bigg).
\end{equation}
If $y=o$, the origin of the hyperbolic space, the distance function has the simple expression: $d_o(x)\coloneqq d(o, x)=2\ \mathrm{tanh}^{-1}(||x||_2)$.
This distance function induces two types of geodesics: straight lines that go through the origin, and segments of circles perpendicular to the boundary of the ball.
These geodesics resemble shortest paths in trees: the hyperbolic distance between two points approaches the sum of distances between the points and the origin, similar to trees, where the shortest path between two nodes goes through their LCA~(\cref{fig:lca}).

\section{Hyperbolic Hierarchical Clustering}
\begin{figure}[t]
    \centering
    \begin{subfigure}[b]{0.194\textwidth}
    \includegraphics[width=\textwidth]{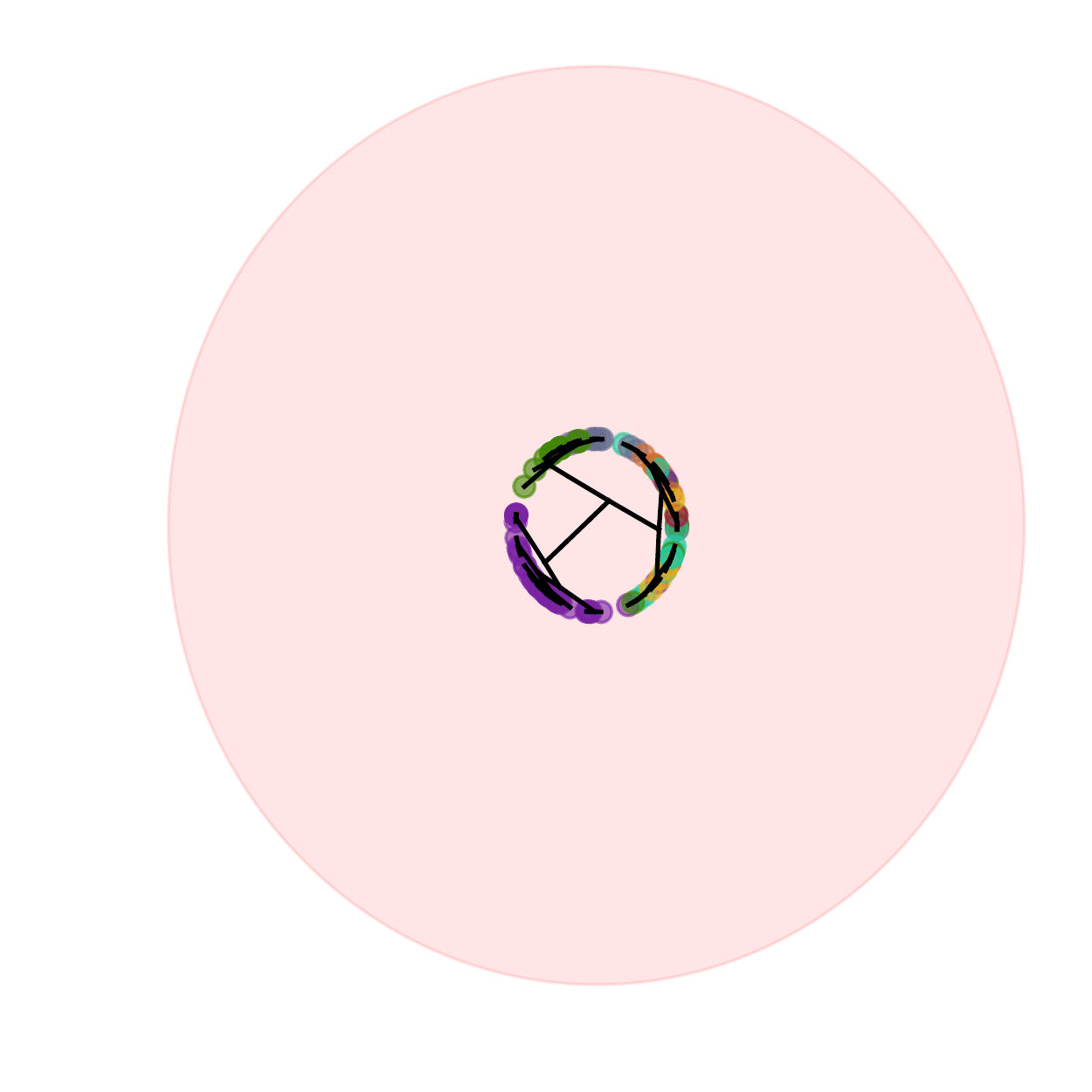}
    \end{subfigure}
    \begin{subfigure}[b]{0.194\textwidth}
    \includegraphics[width=\textwidth]{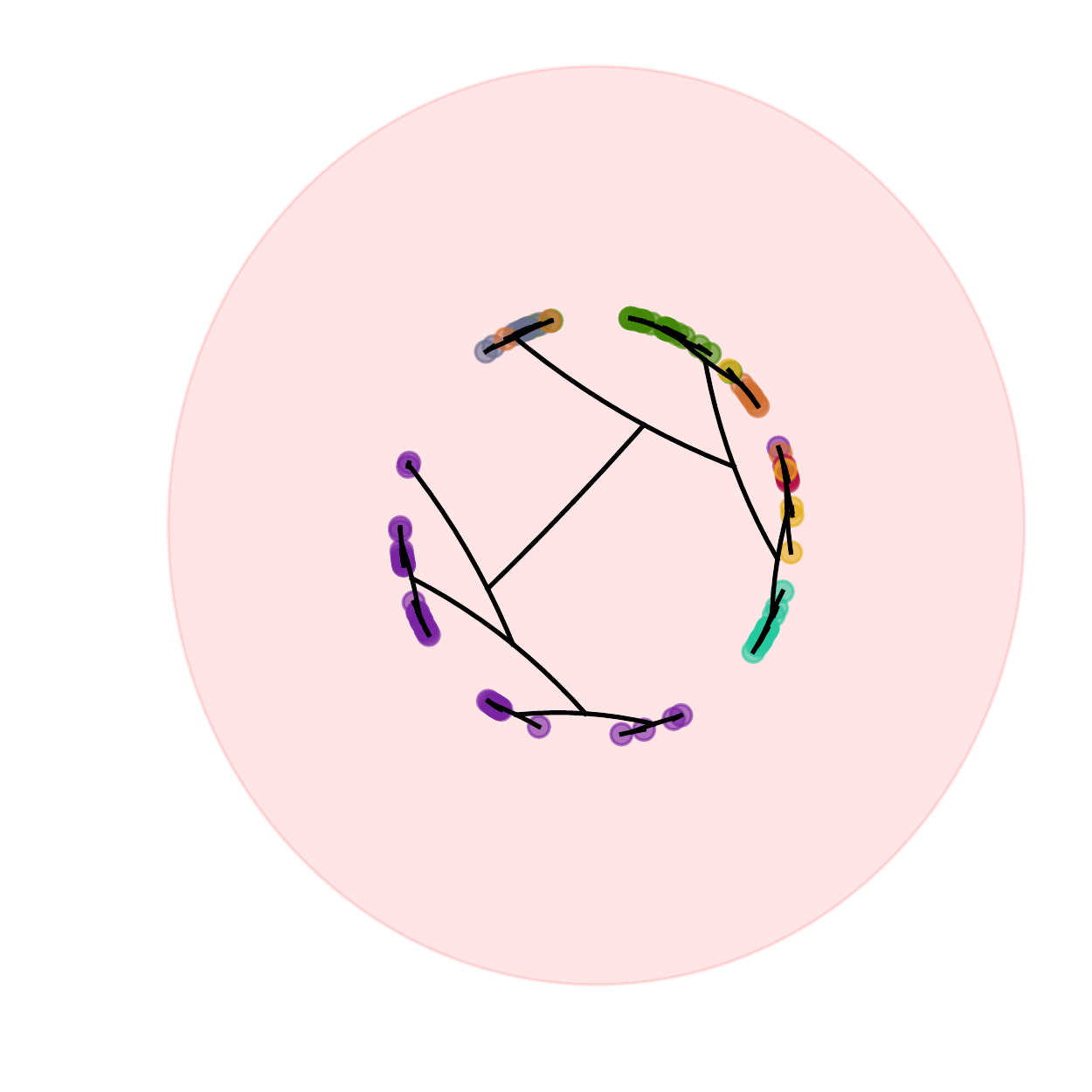}
    \end{subfigure}
    \begin{subfigure}[b]{0.194\textwidth}
    \includegraphics[width=\textwidth]{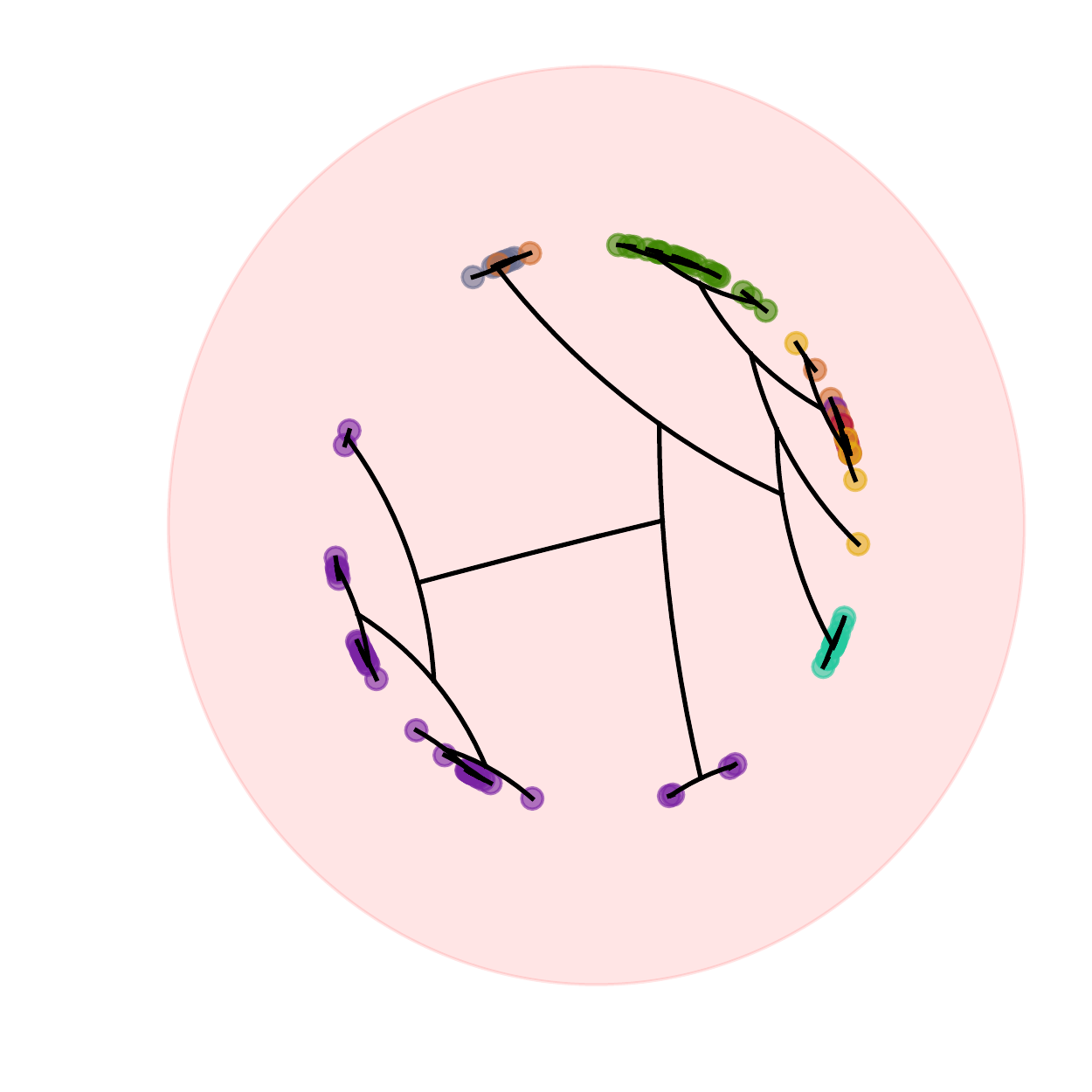}
    \end{subfigure}
    \begin{subfigure}[b]{0.194\textwidth}
    \includegraphics[width=\textwidth]{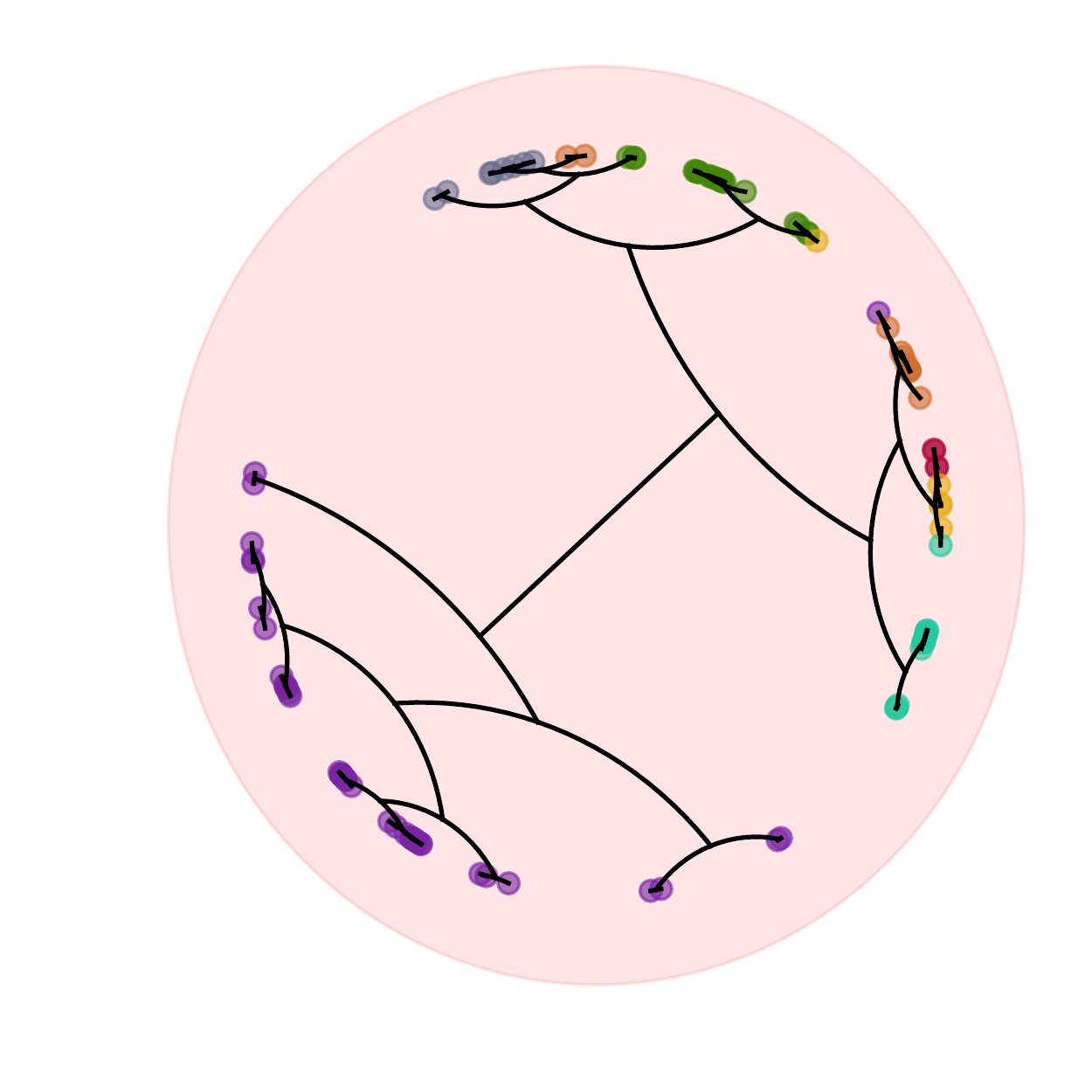}
    \end{subfigure}
    \begin{subfigure}[b]{0.194\textwidth}
    \includegraphics[width=\textwidth]{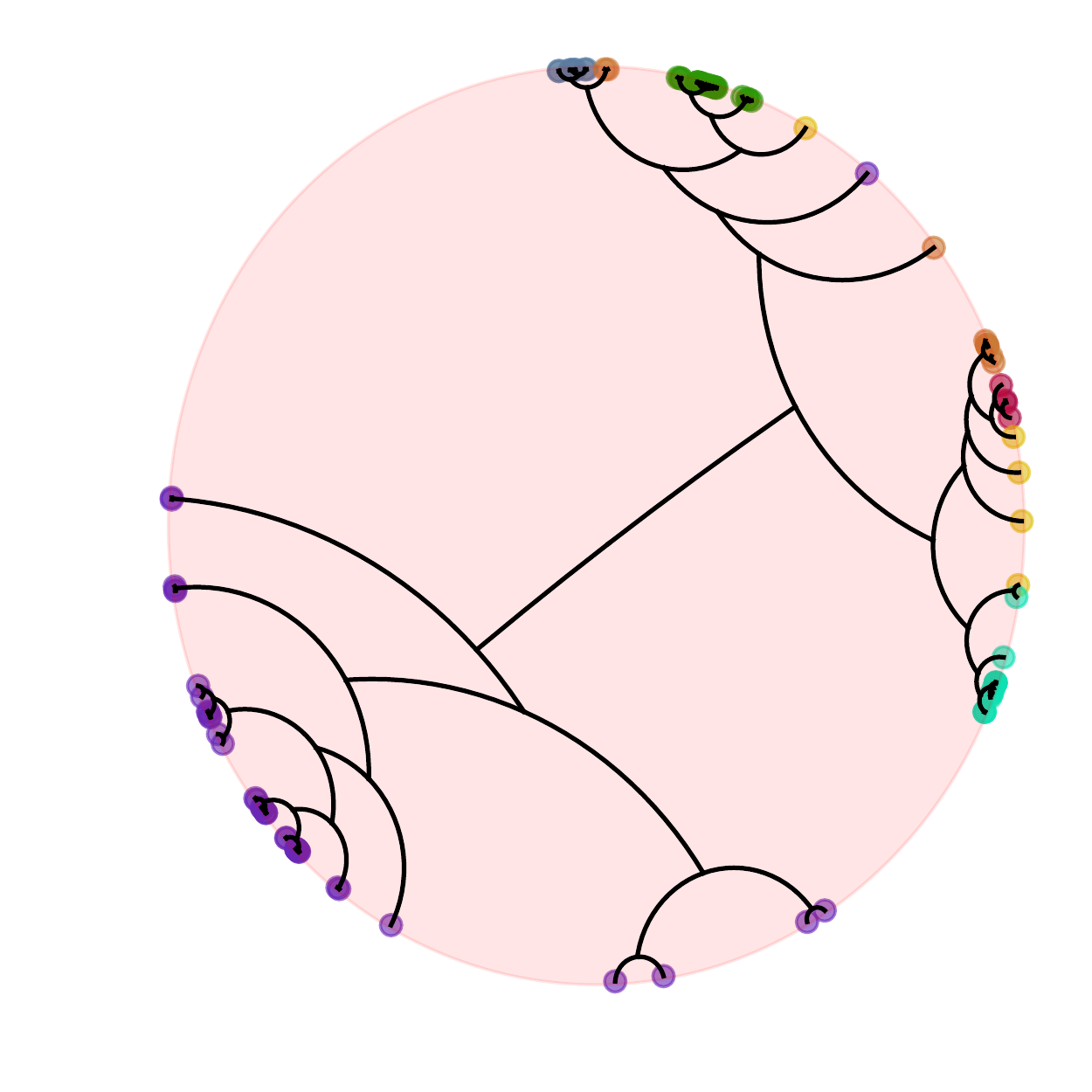}
    \end{subfigure}
    \caption{Visualization of \name{} embeddings and decoded trees during optimization.}\label{fig:scale}
\end{figure}
Our goal is to relax the discrete optimization problem in~\cref{eq:opt}. 
To do so, we represent trees using the hyperbolic embeddings of their leaves in the Poincar\'e disk (\cref{subsec:diff}).\footnote{We developed our theory in two dimensions. As shown in Sala et al.~\cite{sala2018representation}, this can present optimization difficulties due to precision requirements.
However, using Prop 3.1 of~\cite{sala2018representation}, we can increase the dimension to reduce this effect.}
We introduce a differentiable HC cost, using a continuous LCA analogue (\cref{subsec:cont}), and then propose a decoding algorithm, $\dec(\cdot)$, which maps embeddings to discrete binary trees~(\cref{subsec:dec}).
Our main result is that, when embeddings are optimized over a special set of \textit{spread} embeddings (\cref{def:spread}), solving our continuous relaxation yields a $(1+\varepsilon)$-approximation to the minimizer of Dasgupta's cost (\cref{subsec:theory}), where $\varepsilon$ can be made arbitrarily small.
Next, we present our continuous optimization framework and then detail the different components of \name{}.
\paragraph{\name{} continuous optimization framework}
Formally, if $\mathcal{Z}\subset\mathbb{B}_2^n$ denotes an arbitrary constrained set of embeddings, we propose to optimize the following continuous constrained optimization problem as a proxy for the discrete problem in~\cref{eq:opt}:
\begin{equation}\label{eq:diff}
\begin{split}
    Z^*={\mathrm{argmin}}_{Z\in\mathcal{Z}}\  C_{\mathrm{\name{}}}(Z; w,\tau)\ 
    \mathrm{and}\ T=\dec(Z^*)
\end{split}
\end{equation}
where $C_\mathrm{\name{}}(\cdot;w,\tau)$ is our differentiable cost and $\tau$ is a temperature used to relax the $\max$ function.

\subsection{Continuous tree representation via hyperbolic embeddings}\label{subsec:diff}
To perform gradient-based HC, one needs a continuous parameterization of possible binary trees on a fixed number of leaves.
We parameterize binary trees using the (learned) hyperbolic embedding of their leaves. 
Our insight is that because hyperbolic space induces a close correspondence between embeddings and tree metrics~\cite{sarkar2011low}, the leaf embeddings alone provide enough information to recover the full tree.
In fact, as hyperbolic points are pushed towards the boundary, the embedding becomes more tree-like (\cref{fig:scale})~\cite{sala2018representation}.
Thus, rather than optimize over all possible tree structures, which could lead to combinatorial explosion, we embed only the leaf nodes in the Poincar\'e disk using an embedding look-up:
\begin{equation}
    i\rightarrow z_i\in\mathbb{B}_2.
\end{equation}
These leaves' embeddings are optimized with gradient-descent using a continuous relaxation of Dasgupta's cost~(\cref{subsec:cont}), and are then decoded into a discrete binary tree to produce a HC on the leaves~(\cref{subsec:dec}). 

\subsection{Differentiable objective function}\label{subsec:cont}
Having a continuous representation of trees is not sufficient for gradient-based HC, since Dasgupta's cost requires computing the discrete LCA. 
We leverage the similarities between geodesics in hyperbolic space and shortest paths in trees, to derive a continuous analogue of the discrete LCA.
We then use this hyperbolic LCA to introduce a differentiable version of Dasgupta's cost, $C_\name{}(\mathrm{\cdot};w,\tau)$.

\paragraph{Hyperbolic lowest common ancestor}
Given two leaf nodes $i$ and $j$, their LCA $i\vee j$ in $T$ is the node on the shortest path connecting $i$ and $j$ (denoted $i\leadsto j$) that is closest to the root node $r$: 
\begin{equation}
    i\vee j=\mathrm{argmin}_{k\in i\leadsto j}d_T(r, k).
\end{equation} 
Analogously, we define the hyperbolic LCA between two points $(z_i, z_j)$ in hyperbolic space as the point on their geodesic (shortest path, denoted $z_i\leadsto z_j$) that is closest to the origin (the root):
\begin{equation}\label{eq:hyp_lca}
    z_i \vee z_j\coloneqq\mathrm{argmin}_{z\in z_i\leadsto z_j}d(o, z).
\end{equation}
Note that $z_i \vee z_j$ is also the orthogonal projection of the origin onto the geodesic~(\cref{fig:lca}) and its hyperbolic distance to the origin can be computed exactly.
\begin{restatable}[]{lemma}{lcanorm}\label{lemma:lca_norm}
Let $(x, y)\in(\mathbb{B}_2)^2$ and $x\vee y$ denote the point on the geodesic connecting $x$ and $y$ that minimizes the distance to the origin $o$.
Let $\theta$ be the angle between $(x, y)$ and $\alpha$ be the angle between $(x, x\vee y)$. 
We have:
\begin{equation}\label{eq:lca_norm}
\begin{split}
    \alpha&=\mathrm{tan}^{-1}\bigg(\frac{1}{\mathrm{sin}(\theta)}\bigg(\frac{||x||_2(||y||_2^2+1)}{||y||_2(||x||_2^2+1)}-\mathrm{cos}(\theta)\bigg)\bigg),\\
    \mathrm{and}\ \ d_o(x\vee y)&=2\ \mathrm{tanh}^{-1}(\sqrt{R^2+1}-R),\\
    \mathrm{where}\ \ R&=\sqrt{\bigg(\frac{||x||_2^2+1}{2||x||_2\mathrm{cos}(\alpha)}\bigg)^2-1}.
\end{split}
\end{equation}
\end{restatable}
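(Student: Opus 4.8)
The plan is to reduce the computation entirely to the Euclidean geometry of the circular arc carrying the geodesic through $x$ and $y$. Recall that in the Poincar\'e disk every geodesic is either a diameter or an arc of a Euclidean circle meeting the boundary orthogonally; I will treat the generic case in which $x$, $y$, and $o$ are not collinear (otherwise $\sin\theta=0$, the geodesic is a diameter, and $x\vee y=o$). Let $\Gamma$ be the supporting circle, with Euclidean center $C$ and Euclidean radius $R$. Orthogonality to the unit circle is equivalent to $||C||_2^2=R^2+1$, so $||C||_2=\sqrt{R^2+1}$, a relation I use at the end. The point of $\Gamma$ closest to $o$ lies on the ray from $o$ through $C$ at Euclidean distance $||C||_2-R=\sqrt{R^2+1}-R$; since $x\vee y$ is precisely the foot of the perpendicular from $o$ to the geodesic, it coincides with this closest point, so $d_o(x\vee y)=2\,\mathrm{tanh}^{-1}(||x\vee y||_2)=2\,\mathrm{tanh}^{-1}(\sqrt{R^2+1}-R)$, recovering the middle formula.

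It remains to locate $C$ and evaluate $R$. The key geometric fact is that a circle orthogonal to the unit circle is invariant under inversion $p\mapsto p/||p||_2^2$; hence $\Gamma$ passes not only through $x,y$ but also through the inverses $x/||x||_2^2$ and $y/||y||_2^2$. Because $x$ and $x/||x||_2^2$ are collinear with $o$, the center $C$ lies on the perpendicular bisector of that chord, which forces $C\cdot\hat{x}=\frac{||x||_2^2+1}{2||x||_2}$ (write $a$ for this value), and symmetrically $C\cdot\hat{y}=\frac{||y||_2^2+1}{2||y||_2}$ (write $b$), where $\hat{x},\hat{y}$ are the unit directions of $x,y$. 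Placing $\hat{x}$ along a coordinate axis so that $\hat{y}$ makes angle $\theta$ with it, and writing $\hat{C}$ at angle $\alpha$ from $\hat{x}$ (so that $\alpha$ is exactly the origin-angle between $x$ and the closest point $x\vee y$), the two projection equations become $||C||_2\cos\alpha=a$ and $||C||_2\cos(\alpha-\theta)=b$. Dividing the second by the first and expanding $\cos(\alpha-\theta)$ gives $\cos\theta+\tan\alpha\,\sin\theta=b/a$, i.e. $\tan\alpha=\frac{1}{\sin\theta}(b/a-\cos\theta)$; substituting the values of $a$ and $b$ yields the stated expression for $\alpha$.

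Finally, from $||C||_2\cos\alpha=a$ I obtain $||C||_2=\frac{a}{\cos\alpha}=\frac{||x||_2^2+1}{2||x||_2\cos\alpha}$, and combining with $R=\sqrt{||C||_2^2-1}$ produces the third formula; plugging this $R$ into the distance expression from the first paragraph closes the argument. The one real subtlety — the step I expect to require the most care — is justifying the clean characterization of $C$ via inversion, equivalently verifying the orthogonality condition $||C||_2^2=R^2+1$ and confirming that $x\vee y$ is genuinely the Euclidean closest point of $\Gamma$ to $o$ rather than merely a hyperbolic critical point on the arc. Once that geometric reduction is in place the rest is routine trigonometry, with bookkeeping needed only to fix the correct branch/sign of $\alpha$ so that $C$ lies on the same side of $\hat{x}$ as $\hat{y}$.
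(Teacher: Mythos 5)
Your proposal is correct and follows essentially the same route as the paper's proof: both identify the Euclidean circle supporting the geodesic, exploit its orthogonality to the unit circle (via circle inversion, giving $\|C\|_2^2 = R^2+1$), derive the identical system $\|C\|_2\cos\alpha = \frac{\|x\|_2^2+1}{2\|x\|_2}$, $\|C\|_2\cos(\theta-\alpha) = \frac{\|y\|_2^2+1}{2\|y\|_2}$, and read off $d_o(x\vee y)=2\,\mathrm{tanh}^{-1}(\sqrt{R^2+1}-R)$ from monotonicity of the hyperbolic distance in the Euclidean norm. The only cosmetic difference is that you obtain the projection equations from the inverse points $x/\|x\|_2^2$, $y/\|y\|_2^2$ and perpendicular bisectors, whereas the paper gets the same equations from the Pythagorean theorem combined with the inversion relation $R^2=\Delta^2-1$.
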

We provide a high-level overview of the proof and refer to~\cref{appendix:lca_construction} for a detailed derivation.
Geodesics in the Poincar\'e disk are straight lines that go through the origin or segments of circles that are orthogonal to the boundary of the disk. 
In particular, given two points, one can easily compute the coordinates of the circle that goes through the points and that is orthogonal to the boundary of the disk using circle inversions~\cite{brannan2011geometry}. 
One can then use this circle to recover the hyperbolic LCA, which is simply the point on the diameter that is closest to the origin (i.e. smallest Euclidean norm). 

\paragraph{\name{}'s objective function} 
The non-differentiable term in~\cref{eq:dasgupta_wang} is $w_{ijk}(T;w)$, which is the similarity of the pair that has the deepest LCA in $T$, i.e. the LCA that is farthest---in tree distance---from the root.
We use this qualitative interpretation to derive a continuous version of Dasgupta's cost.
Consider an embedding of $T$ with $n$ leaves, $Z=\{z_1, \ldots, z_n\}$. 
The notion of deepest LCA can be extended to continuous embeddings by looking at the continuous LCA that is farthest---in hyperbolic distance---from the origin. 
Our differentiable HC objective is then: 
\begin{equation}\label{eq:dasgupta_cont}
    \begin{split}
        C_{\mathrm{\name{}}}(Z; w,\tau) =\sum\nolimits_{ijk}&(w_{ij}+w_{ik}+w_{jk}-w_{\name{},{ijk}}(Z; w,\tau))+ 2 \sum\nolimits_{ij}w_{ij}\\
    \mathrm{where}\ w_{\name{},{ijk}}(Z; w,\tau)&=(w_{ij}, w_{ik}, w_{jk})\ \cdot\ \sigma_\tau(d_o( z_i \vee z_j), d_o( z_i \vee z_k), d_o(z_j \vee z_k))^\top,
    \end{split}
\end{equation}
and $\sigma_\tau(\cdot)$ is the scaled softmax function: $\sigma_\tau(\alpha)_i=e^{\alpha_i/\tau}/{\sum_je^{\alpha_j/\tau}}$.

\subsection{Hyperbolic decoding}\label{subsec:dec}
The output of HC needs to be a discrete binary tree, and optimizing the \name{} loss in~\cref{eq:diff} only produces leaves' embeddings.
We propose a way to decode a binary tree structure from embeddings by iteratively merging the most similar pairs based on their hyperbolic LCA distance to the origin~(\cref{alg:decoding}).
Intuitively, because $\dec(\cdot)$ uses LCA distances to the origin (and not pairwise distances), it can recover the underlying tree that is directly being optimized by \name{} (which is only defined through these LCA depths).
\begin{small}
\begin{algorithm}
\caption{Hyperbolic binary tree decoding $\dec(Z)$}
\label{alg:decoding}
\begin{algorithmic}[1]
        \State \textbf{Input:} Embeddings $Z=\{z_1, \ldots, z_n\}$; \textbf{Output:} Rooted binary tree with $n$ leaves.
        \State $F\leftarrow\ $forest $(\{i\}: i \in [n])$
        \State $S\leftarrow\{(i, j)$: pairs sorted by deepest hyperbolic LCA ($d_o(z_i \vee z_j)$)\};
        \For{$(i, j)\in S$}
            \If{$i$ and $j$ not in same tree in $F$}
            \State $r_i, r_j \leftarrow$ roots of trees containing $i$, $j$;
            \State create new node with children $r_i, r_j$;
            \EndIf
    \EndFor
        \State \textbf{return} $F$ (which is a binary tree at the end of the algorithm)
\end{algorithmic}
\end{algorithm}
\end{small}

\subsection{Approximation ratio result}\label{subsec:theory}
Assuming one can perfectly solve the constrained optimization in~\cref{eq:diff}, our main result is that the solution of the continuous optimization, once decoded, recovers the optimal tree with a constant approximation factor, which can be made arbitrarily small.
We first provide more conditions on the constrained set of embeddings in~\cref{eq:diff} by introducing a special set of \textit{spread} embeddings $\mathcal{Z}\subset\mathbb{B}_2^n$:
\begin{restatable}[]{definition}{spread}\label{def:spread}
    An embedding $Z \in \mathbb{B}_2^n$ is called \emph{spread} if for every triplet $(i, j, k)$:
    \begin{equation}
        \label{eq:spread}
        \max \{d_o(z_i \vee z_j), d_o(z_i \vee z_k), d_o(z_j \vee z_k) \} - \min \{d_o(z_i \vee z_j), d_o(z_i \vee z_k), d_o(z_j \vee z_k) \} >
        \delta \cdot O(n),
    \end{equation}
    where $\delta$ is a constant of hyperbolic space (Gromov's delta hyperbolicity, see~\cref{appendix:gromov}).
\end{restatable}
Intuitively, the spread constraints spread points apart and force hyperbolic LCAs to be distinguishable from each other.
Theoretically, we show that this induces a direct correspondence between embeddings and binary trees, i.e. the embeddings yield a metric that is close to that of a binary tree metric. 
More concretely, every spread leaf embedding $Z \in \mathcal{Z}$ is compatible with an underlying tree $T$ in the sense that our decoding algorithm is guaranteed to return the underlying tree.
This is the essence of the correspondence between spread embeddings and trees, which is the main ingredient behind~\cref{thm:main}.
\begin{restatable}[]{theorem}{apx}\label{thm:main}
Consider a dataset with $n$ datapoints and pairwise similarities $(w_{ij})$ and
let $T^*$ be the solution of~\cref{eq:opt}.
Let $\mathcal{Z}$ be the set of spread embeddings and $Z^*\in\mathcal{Z}$ be the solution of~\cref{eq:diff} for some $\tau>0$.
For any $\varepsilon>0$, if $\tau\le\mathcal{O}(1/\log (1 /\varepsilon))$, then:
\begin{equation}\label{eq:apx}
\begin{split}
     \frac{C_{\mathrm{Dasgupta}}(\dec(Z^*);w)}{C_{\mathrm{Dasgupta}}(T^*;w)} \le 1  + \varepsilon.
\end{split} 
\end{equation}
\end{restatable}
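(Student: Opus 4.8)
The plan is to sandwich the Dasgupta cost of the decoded tree between two copies of the continuous objective, exploiting the global optimality of $Z^*$. Write $\hat T=\dec(Z^*)$ and let $Z_{T^*}$ be a spread embedding that realizes $T^*$ (constructed below). If I can show that for any spread embedding the continuous cost is within an additive error of the discrete cost of its decoding, then the skeleton is
\[
C_{\mathrm{Dasgupta}}(\hat T;w) \le C_{\name{}}(Z^*;w,\tau) + E_1 \le C_{\name{}}(Z_{T^*};w,\tau) + E_1 \le C_{\mathrm{Dasgupta}}(T^*;w) + E_1 + E_2,
\]
where the first and third inequalities are the approximation bound (applied to $Z^*$ and to $Z_{T^*}$, using $\dec(Z_{T^*})=T^*$), and the middle one is the optimality of $Z^*$ over $\mathcal{Z}$. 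Dividing by $C_{\mathrm{Dasgupta}}(T^*;w)$ and showing $E_1,E_2 \le \varepsilon_0\, C_{\mathrm{Dasgupta}}(T^*;w)$ then yields the ratio $1+O(\varepsilon_0)$, which rescaling turns into $1+\varepsilon$.

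First I would make the spread-embedding--tree correspondence precise. For a spread $Z$, \cref{def:spread} forces each triplet to have a strictly dominant LCA depth, so the decoding in \cref{alg:decoding} merges the deepest pair first and, triplet by triplet, produces a consistent binary tree $T_Z$ in which the relation $\{i,j|k\}$ holds exactly when $d_o(z_i\vee z_j)$ is the largest of the three LCA depths. This identifies the hard selector $w_{ijk}(T_Z;w)$ from \cref{eq:dasgupta_wang} with the similarity of the $\arg\max$ pair, which is precisely the coordinate on which the softmax $\sigma_\tau$ in \cref{eq:dasgupta_cont} concentrates as $\tau\to 0$.

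The heart of the argument is the softmax approximation bound. Since the costs in \cref{eq:dasgupta_wang} and \cref{eq:dasgupta_cont} share the term $\sum_{ijk}(w_{ij}+w_{ik}+w_{jk})+2\sum_{ij}w_{ij}$, their difference is exactly $\sum_{ijk}\bigl(w_{ijk}(T_Z;w)-w_{\name{},ijk}(Z;w,\tau)\bigr)$. Per triplet this equals $\sum_{\mathrm{pair}}(w_{\mathrm{deep}}-w_{\mathrm{pair}})\,\sigma_\tau(\cdot)_{\mathrm{pair}}$ over the two non-dominant pairs, which I bound by $2 w_{\max}\,e^{-g/\tau}$, where $g$ is the gap between the deepest and the second-deepest LCA depth. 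Here spreadness does the work: in a tree metric the two shallow LCA depths coincide, so the $\max$--$\min$ gap of \cref{def:spread} equals the $\max$--second-$\max$ gap, forcing $g\ge\delta\cdot O(n)$ robustly. Summing over $O(n^3)$ triplets gives $|C_{\name{}}(Z;w,\tau)-C_{\mathrm{Dasgupta}}(T_Z;w)|\le O(n^3)\,w_{\max}\,e^{-g/\tau}$, and choosing $\tau\le\mathcal{O}(1/\log(1/\varepsilon))$ (absorbing the polynomial factor and the lower bound on $C_{\mathrm{Dasgupta}}(T^*;w)$ into the constant) makes this at most $\varepsilon_0\,C_{\mathrm{Dasgupta}}(T^*;w)$, bounding both $E_1$ and $E_2$. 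For realizability, I would invoke Sarkar's construction~\cite{sarkar2011low} to embed $T^*$ with low distortion and then push the leaves toward the boundary of $\mathbb{B}_2$ so that every LCA depth separation exceeds $\delta\cdot O(n)$; this produces a spread $Z_{T^*}\in\mathcal{Z}$ that decodes back to $T^*$.

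I expect the softmax-concentration step to be the main obstacle: one must argue rigorously that the $\max$--$\min$ form of \cref{def:spread} genuinely controls the $\max$--second-$\max$ gap that governs the softmax error (this is exactly where the embedding must be forced close enough to a genuine tree metric), and then track how the additive $O(n^3)w_{\max}e^{-g/\tau}$ error compares multiplicatively against $C_{\mathrm{Dasgupta}}(T^*;w)$ to extract the precise temperature scaling $\tau\le\mathcal{O}(1/\log(1/\varepsilon))$. A secondary subtlety is verifying that the pushed-to-the-boundary embedding of $T^*$ remains spread while decoding back to $T^*$ exactly, so that the tree$\leftrightarrow$embedding correspondence used at both ends of the chain is airtight.
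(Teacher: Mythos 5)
Your proposal follows essentially the same route as the paper's proof: the same sandwich argument exploiting optimality of $Z^*$ over the spread set, the same two key lemmas (any spread embedding has continuous cost within $O(e^{-1/\tau})\sum_{ijk}\max\{|w_{ij}|,|w_{ik}|,|w_{jk}|\}$ of the Dasgupta cost of its decoded tree, and any binary tree is realized by a spread embedding built from Sarkar's construction with scaled-up edge lengths), the same per-triplet softmax concentration bound, and the same use of Gromov $\delta$-hyperbolicity (tree-likeness of $\mathbb{B}_2$) to convert the max-versus-min spread gap into the max-versus-second-max gap that the softmax error actually needs. The two obstacles you flag at the end are precisely the paper's LCA-agreement and encoding lemmas, and the paper resolves them exactly as you sketch.
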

The proof is detailed in~\cref{appendix:main_result}. 
The insight is that when embeddings are sufficiently spread out, there is an equivalence between leaf embeddings and trees.
In one direction, we show that \textit{any} spread embedding $Z$ has a continuous cost $C_{\mathrm{\name{}}}(Z;w,\tau)$ close to the discrete cost $C_{\mathrm{Dasgupta}}(T;w)$ of some underlying tree $T$.
Conversely, we can leverage classical hyperbolic embedding results~\cite{sarkar2011low} to show that Dasgupta's optimum $T^*$ can be embedded as a spread embedding in $\mathcal{Z}$.

\paragraph{Optimization challenges} 
Note that smaller $\tau$ values in~\cref{eq:apx} lead to better approximation guarantees but make the optimization more difficult.
Indeed, hardness in the discrete problem arises from the difficulty in enumerating all possible solutions with discrete exhaustive search;
on the other hand, in this continuous setting it is easier to specify the solution, which trades off for challenges arising from nonconvex optimization---i.e. how do we find the global minimizer of the constrained optimization problem in~\cref{eq:diff}.
In our experiments, we use a small constant for $\tau$ following standard practice.
To optimize over the set of spread embeddings, one could add triplet constraints on the embeddings to the objective function. 
We empirically find that even solutions of the \textit{unconstrained} problem found with stochastic gradient descent have a good approximation ratio for the discrete objective, compared to baseline methods~(\cref{subsec:exp_quality}).

\section{\name{} Practical Considerations}\label{sec:scale}
Here, we propose two empirical techniques to reduce the runtime complexity of \name{} and improve its scalability, namely triplet sampling and a greedy decoding algorithm.
Note that while our theoretical guarantees in~\cref{thm:main} are valid under the assumptions that we use the exact decoding and compute the exact loss, we empirically validate that \name{}, combined with these two techniques, still produces a clustering of good quality in terms of Dasgupta's cost~(\cref{subsec:analysis}). 

\paragraph{Greedy decoding} 
\cref{alg:decoding} requires solving the closest pair problem, for which an almost-quadratic $n^{2-o(1)}$ lower bound holds under plausible complexity-theoretic assumptions~\cite{alman2015probabilistic,cs2018closest}.
Instead, we propose a greedy top-down decoding that runs significantly faster. 
We normalize leaves' embeddings so that they lie on a hyperbolic diameter and sort them based on their angles in the Poincar\'e disk.
We then recursively split the data into subsets using the largest angle split (\cref{fig:angle_split} in~\cref{appendix:greedy}). 
Intuitively, if points are normalized, the LCA distance to the origin is a monotonic function of the angle between points. 
Therefore, this top-down decoding acts as a proxy for the bottom-up decoding in~\cref{alg:decoding}.
In the worst case, this algorithm runs in quadratic time.  
However, if the repeated data splits are roughly evenly sized, it can be asymptotically faster, i.e., $O(n \log n)$. 
In our experiments, we observe that this algorithm is significantly faster than the exact decoding.

\paragraph{Triplet sampling}
Computing the loss term in~\cref{eq:dasgupta_cont} requires summing over all triplets which takes $O(n^3)$ time.
Instead, we generate all unique pairs of leaves and sample a third leaf uniformly at random from all the other leaves, which yields roughly $n^2$ triplets.  
Note that an important benefit of \name{} is that no matter how big the input is, \name{} can always produce a clustering by sampling fewer triplets. 
We view this as an interesting opportunity to scale to large datasets and discuss the scalability-clustering quality tradeoff of \name{} in~\cref{subsec:analysis}.
Finally, note that triplet sampling can be made parallel, unlike agglomerative algorithms, even if both take $\mathcal{O}(n^2)$ time.

\section{Experiments}
\begin{table}
    \resizebox{\textwidth}{!}{\renewcommand{\arraystretch}{0.95}
  \centering
  \begin{tabular}{llccccccc}
    \toprule
    & & \multicolumn{1}{c}{\textbf{Zoo}} & \multicolumn{1}{c}{\textbf{Iris}}
    &  \multicolumn{1}{c}{\textbf{Glass}} & \textbf{Segmentation} &  \multicolumn{1}{c}{\textbf{Spambase}} & \multicolumn{1}{c}{\textbf{Letter}} & \textbf{CIFAR-100} 
    \\
    \cmidrule(r){2-9}
     & \# Points & \multicolumn{1}{c}{101} & \multicolumn{1}{c}{150} 
     & \multicolumn{1}{c}{214} & 2310 & \multicolumn{1}{c}{4601} & 20K 
     & 50K 
     \\
    \cmidrule(r){2-9}
      & Cost & DC.$10^{-5}$ & DC.$10^{-5}$ 
      & DC.$10^{-6}$ & DC.$10^{-9}$ & DC.$10^{-10}$ & DC.$10^{-12}$  
      & DC.$10^{-13}$ \\
    \cmidrule(r){2-9}
     & Upper Bound & 3.887 & 14.12 
     & 3.959 
     & 4.839 
     & 3.495 
     & 3.031 
     & 4.408
     \\
     & Lower Bound & 2.750 & 7.709 
     & 2.750 
     & 3.258 
     & 3.025 
     & 2.244 
     & 3.990
     \\
    \midrule
    \multirow{5}{*}{Discrete} & SL & 2.897 & 8.120 
    & 3.018 & 3.705 & 3.250 & 2.625 & 4.149 \\
    & AL & 2.829 & 7.939 
    & \underline{2.906} & 3.408 & {3.159} & 2.437  & \underline{4.056} \\
    & CL & \textbf{2.802} & {7.950} 
    & 2.939 & 3.460 & 3.184 & 2.481 & 4.078 \\
    & WL & \underline{2.827} & {7.938} 
    & {2.920} & 3.434 & 3.170 & 2.453 & 4.060 \\
    \cmidrule(r){2-9} 
    & BKM & 2.861 & 8.223 
    & 2.948 & \underline{3.375} & \underline{3.127} & \textbf{2.383}
    & \underline{4.056}
    \\
    \midrule
    \multirow{2}{*}{Continuous} & UFit & 2.896 & \underline{7.916} 
    & 2.925 & 3.560 & 3.204 & - & - \\
    \cmidrule(r){2-9}
    & \name{} & \textbf{2.802} & \textbf{7.881} 
    & \textbf{2.902} & \textbf{3.341} & \textbf{3.126} & \underline{2.384} &
    \textbf{4.053} \\
    \bottomrule
    \end{tabular}}
    \vspace{5pt}
    \caption{Clustering quality results measured in discrete Dasgupta's cost (DC). 
    Best score in bold, second best underlined. Dashes indicate that the method could not scale to large datasets.}\label{table:small_exp}
\end{table}
We first describe our experimental protocol~(\cref{subsec:exp_setup}) and evaluate the \textbf{clustering quality} of \name{} in terms of discrete Dasgupta cost~(\cref{subsec:exp_quality}).
We then analyze the clustering \textbf{quality/speed tradeoffs} of \name{}~(\cref{subsec:analysis}).
Finally, we demonstrate the \textbf{flexibility} of \name{} using end-to-end training for a downstream classification task~(\cref{subsec:flexibility}).

\subsection{Experimental setup}\label{subsec:exp_setup}
We describe our experimental setup and refer to~\cref{subsec:appendix_exp_details} for more details.  
\paragraph{Datasets}
We measure the clustering quality of \name{} on six standard datasets from the UCI Machine Learning repository,\footnote{\url{https://archive.ics.uci.edu/ml/datasets.php}} as well as CIFAR-100~\cite{krizhevsky2009learning}, which exhibits a hierarchical structure (each image belongs to a fine-grained class that is itself part of a coarse superclass). 
Note that the setting studied in this work is similarity-based HC, where the input is only pairwise similarities, rather than features representing the datapoints.
For all datasets, we use the cosine similarity to compute a complete input similarity graph.

\paragraph{Baselines}
We compare \name{} to similarity-based HC methods, including competitive agglomerative clustering approaches such as single, average, complete and Ward Linkage (SL, AL, CL and WL respectively). 
We also compare to Bisecting K-Means (BKM)~\cite{moseley2017approximation}, which is a fast top-down algorithm that splits the data into two clusters at every iteration using local search.\footnote{BKM is the direct analog of Hierarchical K-Means in the context of similarity-based HC~\cite{moseley2017approximation}.}
Finally, we compare to the recent gradient-based Ultrametric Fitting (UFit) approach~\cite{chierchia2019ultrametric}.\footnote{Note that we do not directly compare to gHHC~\cite{monath2019gradient} since this method requires input features. For completeness, we include a comparison in the~\cref{appendix:ghhc_comparison}.}

\paragraph{Evaluation metrics} 
Our goal in this work is not to show an advantage on different heuristics, but rather to optimize a single well-defined search problem to the best of our abilities.
We therefore measure the clustering quality by computing the discrete Dasgupta Cost (DC).
A lower DC means a better clustering quality.
We also report upper and lower bounds for DC (defined in the~\cref{subsec:appendix_exp_details}).
For classification experiments~(\cref{subsec:flexibility}) where the goal is to predict labels, we measure the classification accuracy.

\paragraph{Training procedure}
We train \name{} for 50 epochs (of the sampled triples) and optimize embeddings with Riemannian Adam~\cite{becigneul2018riemannian}.
We set the embedding dimension to two in all experiments, and normalize embeddings during optimization as described in the greedy decoding strategy~(\cref{sec:scale}). 
We perform a hyper-parameter search over learning rate values $[1e^{-3}, 5e^{-4}, 1e^{-4}]$ and temperature values $[1e^{-1}, 5e^{-2}, 1e^{-2}]$. 
\paragraph{Implementation}
We implemented HypHC in PyTorch and make our implementation publicly available.\footnote{\url{https://github.com/HazyResearch/HypHC}} 
To optimize the \name{} loss in~\cref{eq:dasgupta_cont}, we used the open-source Riemannian optimization software geoopt~\cite{kochurov2020geoopt}.
We conducted our experiments on a single NVIDIA Tesla P100 GPU. 

\subsection{Clustering quality}\label{subsec:exp_quality}
We report the performance of \name{}---fast version with greedy decoding and triplet sampling---in~\cref{table:small_exp}, and compare to baseline methods.
On all datasets, \name{} outperforms or matches the performance of the best discrete method, and significantly improves over UFit, the only similarity-based continuous method.
This confirms our intuition that directly optimizing a continuous relaxation of Dasgupta's objective can improve clustering quality, compared to baselines that are optimized independently of the objective.

We visualize embeddings during different iterations of \name{} on the zoo dataset in~\cref{fig:scale}. 
Colors indicate ground truth flat clusters for the datapoints and we observe that these are better separated in the dendrogram as optimization progresses. 
We note that embeddings are pushed towards the boundary of the disk, where the hyperbolic distances are more ``tree-like'' and produce a better HC. 
This illustrates our intuition that the optimal embedding will be close to a tree metric embedding.

\begin{figure}
\centering
\begin{subfigure}[b]{0.44\textwidth}
\centering
    \includegraphics[width=\textwidth]{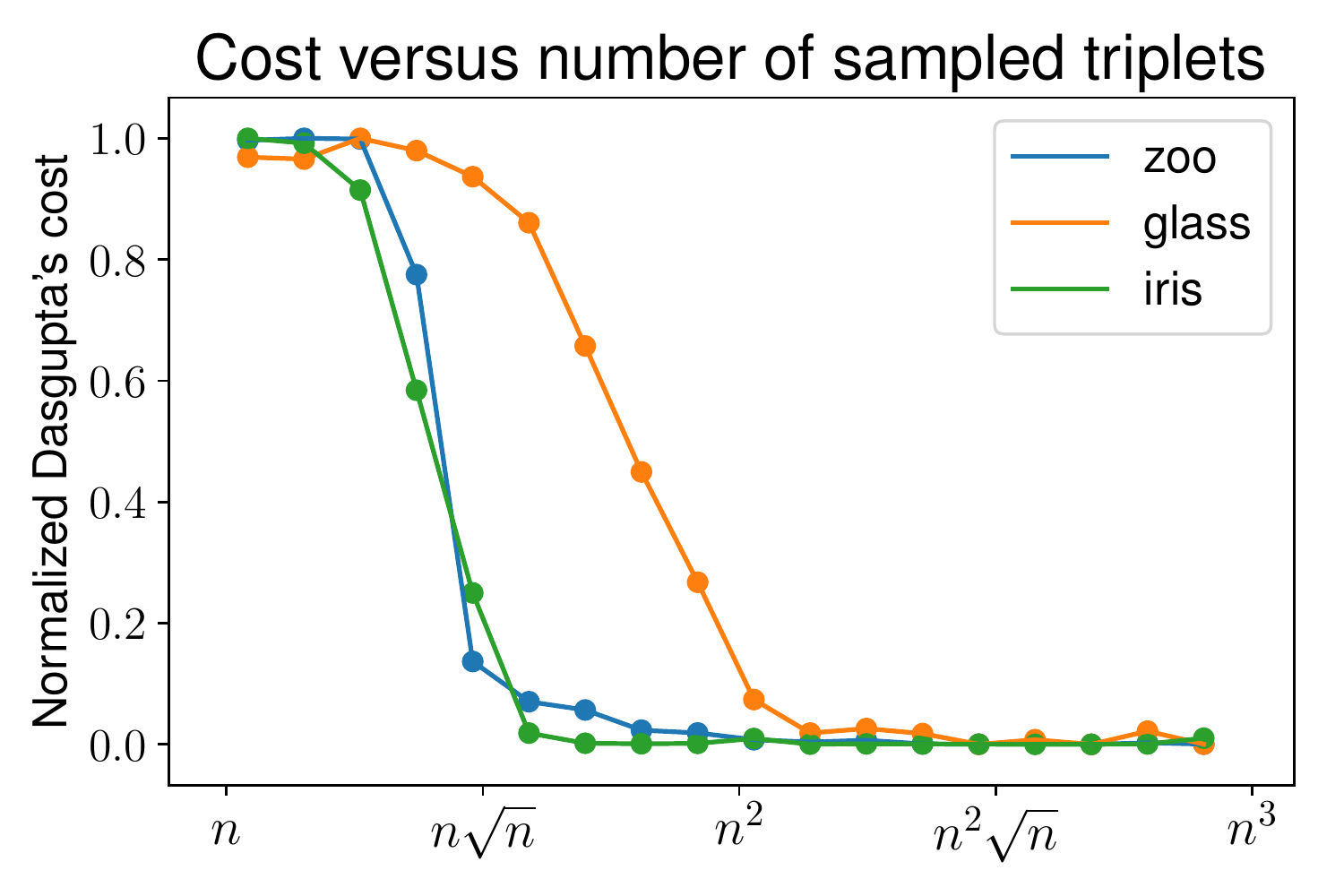}
    \caption{Normalized cost vs number of triplets.}\label{fig:triplet}
\end{subfigure}
\begin{subfigure}[b]{0.55\textwidth}
    \centering
    \resizebox{\textwidth}{!}{\renewcommand{\arraystretch}{0.95}
    \begin{tabular}{llcccc}
    \toprule
    & & \textbf{Zoo} & \textbf{Iris} &\textbf{Glass} & \textbf{Segmentation} \\
    \cmidrule(r){2-6}
    \multirow{2}{*}{Cost} & {Exact} & \textbf{2.8016}.$10^{-5}$ & \textbf{7.8783}.$10^{-5}$  &   \textbf{2.9011}.$10^{-6}$ & \textbf{3.3417}.$10^{-9}$ \\
    & {Greedy} & 
    2.8021.$10^{-5}$ & 
    7.8807.$10^{-5}$ & 
    2.9019.$10^{-6}$ &
    3.3419.$10^{-9}$ \\
    \midrule
    \multirow{3}{*}{Time (ms)} & {Exact} & 78 & 323 & 400 & 21717\\
    & {Greedy} & 3.1 & 5.2 & 6.2 & 303 \\
    \cmidrule(r){2-6}
    & {Speedup} & 25x & 62x & 64x & 72x \\
    \bottomrule
    \end{tabular}
    }
    \caption{Speed/quality analysis for the exact and greedy decoding.}\label{fig:decoding}
    \vspace{10pt}
    \resizebox{\textwidth}{!}{\renewcommand{\arraystretch}{0.95}
    \begin{tabular}{lcccc}
    \toprule
     & \textbf{Zoo} & \textbf{Iris} &\textbf{Glass} & \textbf{Segmentation} \\
     \cmidrule{2-5}
     \# Classes & 7 & 3 & 6 & 7\\
    \midrule
    LP & 41.4 & 76.7 & 46.8 & 65.3 \\
    \name{}-Two-Step & 84.8 $\pm$ 3.5 
    & 84.4 $\pm$ 1.7 
    & 50.6 $\pm$ 2.6 
    & 64.1 $\pm$ 0.9 
    \\
    \name{}-End-to-End & 
    \textbf{87.9} $\pm$ 3.8 
    & \textbf{85.6} $\pm$ 0.8 
    & \textbf{54.4} $\pm$ 2.9 
    & \textbf{67.7} $\pm$ 3.4
    \\
    \bottomrule
    \end{tabular}}
    \caption{Classification accuracy for different training strategies.}\label{table:mtl}
    \end{subfigure}
    \caption{(a): Triplet sampling analysis. (b): Downstream classification task. (c): Decoding analysis.}
\end{figure}
\subsection{Analysis}\label{subsec:analysis}
In our experiments, we used the greedy decoding Algorithm and triplet sampling with $\mathcal{O}(n^2)$ triplets~(\cref{sec:scale}). 
Since our theoretical guarantees apply to the full triplet loss and exact decoding, we are interested in understanding how much quality is lost by using these two empirical techniques. 

\paragraph{Decoding} 
We report HC costs obtained using the exact~(\cref{subsec:dec}) and the greedy~(\cref{sec:scale}) decoding in~\cref{fig:decoding}, as well as the corresponding runtime in milliseconds (ms), averaged over 10 runs. 
We observe that the greedy decoding is approximately 60 times faster than the exact decoding, while still achieving almost the same cost.
This confirms that, when embeddings are normalized in two dimensions, using angles as a proxy for the LCA's distances to the origin is a valid approach. 

\paragraph{Triplet sampling}
We plot the discrete cost (normalized to be in $[0, 1]$) of \name{} for different number of sampled triplets in~\cref{fig:triplet}.
As expected, we note that increasing the number of sampled triplets reduces DC, and therefore improves the clustering quality.
This tradeoff forms one of the key benefits of using gradient-based methods: \name{} can always produce a hierarchical clustering, no matter how large the input is, at the cost of reducing clustering quality or increasing runtime. 
In our experiments, we find that $O(n^2)$ is sufficient to achieve good clustering quality, which is the most expensive step of \name{} in terms of runtime complexity.
Future work could explore better triplet sampling strategies, to potentially use less triplets while still achieving a good clustering quality. 

\subsection{End-to-end training}\label{subsec:flexibility}
Here, we demonstrate the flexibility of our approach and the benefits of joint training on a downstream similarity-based classification task.
Since \name{} is optimized with gradient-descent, it can be used in conjunction with any standard ML pipeline, such as downstream classification.
We consider four of the HC datasets that come with categorical labels for leaf nodes, split into training, testing and validation sets (30/60/10\% splits).
We follow a standard graph-based semi-supervised learning setting, where all the nodes (trainining/validation/testing) are available at training time, but only training labels can be used to train the models.
We use the embeddings learned by \name{} as input features for a hyperbolic logistic regression model~\cite{ganea2018hyperbolic}.
Note that none of the other discrete HC methods apply here, since these do not produce representations.
In~\cref{table:mtl}, we compare jointly optimizing the \name{} and the classification loss (End-to-End) versus a two-step embed-then-classify approach which trains the classification module using freezed \name{} embeddings (Two-Step) (average scores and standard deviation computed over 5 runs).
We also compare to Label Propagation (LP), which is a simple graph-based semi-supervised algorithm that does not perform any clustering step. 
We find that LP is outperformed by both the end-to-end and the two-step approaches on most datasets, suggesting that clustering learns meaningful partitions of the input similarity graph.
Further, we observe that end-to-end training improves classification accuracy by up to 3.8\%, confirming the benefits of a differentiable method for HC.

\section{Conclusion}
We introduced \name{}, a differentiable approach to learn HC with gradient-descent in hyperbolic space.
\name{} uses a novel technical approach to optimize over discrete trees, by showing an equivalence between trees and constrained hyperbolic embeddings.
Theoretically, \name{} has a $(1+\varepsilon)$-factor approximation to the minimizer of Dasgupta's cost, and empirically, \name{} outperforms existing HC algorithms. 
While our theoretical analysis assumes a perfect optimization, interesting future directions include a better characterization of the hardness arising from optimization challenges, as well as providing an approximation for the continuous optimum.
While no constant factor approximation of the continuous optimum is possible, achieving better (e.g. polylogarithmic) approximations, is an interesting future direction for this work. 
Finally, we note that our continuous optimization framework can be extended beyond the scope of HC, to applications that require searching of discrete tree structures, such as constituency parsing in natural language processing. 

\section*{Broader Impact}
Clustering is arguably one of the most commonly used tools in computer science applications. 
Here, we study a variation where the goal is to output a hierarchy over clusters, as data often contain hierarchical structures. 
We believe our approach based on triplet sampling and optimization should not raise any ethical considerations, to the extent that the input data for our algorithm is unbiased. 
Of course, bias in data is by itself another challenging problem, as biases can lead to unfair clustering and discriminatory decisions for different datapoints. 
However here we study a downstream application, \emph{after} data has been collected. 
As such, we hope only a positive impact can emerge from our work, by more faithfully finding hierarchies in biological, financial, or network data, as these are only some of the applications that we listed in the introduction.

\section*{Acknowledgments}
We gratefully acknowledge the support of DARPA under Nos. FA86501827865 (SDH) and FA86501827882 (ASED); NIH under No. U54EB020405 (Mobilize), NSF under Nos. CCF1763315 (Beyond Sparsity), CCF1563078 (Volume to Velocity), and 1937301 (RTML); ONR under No. N000141712266 (Unifying Weak Supervision); the Moore Foundation, NXP, Xilinx, LETI-CEA, Intel, IBM, Microsoft, NEC, Toshiba, TSMC, ARM, Hitachi, BASF, Accenture, Ericsson, Qualcomm, Analog Devices, the Okawa Foundation, American Family Insurance, Google Cloud, Swiss Re, the HAI-AWS Cloud Credits for Research program, TOTAL, and members of the Stanford DAWN project: Teradata, Facebook, Google, Ant Financial, NEC, VMWare, and Infosys. The U.S. Government is authorized to reproduce and distribute reprints for Governmental purposes notwithstanding any copyright notation thereon. Any opinions, findings, and conclusions or recommendations expressed in this material are those of the authors and do not necessarily reflect the views, policies, or endorsements, either expressed or implied, of DARPA, NIH, ONR, or the U.S. Government.

\small
\bibliographystyle{plain}
\bibliography{ref}

\newpage
\normalsize
\appendix


\section{Preliminaries}\label{appendix:preliminaries}
We first introduce our notational conventions in~\cref{appendix:notations} and define the notion of hyperbolicity in~\cref{appendix:gromov}.
In~\cref{appendix:tree_like}, we review some useful results of hyperbolic metrics that will be used throughout our proofs.
\subsection{Notations}\label{appendix:notations}
\paragraph{Binary trees}
A \emph{binary tree} is one that has all degrees either $1$ (a leaf node) or $3$ (internal node).\footnote{We use undirected trees in our proofs.}
A binary tree with $n$ leaves has exactly $n-1$ internal nodes (of degree $3$).
A \textit{rooted binary tree} is a binary tree such that one of its internal nodes (i.e. non-leaf node), the root, has degree exactly $2$.
Note that if we think of one leaf in a binary tree as the root, then removing it (and letting its unique neighbor be the new root) converts this into a \textit{rooted binary tree} in the traditional sense (See~\cref{fig:unrooted_tree} and~\cref{fig:rooted_tree} for examples of rooted and unrooted binary trees).
By convention, we let $T$ denote any binary tree with leaves $1, \dots, n$ and root $0$ if its rooted.

\paragraph{Tree and hyperbolic metrics}
With $i, j, k$, we refer to nodes in  $T$, and $z_i, z_j, z_k$ correspond to points in a hyperbolic embedding $Z=\{z_1, \dots, z_n\}\subset\mathbb{B}_2^n$.
With $o$ or $z_0$, we denote the origin of hyperbolic space and $d_{\mathbb{B}_2}$ is the hyperbolic metric; every tree $T$ defines a tree metric $d_T$. We overload $d(i, j) := d_T(i, j)$ and $d(z_i, z_j) := d_{\mathbb{B}_2}(z_i, z_j)$ when the types are clear.
There is a close correspondence between hyperbolic and tree metrics, and we define the notion of quasi-isometries, which we will use throughout our proofs.
\begin{definition}[Quasi-isometric embedding]\label{defn:quasi}
    \label{def:quasi}
    Let $T$ be a binary tree on $(n+1)$ leaves and $Z=\{z_0,\ldots,z_n\}$ an embedding set.
    The pair $(Z, T)$ is $(1+\varepsilon, \kappa)$-quasi-isometric if:
    \begin{align*}
        \label{eq:quasi}
        d(i, j) \le d(z_i, z_j) \le (1+\varepsilon) d(i, j) + \kappa,
    \end{align*}
     for all $0 \le i, j \le n$.%
\end{definition}
\begin{remark}
Note that \cref{def:quasi} is equivalent to saying there is a quasi-isometric embedding from $\{z_0, \dots, z_n\}$ to the root and leaves of $T$, i.e. the two metrics agree up to a linear transform.
Compared to the usual definition of quasi-isometry, we consider only a one-sided version for convenience in the proofs.
\end{remark}

\paragraph{Node depth} We overload $d_0(i) = d_T(0, i)$ to refer to the distance from node $i$ to the root in $T$, and $d_o(z_i) = d_{\mathbb{B}_2}(z_0, z_i)$ to be the distance to the origin in $\mathbb{B}_2$.
Intuitively, $d_0(\cdot)$ and $d_o(\cdot)$ represent the ``depth'' of a node or point.
Note that the depths are dependent on the choice of a base point as the root of a tree or origin of the space. 
We always use $o$ for $\mathbb{B}_2$ and the node indexed by $0$ for trees.

\paragraph{LCA}
We let $i\vee j$ denote the LCA of two leaf nodes in $T$, and $z_i\vee z_j$ denote the hyperbolic LCA defined in~\cref{eq:hyp_lca}.
In particular, we say that $\{i, j|k\}_T$ holds if the LCA of $(i, j)$ has a larger depth than that of $(i, k)$ and $(j, k)$, i.e. $d_0(i\vee j)\ge\mathrm{max}\{d_0(i\vee k), d_0(j\vee k)\}$. 
Similarly, we say that $\{z_i, z_j|z_k\}_{\mathbb{B}_2}$ holds if $d_o(z_i\vee z_j)\ge\mathrm{max}\{d_o(z_i\vee z_k), d_o(z_j\vee z_k)\}$.
We overload $\{i, j|k\}_T=\{i, j|k\}$ and $\{z_i, z_j|z_k\}_{\mathbb{B}_2}=\{z_i, z_j|z_k\}$ when the types are clear.

\subsection{Gromov's delta hyperbolicity}\label{appendix:gromov}
We define the Gromov~\cite{gromov1987hyperbolic} product which can be used to define $\delta$-hyperbolic spaces.
\begin{definition}[Gromov product]%
    \label{def:gromov}
    In any metric space $(X, d)$, the Gromov product of points $x, y\in X$ with respect to a third point $z\in X$ is:
    \[
        \langle x, y \rangle_z = \frac{1}{2}\left( d(x, z) + d(z, y) - d(x, y) \right).
    \]
\end{definition}
When the base point $z$ is taken to be the origin of $\mathbb{B}_2$ or the root of a tree, we shorten this to $\langle x, y \rangle$ unambiguously.

A key characterization of hyperbolic spaces is the notion of $\delta$-hyperbolicity.
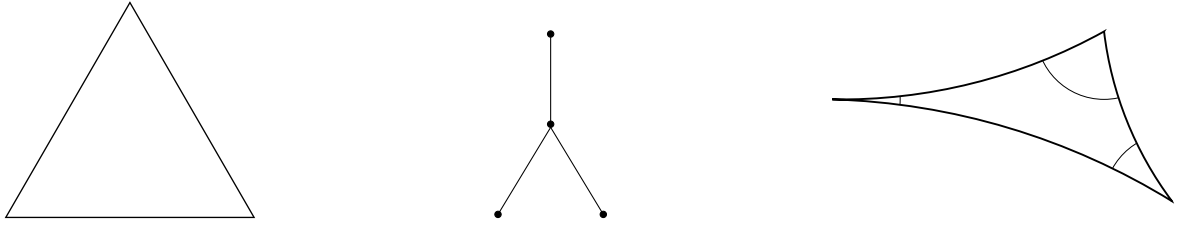
\begin{figure}[t]
    \begin{subfigure}[b]{0.3\textwidth}
    \begin{center}
    \resizebox{0.7\textwidth}{!}{\renewcommand{\arraystretch}{0.95}
    \begin{tikzpicture}
    \node[draw, minimum size=3cm,regular polygon,regular polygon sides=3] (a) {};
    \end{tikzpicture}}    
    \end{center}
    \caption{Euclidean triangle (not $\delta$-slim).}
    \end{subfigure}
    \hfill
    \begin{subfigure}[b]{0.3\textwidth}
    \begin{center}
        \begin{istgame}[font=\footnotesize]
    \xtdistance{12mm}{16mm}
    \istroot(0){}
      \istb{}[l]
      \endist
    \xtdistance{12mm}{14mm}
    \istroot(1)(0-1)<135>{}
      \istb{}[l]
      \istb{}[r]
      \endist
    \xtdistance{12mm}{14mm}
    \istroot(2)(1-1)<45>{}
      \endist
    \xtdistance{12mm}{14mm}
    \istroot(3)(1-2)<45>{}
     \endist
    \end{istgame}
    \end{center}
    \caption{$0$-slim triangle in a tree.}
    \end{subfigure}
    \hfill
    \begin{subfigure}[b]{0.3\textwidth}
    \begin{center}
    \resizebox{\textwidth}{!}{\renewcommand{\arraystretch}{0.95}
        \begin{tikzpicture}[font=\sffamily]
     \path (0,0) coordinate (A) (4,1) coordinate (B) (5,-1.5) coordinate (C);
     \draw[thick,path picture={
     \foreach \X in {A,B,C}
     {\draw[line width=0.4pt] (\X) circle (1);}}]
     let \p1=($(B)-(A)$),\p2=($(C)-(B)$),\p3=($(C)-(A)$),
     \n1={atan2(\y1,\x1)},\n2={atan2(\y2,\x2)},\n3={atan2(\y3,\x3)},
     \n4={veclen(\y1,\x1)},\n5={veclen(\y2,\x2)},\n6={veclen(\y3,\x3)} in
     (A) node[left]{}  arc(-90-15+\n1:-90+15+\n1:{\n4/(2*sin(15))})
     --(B) node[above right]{} 
      arc(-90-15+\n2:-90+15+\n2:{\n5/(2*sin(15))})
     --(C) node[below]{} 
     arc(90-15+\n3:90+15+\n3:{\n6/(2*sin(15))}) -- cycle;
     \node at (barycentric cs:A=1,B=1,C=1) {};
    \end{tikzpicture}}
    \end{center}
    \caption{$\delta$-slim triangle.}
    \end{subfigure}
    \caption{Illustration of the notion of $\delta$-slim triangles.}
    \label{fig:delta_thin}
\end{figure}
\begin{definition}[$\delta$-hyperbolicity, four-point condition]%
    \label{def:hyperbolicity}
    A metric space $(X, d)$ is $\delta$-hyperbolic if there exists $\delta\ge 0$ such that for all $w, x, y, z$ in $X$:
\begin{align*}
    \langle x, y\rangle_z\ge\mathrm{min}\{\langle x, w\rangle_z, \langle w, y\rangle_z\}-\delta.
\end{align*}
\end{definition}
\begin{example}
    The hyperbolic space $\mathbb{B}_2$ is $\log 3$-hyperbolic.
\end{example}
\begin{example}
    Metric trees are $0$-hyperbolic. 
\end{example}
\begin{example}
    The Euclidean space $\mathbb{R}^n$ is not $\delta$-hyperbolic. 
\end{example}
Up to changing $\delta$ by a constant multiple, there are many equivalent variations of the notion of $\delta$-hyperbolicity.
In particular, one intuitive interpretation of $\delta$-hyperbolic spaces is using the notion of $\delta$-slim triangles, which says that any triangle in a $\delta$-hyperbolic space has distance from any side to the other two less than $\delta$.
This is not true in Euclidean space, for instance the midpoint of a large isosceles triangle might be far from the other two sides~(\cref{fig:delta_thin}). 

\subsection{Tree-likeness of the hyperbolic space}\label{appendix:tree_like}
An important result in the theory of hyperbolic metric spaces is their tree-likeness.
We review two useful results of $\delta$-hyperbolic metrics. 
First, our notion of LCA depth is closely related to the Gromov product, which is exactly the tree depth for $0$-hyperbolic metrics~(\cref{lmm:lca-gromov}).
Second, any finite set of points in $\mathbb{B}_2$ can be embedded in a binary tree~(\cref{prop:treelike}):
\begin{lemma}[LCA depth is close to the Gromov product]%
    \label{lmm:lca-gromov}
    For any $i, j \in T$,
    \begin{align*}
        d_0(i \vee j) = \langle i, j \rangle.
    \end{align*}
    There exists $\delta>0$ such that for any $z_i, z_j \in \mathbb{B}_2$,
    \begin{align*}
        \langle z_i, z_j \rangle \le d_o(z_i \vee z_j) \le \langle z_i, z_j \rangle + \delta.
    \end{align*}
\end{lemma}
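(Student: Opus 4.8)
The plan is to handle the two claims separately: the exact tree identity by a direct metric computation, and the hyperbolic sandwich bound by combining the triangle inequality (for the lower bound) with $\delta$-hyperbolicity (for the upper bound). Throughout I take the base point in the Gromov product (\cref{def:gromov}) to be the root $0$ for trees and the origin $o$ for $\mathbb{B}_2$, as the statement intends.

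For the tree identity, I would work directly from the definition of the Gromov product with base point $0$. The key structural fact is that in a tree the path $i \leadsto j$ passes through the LCA $i\vee j$, and $i\vee j$ lies on both root-paths $0 \leadsto i$ and $0 \leadsto j$. Hence $d(0,i)=d_0(i\vee j)+d(i\vee j,i)$, $d(0,j)=d_0(i\vee j)+d(i\vee j,j)$, and $d(i,j)=d(i\vee j,i)+d(i\vee j,j)$. Substituting into $\langle i,j\rangle=\tfrac12\left(d(0,i)+d(0,j)-d(i,j)\right)$, the two branch lengths $d(i\vee j,i)$ and $d(i\vee j,j)$ cancel and we are left with exactly $d_0(i\vee j)$, giving the identity.

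For the hyperbolic lower bound $\langle z_i,z_j\rangle \le d_o(z_i\vee z_j)$, I would use only the triangle inequality. Writing $m := z_i\vee z_j$ for the point of $z_i \leadsto z_j$ closest to $o$, the triangle inequality gives $d(o,z_i)\le d(o,m)+d(m,z_i)$ and $d(o,z_j)\le d(o,m)+d(m,z_j)$. Since $m$ lies on the geodesic between $z_i$ and $z_j$, we have $d(m,z_i)+d(m,z_j)=d(z_i,z_j)$; adding the two inequalities and rearranging yields $\langle z_i,z_j\rangle \le d(o,m)=d_o(z_i\vee z_j)$. Note that this direction holds in any geodesic metric space and does not use negative curvature.

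The upper bound $d_o(z_i\vee z_j)\le\langle z_i,z_j\rangle+\delta$ is where $\delta$-hyperbolicity is essential (indeed it fails in Euclidean space). I would invoke the standard thin/slim-triangle structure of $\delta$-hyperbolic spaces (\cref{def:hyperbolicity}) applied to the geodesic triangle with vertices $o,z_i,z_j$. The internal point on the side $o \leadsto z_i$ sits at distance exactly $\langle z_i,z_j\rangle$ from $o$, and by $\delta$-thinness the internal point $q$ on the opposite side $z_i \leadsto z_j$ is within a bounded multiple of the hyperbolicity constant of it, so $d(o,q)\le\langle z_i,z_j\rangle+O(\delta)$. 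Because $z_i\vee z_j$ is by definition the closest point of $z_i \leadsto z_j$ to $o$, we get $d_o(z_i\vee z_j)\le d(o,q)\le\langle z_i,z_j\rangle+O(\delta)$; absorbing the universal constant into $\delta$ (which is finite since $\mathbb{B}_2$ is $\log 3$-hyperbolic) completes the proof. The main obstacle is precisely this last step: one must pin down the constant relating the distance-to-geodesic to the Gromov product, which requires care about which of the equivalent formulations of $\delta$-hyperbolicity is used and the invocation of the comparison-tripod/thin-triangles lemma.
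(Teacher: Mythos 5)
Your proof is correct, and in substance it is a self-contained reproof of what the paper handles purely by citation: the paper's entire proof is an appeal to Lemmas 6.1 and 6.2 of Bowditch's notes, applied once to tree metrics (which are $0$-hyperbolic, giving the exact identity) and once to $\mathbb{B}_2$ (which is $\delta$-hyperbolic, giving the two-sided bound). The difference in route is worth noting. The paper treats both claims as instances of a single general fact about $\delta$-hyperbolic spaces, so the tree identity falls out as the $\delta = 0$ case; you instead prove the tree identity by direct path decomposition through the LCA, prove the lower bound by the triangle inequality alone, and invoke thin triangles only for the upper bound. Your decomposition is more informative: it makes explicit that the lower bound needs no curvature hypothesis (it holds in any geodesic metric space) and that negative curvature enters only in the upper bound, and it correctly isolates the one genuine subtlety --- that the four-point condition (\cref{def:hyperbolicity}) and the incenter/slim-triangle formulation agree only up to changing $\delta$ by a constant multiple, which is harmless here because the lemma only asserts existence of some $\delta > 0$ (the paper's footnote makes the same point and records that $\delta = \log 3$ suffices for $\mathbb{B}_2$). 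What the paper's approach buys is brevity and uniformity across the two claims; what yours buys is a complete argument that does not outsource the key geometric step.
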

\begin{proof} 
This is a direct application of Lemma 6.1 and 6.2 in~\cite{bowditch2007course} to tree metrics (which are $0$-hyperbolic) and to the hyperbolic space $\mathbb{B}_2$ (which is $\delta$-hyperbolic).\footnote{Note that the notion of $\delta$-hyperbolicity in~\cite{bowditch2007course} uses the incenter condition, which is equivalent to the four point condition~(\cref{def:hyperbolicity}), up to changing $\delta$ by a constant multiple. In standard hyperbolic space,~\cref{lmm:lca-gromov} is in fact satisfied for $\delta=\log 3$.}
\end{proof}
\begin{proposition}[Tree-likeness of hyperbolic space]%
    \label{prop:treelike}
    There is a constant $C_n$ such that for any set of points $\{z_0, z_1, \dots, z_n\} \subset \mathbb{B}_2^{n+1}$,
    there is a binary tree $T$ on leaves $0, 1, \dots, n$ such that:
    \begin{equation}
        \label{eq:treelike}
        \forall 0 \le i, j \le n: d_T(i, j) \le d_{\mathbb{B}_2}(z_i, z_j) \le d_T(i, j) + C_n,
    \end{equation}
    with $C_n = \delta \cdot O(n)$.
\end{proposition}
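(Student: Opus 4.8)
The plan is to reduce the statement to a question about Gromov products and then realize the approximating tree as the single-linkage dendrogram built from those products. Fix $z_0 = o$ as the base point and abbreviate $p(i,j) = \langle z_i, z_j \rangle$ for the hyperbolic Gromov products and $d_o(z_i)$ for the depths. Since $d_{\mathbb{B}_2}(z_i, z_j) = d_o(z_i) + d_o(z_j) - 2 p(i,j)$, and any rooted tree metric satisfies $d_T(i,j) = d_T(0,i) + d_T(0,j) - 2 \langle i,j \rangle_T$ with $\langle i,j \rangle_T = d_0(i \vee j)$ the LCA depth (the tree identity in \cref{lmm:lca-gromov}), I would build $T$ with exact depths $d_T(0,i) = d_o(z_i)$ and tree Gromov products $q(i,j) := \langle i,j \rangle_T$ chosen so that $p(i,j) \le q(i,j) \le p(i,j) + (n-1)\delta$. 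Since then $d_{\mathbb{B}_2}(z_i,z_j) - d_T(i,j) = 2\big(q(i,j) - p(i,j)\big)$, this immediately yields $d_T(i,j) \le d_{\mathbb{B}_2}(z_i,z_j) \le d_T(i,j) + 2(n-1)\delta$, which is \cref{eq:treelike} with $C_n = 2(n-1)\delta = \delta \cdot O(n)$.

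For the construction I take $q$ to be the max-min (bottleneck) closure of $p$, namely $q(i,j) = \max \min_{l} p(v_l, v_{l+1})$ over simple paths $i = v_0, \dots, v_m = j$. This is exactly the similarity produced by the single-linkage merging of \cref{alg:decoding} using LCA depth as the merge score. I first check that it defines a valid binary tree: concatenating optimal paths shows $q(i,j) \ge \min\{q(i,k), q(k,j)\}$ for every $k$, so the two smallest of any triple $q(i,j), q(i,k), q(j,k)$ coincide, which is precisely the ultrametric condition characterizing tree Gromov products, and breaking ties arbitrarily makes the dendrogram binary. Moreover the edge of any path incident to $i$ (respectively $j$) satisfies $p \le d_o(z_i)$ (respectively $\le d_o(z_j)$) by the triangle inequality, so $q(i,j) \le \min\{d_o(z_i), d_o(z_j)\}$; this guarantees each LCA lies no deeper than its two leaves, so the prescribed depths $d_T(0,i) = d_o(z_i)$ are realizable and $d_T(i,j) = |d_o(z_i) - d_o(z_j)| \ge 0$ when $q$ is maximal.

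It then remains to bound the distortion of $q$. The lower bound $q(i,j) \ge p(i,j)$ is immediate from the trivial one-edge path. For the upper bound I iterate the four-point condition (\cref{def:hyperbolicity}) along the bottleneck path achieving $q(i,j)$: a straightforward induction on the path length gives $p(v_0, v_m) \ge \min_l p(v_l, v_{l+1}) - (m-1)\delta$, and since the path is simple we have $m \le n$, whence $q(i,j) = \min_l p(v_l, v_{l+1}) \le p(i,j) + (n-1)\delta$. Taking $\delta = \log 3$, the hyperbolicity constant of $\mathbb{B}_2$ furnished by \cref{lmm:lca-gromov}, then completes the estimate.

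I expect the only delicate point to be the bookkeeping that the single function $q$ controls all pairs simultaneously while remaining a genuine tree metric with the prescribed leaf depths; this is handled cleanly by the bottleneck-closure viewpoint, which decouples the exact ultrametric (hence tree) structure of $q$ from the additive four-point error. The linear constant $\delta \cdot O(n)$ arises precisely because a simple path visits at most $n+1$ of the points, so the four-point inequality is applied at most $n-1$ times; this is the source of the $O(n)$ factor, and improving it to the classical $O(\delta \log n)$ would require a more careful balanced construction that is unnecessary here.
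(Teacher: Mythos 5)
Your proposal is correct, and it proves the proposition by a genuinely different route than the paper. The paper's proof is black-box-plus-surgery: it cites a standard tree-approximation result (Proposition 6.7 in Bowditch's notes) to get a tree approximating the $n+1$ points with additive error $\delta \cdot O(n)$, but with no guarantee that the points are leaves or that degrees are $1$ or $3$, and then modifies that tree---shrinking edges and inserting dummy nodes of small length $c < \delta$---to enforce the leaf and binary conditions, paying an extra additive $2cn$. You instead build the tree from scratch: with $z_0$ as basepoint, form the Gromov products $p(i,j)$, pass to their bottleneck (max-min) closure $q(i,j)$, verify that $q$ satisfies the two-smallest-coincide condition together with $q(i,j)\le\min\{d_o(z_i),d_o(z_j)\}$ (so it is realizable as a single-linkage dendrogram with exact leaf depths $d_o(z_i)$ and LCA depths $q(i,j)$), and bound $0 \le q(i,j) - p(i,j) \le (n-1)\delta$ by iterating the four-point condition of \cref{def:hyperbolicity} along a bottleneck path. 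Your route is self-contained rather than resting on a citation, keeps all distances to the basepoint \emph{exact} (the paper only approximates them), gets the leaf condition for free from the dendrogram structure, and directly mirrors the merging rule of \cref{alg:decoding}, which makes the subsequent LCA-agreement machinery more transparent; what the paper's route buys is brevity and surgery lemmas that are reusable in other arguments. Three pieces of bookkeeping would complete your write-up: (i) the concatenation of two optimal simple paths is only a walk, so you should note that deleting cycles removes edges and hence cannot decrease the bottleneck value, which recovers $q(i,j)\ge\min\{q(i,k),q(k,j)\}$; (ii) tie-breaking for binarity, and the degenerate cases $q(i,j)=\min\{d_o(z_i),d_o(z_j)\}$ or $\min_{i,j}q(i,j)=0$, produce zero-length edges, which you should either explicitly allow (they are metrically harmless) or perturb away by an amount absorbed into $C_n$, exactly as the paper's surgery does; (iii) in the paper's convention node $0$ must itself be a leaf of degree $1$, so attach it above the top merge node by an edge of length $\min_{i,j} q(i,j)$, which keeps $d_T(0,i)=d_o(z_i)$ exact and gives the top node degree $3$.
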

\begin{proof}%
    The statement of \cref{prop:treelike} without the binary or leaf condition is a standard result (See Proposition 6.7 in~\cite{bowditch2007course}).
    That is, there exists a tree $T$ with $n$ nodes (not necessarily binary) satisfying~\cref{eq:treelike}.
    We modify $T$ to satisfy the leaf constraint (i.e. $\{z_0, z_1,\ldots,z_n\}$ are leaves' embeddings) and to satisfy the binary condition (i.e. $T$ is binary in the sense that every node has degree $1$ or $3$).
    \paragraph{Leaf condition}
    Let $k\in[n]$ be a node in $T$ and $m$ be the minimum edge length in $T$. If $k$ has degree greater than $1$ (i.e. $k$ is not a leaf node), then we shrink every edge connected to it by some constant $c < \mathrm{min}\{\delta, \frac{m}{n}\}$, 
    create a dummy node $p$ in place of $k$ and connect $k$ to it~(\cref{fig:leaf_cond}).
    Now all tree distances involving $k$ are the same, and all other distances going through $p$ are shrunk by at most $2c$.
    The resulting tree $T'$ has $n$ leaves and is such that: 
    \begin{align*}
        \forall 0 \le i, j \le n: d_{T'}(i, j) \le d_{T}(i, j) \le d_{T'}(i, j) + 2 c.
    \end{align*}
    \paragraph{Binary condition}
    Next, we modify $T'$ to be binary.
    For every node of degree $2$, simply delete it, which does not affect the tree metric restricted to $[n]$ (we could not have deleted a node in $[n]$ since they are all leaves now).
    For every node of degree $4$ or more, we replace it by multiple copies connected by edges of length $c$ and decrease original edges by $c$ every time we create a copy of the original node~(\cref{fig:binary_cond}), which causes the distances to shrink by at most $2c(n-1)$ (case of star trees). 
    Therefore, distances in this new binary tree $T''$ satisfy: 
    \begin{align*}
        \forall 0 \le i, j \le n: d_{T''}(i, j) \le d_{T'}(i, j) \le d_{T''}(i, j) + 2 c(n-1).
    \end{align*}
    With this final binary tree on $(n+1)$ leaves $T''$, we have:
    \begin{align*}
        \forall 0 \le i, j \le n: d_{T''}(i, j)\le d_\mathbb{B}(z_i, z_j) \le d_{T''}(i, j) + C_n + 2c + 2 c(n-1).
    \end{align*}
    Thus the statement holds for $C_n' = C_n + 2c n$, which is still $\delta \cdot O(n)$ since $c<\delta$.
\end{proof}
\begin{figure}[t]
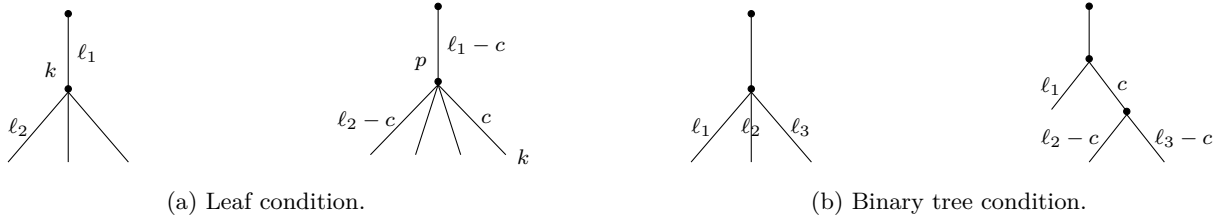

    \centering
    \begin{subfigure}[b]{0.45\textwidth}
    \input{figures/tree_leaf_cond}
    \caption{Leaf condition.}\label{fig:leaf_cond}
    \end{subfigure}
    \hfill
    \centering
    \begin{subfigure}[b]{0.45\textwidth}
    \input{figures/tree_binary_cond}
    \caption{Binary tree condition.}\label{fig:binary_cond}
    \end{subfigure}
    \caption{Tree transformations used to satisfy the leaf and the binary conditions in~\cref{prop:treelike}}
\end{figure}
\begin{remark}
Using the definition of quasi-isometries~(\cref{def:quasi}),~\cref{prop:treelike} implies that for any embedding $Z\in\mathbb{B}_2^n$, there exists a binary tree $T$ on $n$ leaves such that there is a $(1, C_n)$-quasi-isometry from the leaves of $T$ to $Z$.
\end{remark}

\section{Hyperbolic LCA construction}\label{appendix:lca_construction}
We detail the calculations used to compute the hyperbolic LCA and its distance to the origin.
\lcanorm*
\begin{proof}
We use circle inversions to show this result. Circle inversions are Euclidean transformations on the plane that map circles to circles and preserve angles~\cite{brannan2011geometry}, and can represent hyperbolic reflections (isometric transformations) along hyperbolic geodesics. 
In particular, the circle inversion formula can be used to recover the center of the circle that is orthogonal to the boundary of the disk and that coincides with the geodesic between two given points. 
Consider the circle defined by the geodesic connecting $x$, and $y$ and let $R$ denote its radius, $p$ the orthogonal projection of $o$ onto this circle and $\Delta=\overline{oo'}$ denote the distance between the origin and the circle center~(\cref{fig:lca_norm}). 
By the circle inversion property and using the fact that the Poincar\'e disk has radius one, we have $1=(\Delta - R)(\Delta + R)$, which yields:
\begin{equation}
    R^2=\Delta^2-1.\label{eq:inversion_formula}
\end{equation}
Additionally, if $\theta=\angle xoy$ and $\alpha=\angle pox$, we have by the Pythagorean theorem:
\begin{equation}
    \begin{cases}
        R^2=(\Delta-||x||_2\mathrm{cos}(\alpha))^2+||x||^2_2\mathrm{sin}^2(\alpha)\\
        R^2=(\Delta-||y||_2\mathrm{cos}(\theta-\alpha))^2+||y||^2_2\mathrm{sin}^2(\theta-\alpha).
    \end{cases}
\end{equation}
This leads to the system of equations:
\begin{equation}
    \begin{cases}
        2\sqrt{R^2+1}\mathrm{cos}(\alpha)=\frac{||x||_2^2+1}{||x||_2}\\
        2\sqrt{R^2+1}\mathrm{cos}(\theta-\alpha)=\frac{||y||_2^2+1}{||y||_2}
    \end{cases}
\end{equation}
which is solved for $R$ and $\alpha$ defined in~\cref{eq:lca_norm}.
Now using~\cref{eq:inversion_formula}, we have: $||p||_2=\Delta-R=\sqrt{R^2+1}-R$.
Finally, we get the result using the hyperbolic distance function and noting that the orthogonal projection of a point on a geodesic that does not contain that point is minimizing the distance between the point and the geodesic, that is $p=x\vee y$. 
\end{proof}
\begin{figure}
    \centering
     \includegraphics[width=0.35\textwidth]{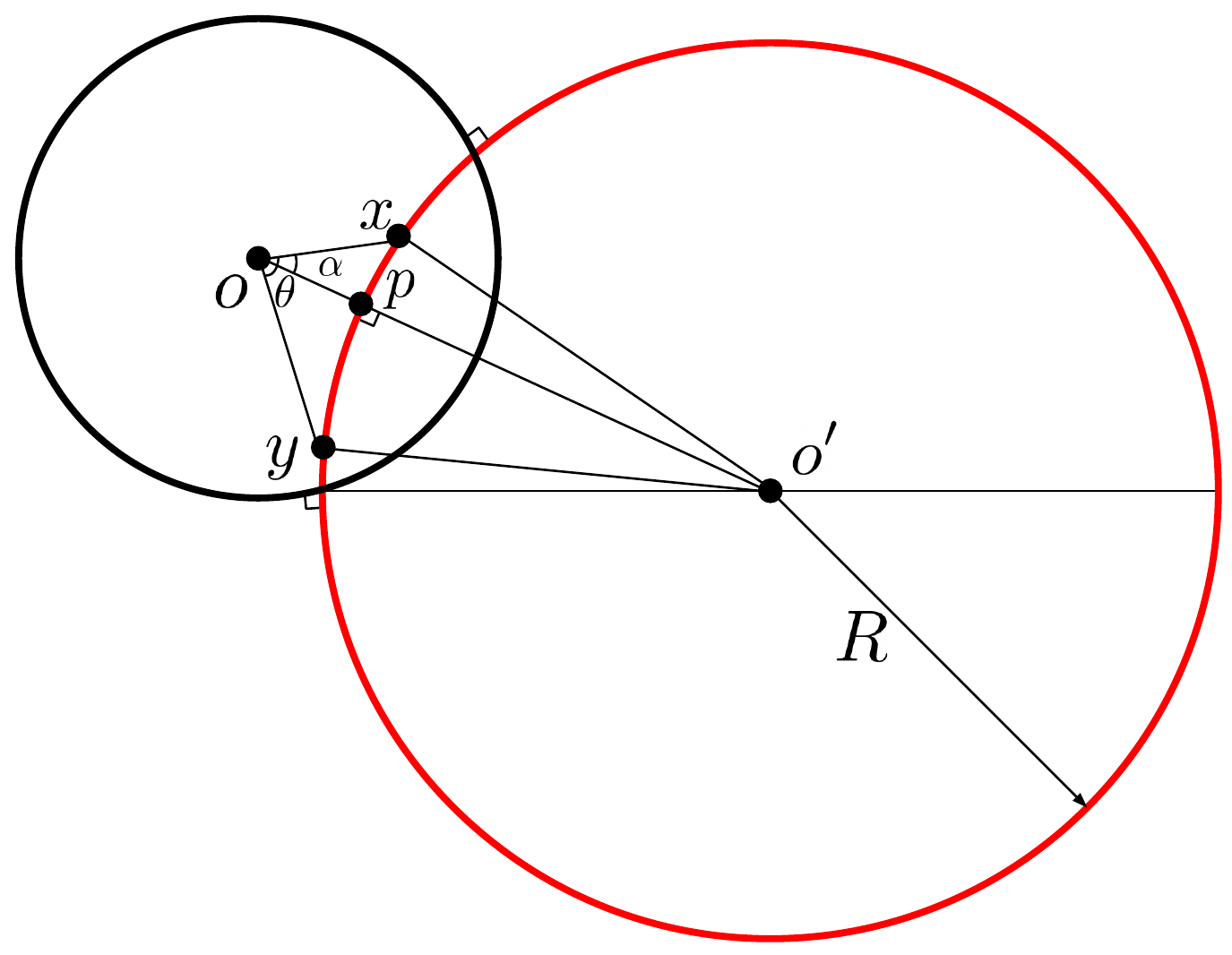}
        \caption{Circle inversion used for hyperbolic LCA construction in~\cref{lemma:lca_norm}.}\label{fig:lca_norm}
\end{figure}

\section{Proof of~\cref{thm:main}}\label{appendix:main_result}
Our goal in this section is to show our main result,~\cref{thm:main}, which gives a $(1+\varepsilon)$-approximation ratio for Dasgupta's discrete objective. 
We first give an overview of the strategy used to show~\cref{thm:main} and state the two results used in the proof~(\cref{lmm:spread-equivalence} and~\cref{lmm:encode-decode}) in~\cref{appendix:proof_outline}.
We then go over the details of~\cref{lmm:spread-equivalence} in~\cref{appendix:decoding_error} and~\cref{lmm:encode-decode} in~\cref{sec:theory-encoding}. 
\setcounter{subsection}{-1}
\subsection{Proof outline}\label{appendix:proof_outline}
Our proof of~\cref{thm:main} relies on two main results; the existence of a constrained embedding, the spread embedding set~(\cref{def:spread}), such that the rooted binary tree decoded from any embedding in that set has a discrete cost close to the continuous cost~(\cref{lmm:spread-equivalence}), and reciprocally, any rooted binary tree has a corresponding embedding in that set~(\cref{lmm:encode-decode}).

First, we define the constrained spread embedding set and give more precise conditions on the spread constants, which we'll use throughout the proof. 
\begin{definition}[Spread Embeddings]\label{def:spread-detailed}
    An embedding $Z \in \mathbb{B}_2^n$ is called \emph{spread} if for every triplet $(i, j, k)$:
    \begin{equation}
        \max \{d_o(z_i \vee z_j), d_o(z_i \vee z_k), d_o(z_j \vee z_k) \} - \min \{d_o(z_i \vee z_j), d_o(z_i \vee z_k), d_o(z_j \vee z_k) \} >
        3C_n + 2\delta + 1,
    \end{equation}
    where $\delta$ is Gromov's delta hyperbolicity~(\cref{lmm:lca-gromov}) and $C_n=\delta\cdot O(n)$ is defined in~\cref{prop:treelike}.
\end{definition}

The spread constraints force LCAs to be distinguishable from each other, which intuitively encourages binary tree metrics.
Now using this definition, we show that any spread embedding decoded using~\cref{alg:decoding} returns a tree that has a discrete cost close to the embeddings' continuous cost.
\begin{restatable}[]{lemma}{decodelemma}\label{lmm:spread-equivalence}
Let $Z\in\mathcal{Z}\subset\mathbb{B}_2^n$ be a spread embedding. Then:
\[
|C_{\mathrm{Dasgupta}}(\dec(Z);w) - C_{\mathrm{\name{}}}(Z;w,\tau)|\le 4e^{-1/\tau}\sum_{ijk}\mathrm{max}\{|w_{ij|}, |w_{ik}|, |w_{jk}|\}
\]
\end{restatable}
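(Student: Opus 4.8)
The plan is to exploit the fact that the Dasgupta cost \cref{eq:dasgupta_wang} and the \name{} cost \cref{eq:dasgupta_cont} share the identical additive term $\sum_{ijk}(w_{ij}+w_{ik}+w_{jk}) + 2\sum_{ij}w_{ij}$, so that everything cancels except the triplet-selection terms. First I would write
\[
C_{\mathrm{Dasgupta}}(\dec(Z);w) - C_{\mathrm{\name{}}}(Z;w,\tau) = \sum_{ijk}\big(w_{\name{},ijk}(Z;w,\tau) - w_{ijk}(\dec(Z);w)\big),
\]
and reduce the claim, via the triangle inequality, to the per-triplet bound $|w_{\name{},ijk}(Z;w,\tau) - w_{ijk}(\dec(Z);w)| \le 4e^{-1/\tau}\max\{|w_{ij}|,|w_{ik}|,|w_{jk}|\}$. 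Here $w_{ijk}(\dec(Z);w)$ is a hard selection of one of $w_{ij},w_{ik},w_{jk}$ according to the decoded tree, while $w_{\name{},ijk}$ is the softmax-weighted average of the same three similarities with weights set by the LCA depths $d_o(z_i\vee z_j)$, $d_o(z_i\vee z_k)$, $d_o(z_j\vee z_k)$.

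The proof then rests on two structural consequences of the spread condition \cref{def:spread-detailed}, which I would establish first. (i) For spread $Z$, the decoded tree $T=\dec(Z)$ is consistent with the LCA-depth order on every triplet: $\{i,j\mid k\}_T$ holds precisely when $(i,j)$ is the pair of maximal LCA depth, so that $w_{ijk}(\dec(Z);w)=w_{p^\ast q^\ast}$ with $(p^\ast,q^\ast)=\arg\max_{(p,q)} d_o(z_p\vee z_q)$. (ii) Within each triplet, the largest LCA depth exceeds the second-largest by strictly more than $1$. For (ii), I would invoke \cref{lmm:lca-gromov} to pass from LCA depths to Gromov products up to an additive $\delta$, use the four-point condition (\cref{def:hyperbolicity}) to show that the two smallest of the three Gromov products lie within $\delta$ of each other (the $0$-hyperbolic analogue being that a metric tree has two coinciding pairwise LCAs), and finally observe that the spread gap $3C_n+2\delta+1$ was chosen exactly to dominate these $O(\delta)$ slacks and still leave a gap above $1$ between the top two LCA depths.

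With (i) and (ii) in hand, the per-triplet estimate is routine softmax algebra. Writing $s_{pq}$ for the softmax weight on pair $(p,q)$ and using $\sum_{pq}s_{pq}=1$, the $(p^\ast,q^\ast)$ contribution cancels and
\[
|w_{\name{},ijk} - w_{p^\ast q^\ast}| \le 2\max\{|w_{ij}|,|w_{ik}|,|w_{jk}|\}\,(1-s_{p^\ast q^\ast}).
\]
Bounding both non-maximal exponentials by $e^{d_{\mathrm{2nd}}/\tau}$ gives $1-s_{p^\ast q^\ast}\le 2e^{(d_{\mathrm{2nd}}-d_{\max})/\tau}\le 2e^{-1/\tau}$ by (ii), which produces the factor $4e^{-1/\tau}$; summing over all triplets then finishes the proof.

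The main obstacle I expect is (i), the decoding-consistency claim: the algorithm in \cref{alg:decoding} merges clusters greedily by global LCA depth, so in principle the tree relation on a fixed triplet could be corrupted by merges routed through other points. I would control this using the tree-likeness of hyperbolic space (\cref{prop:treelike}) together with the spread separation: because the four-point condition forces $d_o(z_i\vee z_k)$ to be large whenever $i,k$ share deep LCAs with a common third point, the large spread gap rules out any inconsistent merge ordering. Establishing this carefully---rather than the softmax estimate---is where the real work lies.
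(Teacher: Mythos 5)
Your skeleton matches the paper's proof: the same per-triplet decomposition, the same softmax algebra (the paper bounds $\delta_{ijk}\le 2w^*_{ijk}(1-\sigma_\tau(\cdot)_{ij})\le 4w^*_{ijk}e^{(d_2-d_1)/\tau}$, exactly your estimate), and the same reliance on a depth gap exceeding $1$ inside each triplet. Your route to claim (ii) --- passing from LCA depths to Gromov products via \cref{lmm:lca-gromov} and using the four-point condition to force the two smallest products within $\delta$ of each other --- is a valid and arguably more direct alternative to the paper, which instead obtains the gap as the second statement of \cref{lmm:decode-condition}, through an auxiliary quasi-isometric tree. The genuine gap is claim (i), which you correctly single out as the crux and then do not prove. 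The lemma is about $\dec(Z)$, and everything hinges on showing that the triplet relations of the greedily decoded tree coincide with the hyperbolic LCA-depth order; a pointwise spread condition on each triplet does not by itself rule out that the union-find merges of \cref{alg:decoding}, driven by the globally sorted list of pair depths, produce a tree whose relation on some triplet disagrees with that triplet's own depth order (merges can be routed through arbitrary other points, as you note).

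The paper closes exactly this gap with two dedicated lemmas that your sketch does not reconstruct. First, \cref{lmm:decode-condition}: append the origin, invoke \cref{prop:treelike} to obtain an actual binary tree $T$ that is $(1,C_n)$-quasi-isometric to $Z\cup\{o\}$, use \cref{lmm:depth} to translate the spread condition into a lower bound $m>\tfrac{3}{2}C_n+\delta+1$ on the minimum internal edge length of $T$, and conclude LCA agreement via \cref{lmm:isometry-agreement}. Second, \cref{lemma:decoding}: an induction on the structure of $T$ showing that the greedy decoder, run on any embedding in LCA agreement with $T$, outputs exactly the rooted version of $T$; this is what licenses replacing $\dec(Z)$ by a tree whose triplet relations provably match the depth order. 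Your proposed substitute (``the four-point condition \ldots\ rules out any inconsistent merge ordering'') gestures at a direct argument that could plausibly be completed --- one would apply hyperbolicity along the chain of merges connecting $k$ to $i$, losing $O(\delta\log n)$, and beat that loss with the spread gap $3C_n+2\delta+1\gg\delta\log n$ --- but that chain argument, or the paper's tree-anchored induction, is precisely the missing content: as written, your proposal establishes the routine half (the softmax estimate) and leaves the lemma's real substance as an acknowledged assumption.
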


We then show that any rooted binary tree has a corresponding spread embedding that decodes to it.
\begin{restatable}[]{lemma}{encodelemma}\label{lmm:encode-decode}
For any unit-weight rooted binary tree $T$ on $n$ leaves, there exists a spread embedding $Z\in\mathcal{Z}\subset\mathbb{B}_2^n$ such that
$\dec(Z) = T$.
\end{restatable}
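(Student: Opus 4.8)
The plan is to construct an explicit spread embedding for any given rooted binary tree $T$ and verify that our decoding algorithm recovers $T$. First I would invoke the classical result of Sarkar~\cite{sarkar2011low}, which embeds a tree into $\mathbb{B}_2$ with arbitrarily low distortion: by scaling all tree edge lengths by a large factor $\lambda$ before embedding, I can ensure that the leaf embeddings $Z=\{z_1,\ldots,z_n\}$ form a $(1+\varepsilon,\kappa)$-quasi-isometry with $T$ in the sense of~\cref{def:quasi}. The key leverage is that scaling edge lengths multiplies all LCA depths $d_0(i\vee j)$ by $\lambda$, while the additive hyperbolic error $\delta$ (from~\cref{lmm:lca-gromov}) and the quasi-isometry constant $\kappa$ remain bounded; hence for large enough $\lambda$ the embedded LCA depths $d_o(z_i\vee z_j)$ are close to $\lambda\langle i,j\rangle = \lambda\, d_0(i\vee j)$.

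Next I would establish the spread condition. For any triplet $(i,j,k)$, since $T$ is a binary tree exactly one relation $\{i,j|k\}$, $\{i,k|j\}$, $\{j,k|i\}$ holds; say $\{i,j|k\}$. Then in the tree, $d_0(i\vee j)$ strictly exceeds $d_0(i\vee k)=d_0(j\vee k)$, and the gap is at least one scaled edge length $\lambda m$ where $m$ is the minimum edge length. Using~\cref{lmm:lca-gromov} to translate tree Gromov products to hyperbolic LCA depths (up to the additive $\delta$), the corresponding hyperbolic gap $\max\{d_o(z_i\vee z_j),\ldots\}-\min\{d_o(z_i\vee z_j),\ldots\}$ is at least $\lambda m - O(\delta)$. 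Choosing $\lambda$ large enough that this exceeds the spread threshold $3C_n+2\delta+1$ from~\cref{def:spread-detailed} guarantees $Z\in\mathcal{Z}$. I would make this quantitative by picking $\lambda \ge (3C_n+2\delta+1+O(\delta))/m$, which is finite since $C_n=\delta\cdot O(n)$ is a fixed constant once $n$ is fixed.

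Finally I would verify $\dec(Z)=T$. The decoding in~\cref{alg:decoding} processes leaf pairs in order of decreasing $d_o(z_i\vee z_j)$ and greedily merges. The crucial claim is that this ordering is consistent with the tree's own bottom-up merge order: a pair with deeper LCA in $T$ also has deeper hyperbolic LCA depth in $Z$. Because the embedding preserves the strict ordering of LCA depths across all triplets (guaranteed precisely by the spread gap dominating the additive errors), the greedy merges reconstruct exactly the internal node structure of $T$. I would argue this by induction on merges, showing that at each step the deepest unmerged hyperbolic LCA corresponds to a genuine internal node of $T$ whose two children subtrees have already been fully assembled.

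The main obstacle I anticipate is controlling the \emph{relative} rather than absolute ordering of LCA depths uniformly across all $\binom{n}{3}$ triplets simultaneously: the additive errors $\delta$ and $\kappa$ from the quasi-isometry and from~\cref{lmm:lca-gromov} could in principle flip the order of two LCAs whose tree depths differ by only one edge. The spread definition is engineered precisely to rule this out, so the delicate part is the bookkeeping that shows one uniform choice of scaling factor $\lambda$ forces \emph{every} triplet's hyperbolic depth gap to exceed the accumulated additive slack, thereby certifying both membership in $\mathcal{Z}$ and faithful decoding at once.
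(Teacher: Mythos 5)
Your proposal is correct and follows essentially the same route as the paper's proof: scale the tree, apply Sarkar's construction (\cref{prop:sarkar}), verify the spread condition by comparing hyperbolic LCA depths to tree LCA depths via Gromov products, and argue that the greedy decoding recovers $T$ because the ordering of LCA depths is preserved---this is exactly the paper's \cref{lmm:encode} combined with \cref{lemma:decoding} (the paper additionally attaches a dummy leaf $0$ at the root so that the origin formally plays the role of a leaf). One bookkeeping point to tighten: the quasi-isometry error in \cref{lmm:depth} is not purely additive but contains the term $\frac{3\varepsilon}{2}M$, which scales with the diameter $M = \Theta(\lambda n)$ of the scaled tree, so your gap bound $\lambda m - O(\delta)$ must instead invoke Sarkar's coupling $\zeta = O(1/\varepsilon)$ and choose $\varepsilon \le \frac{1}{3n}$, $\zeta = \Theta(n)$, precisely as the paper does.
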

These two Lemmas show the tight equivalence between our continuous $C_{\mathrm{\name{}}}$ cost on spread embeddings, and the discrete Dasgupta cost $C_{\mathrm{Dasgupta}}$ on rooted binary trees.
Finally, putting these together, we show our main result which is that the discrete tree returned by \name{} has a $(1+\varepsilon)$-approximation factor for Dasgupta's minimum~(\cref{thm:main}). 
\apx*
\begin{proof}
Let $\mathcal{Z}\subset\mathbb{B}_2^n$ be the set of spread leaves' embeddings embeddings~(\cref{def:spread-detailed}), $\tau>0$ and:
\begin{align*}
    T^*&={\mathrm{argmin}}_T\ C_{\mathrm{Dasgupta}}(T;w)\\
    Z^*&={\mathrm{argmin}}_{Z\in\mathcal{Z}}\ C_{\name{}}(Z; w,\tau).
\end{align*}
WLOG, assume that all edges in $T^*$ have unit weight.\footnote{Any weighted version of $T^*$ would achieve the same Dasgupta cost, so we consider the unit weight case for simplicity.}
Since $T^*$ is a unit-weight rooted binary, we can apply~\cref{lmm:encode-decode} to find $Z\in\mathcal{Z}$ such that $\dec(Z)=T^*$. 

Next, let $\Delta\coloneqq C_{\mathrm{Dasgupta}}(\dec(Z^*);w)-C_{\mathrm{Dasgupta}}(T^*;w)$.
We have:
\begin{align*}
    \begin{split}
      0\le \Delta 
      &\le C_{\mathrm{Dasgupta}}(\dec(Z^*);w) - C_{\mathrm{\name{}}}(Z^*;w,\tau) + C_{\mathrm{\name{}}}(Z;w,\tau) - C_{\mathrm{Dasgupta}}(T^*;w)\\
      &= C_{\mathrm{Dasgupta}}(\dec(Z^*);w) - C_{\mathrm{\name{}}}(Z^*;w,\tau) + C_{\mathrm{\name{}}}(Z;w,\tau) - C_{\mathrm{Dasgupta}}(\dec(Z);w)\\
      &\le 2\ \mathrm{sup}_{Z\in\mathcal{Z}}|C_{\mathrm{Dasgupta}}(\dec(Z);w) - C_{\mathrm{\name{}}}(Z;w,\tau)|\\
      &\le 8\ e^{-1/\tau}\sum_{ijk}\mathrm{max}\{|w_{ij}|, |w_{ik}|, |w_{jk}|\}.
    \end{split}
\end{align*}
The first inequality follow from the fact that $C_{\mathrm{\name{}}}(Z^*;w,\tau) \le C_{\mathrm{\name{}}}(Z;w,\tau)$ since $Z^*$ is the minimizer, and the last inequality uses~\cref{lmm:spread-equivalence}.
Then:
\begin{equation}
    \frac{C_{\mathrm{Dasgupta}}(\dec(Z^*);w)}{C_{\mathrm{Dasgupta}}(T^*;w)}\le 1 + 8\ e^{-1/\tau}\bigg(\frac{\sum_{ijk}\mathrm{max}\{w_{ij}, w_{ik}, w_{jk}\}}{ C_{\mathrm{Dasgupta}}(T^*;w)}\bigg).
\end{equation}
Since $C_{\mathrm{Dasgupta}}(T^*;w)\ge\sum_{ijk}\mathrm{min}\{w_{ij}+w_{ik}, w_{ij}+w_{jk}, w_{ik}+w_{jk}\}+2 \sum_{ij}w_{ij}$, we finally have:
\begin{equation}
    \frac{C_{\mathrm{Dasgupta}}(\dec(Z^*);w)}{C_{\mathrm{Dasgupta}}(T^*;w)}\le 1 + 
    \mathcal{O}(e^{-1/\tau}).
\end{equation}
\end{proof}
\begin{remark}
    At a high level, this bound suggests that a better approximation of the argmax function (lower $\tau$) gives a better approximation for Dasgupta's discrete objective.
\end{remark}
\begin{remark}%
    The optimization set $\mathcal{Z}$ defines a local constraint on every triplet $z_i, z_j, z_k$ (\cref{def:spread}),
    and can be enforced on triplets concurrently with sampling them for the main loss~(\cref{eq:dasgupta_cont}).
    The max and min operations can be relaxed using softmax, and the separation
    can be enforced with any auxiliary constraint (e.g.\ hinge loss) .
\end{remark}

\subsection{Proof of~\cref{lmm:spread-equivalence}}\label{appendix:decoding_error}
The goal of this section is to show~\cref{lmm:spread-equivalence}.
We first introduce a notion of LCA agreement~(\cref{subsec:lca_agreement}).
We show any spread embedding is in LCA agreement with some unrooted binary tree $T$, and gets decoded into the rooted version of $T$~(\cref{lemma:decoding} in~\cref{subsec:lca_agreement_condition}).
This then allows us to show~\cref{lmm:spread-equivalence} in~\cref{subsubsec:proof_decode}.
\subsubsection{LCA Agreement}\label{subsec:lca_agreement}
We seek to understand when the continuous \name{} cost of an embedding is close to the discrete cost of the tree decoded with~\cref{alg:decoding}.
The discrete and continuous costs both depend on the ordering of triplets, in terms of their LCAs' depths (i.e. distances to the origin or root). 
If this ordering is the same in the continuous embedding space and in the decoded discrete tree, then the costs should be close, up the the continuous approximation error of the softmax function. 
This motivates us to introduce the notion of LCA agreement. 
\begin{definition}
Given a metric space $(X, d)$, and binary operation $\vee : X \times X \to X$, we define the set of \emph{LCA triplets} on any points $x_1, \dots, x_n \in X$ with respect to the base point $x_0$ to be
\[
\{ \{i, j | k\}, i, j, k \in [n] :  d(x_0, x_i \vee x_j) > \min \{ d(x_0, x_i \vee x_k), d(x_0, x_j \vee x_k) \}.
\]
\end{definition}
\begin{definition}[LCA agreement]\label{defn:lca_agreement}
We say two sets of points $x_1, \dots, x_n \in X^n$ and $y_1, \dots, y_n \in Y^n$ (possibly in different metric spaces) are \emph{in LCA agreement} or \emph{LCA equivalent} with respect to $x_0$ and $y_0$, if their set of LCA triplets are identical.
\end{definition}
Note that LCA agreement is a relative notion and depends on the base point. 
In our proofs, we always refer to LCA agreement as being with respect to the origin $z_0=o\in\mathbb{B}_2$ for embeddings, and with respect to the node labelled $0$ for trees. 

Intuitively, LCA agreement implies that the LCAs of an embedding are consistent with a ground truth tree. 
Since our proposed decoding algorithm only relies on LCA distances, we can show that if there exist a unrooted binary tree $T$ such that $(Z, T)$ are in LCA agreement, then $\dec(Z)$ recovers the rooted version of $T$. 
\begin{lemma}\label{lemma:decoding}
    Let $Z$ be an embedding and $T$ be a binary tree on $(n+1)$ leaves (not rooted) such that $(Z\cup\{z_0=o\}, T)$ are in LCA agreement. 
    Let $T'$ be the rooted binary tree on $n$ leaves that is obtained by removing the leaf $0$ from $T$.
    Then $\dec(Z)=T'$.
\end{lemma}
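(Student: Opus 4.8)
The plan is to show that \cref{alg:decoding} is exactly a single-linkage (Kruskal-style) agglomeration driven by the similarity $s(i,j) := d_o(z_i \vee z_j)$, and that LCA agreement forces $s$ to behave like the LCA-depth function of $T'$, so that this agglomeration rebuilds $T'$ cluster by cluster. Throughout I identify $T'$ with the rooted binary tree obtained by rooting $T$ at leaf $0$, and for a node $v$ of $T'$ I write $L(v) \subseteq [n]$ for its set of leaf descendants (so $L(i) = \{i\}$ for a leaf). First I would extract the one consequence of LCA agreement that the proof actually needs: for any triple $(i,j,k)$ whose deep pair in $T'$ is $(i,j)$ (i.e.\ $\{i,j|k\}_{T'}$ holds), we have the \emph{strict} comparison $d_o(z_i \vee z_j) > d_o(z_i \vee z_k) = d_o(z_j \vee z_k)$. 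In $T'$ this triple contributes exactly the single LCA triplet $\{i,j|k\}$; by LCA agreement the same is true for $Z$, so I can combine the membership of $\{i,j|k\}$ with the non-membership of $\{i,k|j\}$ and $\{j,k|i\}$. The two non-membership conditions pin down the equality $d_o(z_i \vee z_k) = d_o(z_j \vee z_k)$, and then the membership condition upgrades this to the strict inequality above. In other words, $s$ is a reverse ultrametric realizing the same triplet relations as $T'$.

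Next I would reformulate the algorithm. Processing pairs in decreasing order of $s$ and merging two points exactly when they lie in different trees of the forest $F$ is precisely Kruskal's procedure on the complete graph with edge weights $s$; equivalently, once all pairs of weight at least $t$ have been scanned, the trees of $F$ are the connected components of the threshold graph on edges $\{(i,j) : s(i,j) \ge t\}$. Since the graph is complete, the procedure performs exactly $n-1$ merges and terminates with a single rooted binary tree on $[n]$, so it suffices to control which leaf sets appear as components.

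The heart of the argument is the invariant that at every stage each tree of $F$ has leaf set $L(v)$ for some node $v$ of $T'$, and that every merge joins two sibling subtrees of $T'$. I would prove this by considering the \emph{first} merge that violates it: a merge of components $A = L(v_A)$ and $B = L(v_B)$ (both subtrees, by minimality) with $v_A, v_B$ not siblings, triggered by a cross pair $(i,j)$, $i \in A$, $j \in B$. Non-siblinghood means (after possibly swapping $A,B$) that $v_A$ is a proper descendant of a child of $p := v_A \vee_{T'} v_B$; then the sibling subtree of $v_A$ under its parent $q$ lies strictly below $p$ and contains a leaf $a$ with $i \vee_{T'} a = q$ strictly deeper than $i \vee_{T'} j = p$. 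Applying the extracted comparison to the triple $(i,a,j)$ gives $s(i,a) > s(i,j)$, so the pair $(i,a)$ is scanned strictly before $(i,j)$; but then $a$ would already have been merged into $i$'s component, contradicting $a \notin A = L(v_A)$. This contradiction establishes the invariant.

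Finally, granting the invariant, each of the $n-1$ merges creates the parent node of two sibling subtrees of $T'$, so the $n-1$ clusters produced by $\dec(Z)$ are distinct clusters of $T'$; as $T'$ has exactly $n-1$ internal-node clusters, the two cluster sets coincide and hence $\dec(Z) = T'$ as rooted leaf-labelled trees. The step I expect to be the main obstacle is the first one: turning the \emph{set} equality of LCA triplets into the strict metric comparison $s(i,a) > s(i,j)$, since each individual triplet (non-)membership only yields a non-strict inequality and strictness must be recovered by combining all three conditions on the triple. A secondary point requiring care is the bookkeeping that aligns the unrooted tree $T$ with its rooting $T'$ and checks that the per-triple comparisons are unaffected by the additive constant $d(0,\mathrm{root})$ relating depths from $0$ to depths from the root of $T'$.
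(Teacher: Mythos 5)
Your proof is correct, but it follows a genuinely different route from the paper's. The paper factors the argument into two steps: it first asserts that, because $(Z\cup\{z_0\},T)$ are in LCA agreement, running \cref{alg:decoding} on $Z$ produces exactly the same tree as running it on $T$ itself (with tree LCAs and base point $0$), and it then shows $\dec(T;0,\vee_T)=T'$ by induction on the topology of $T$ --- splitting at the neighbor $r$ of leaf $0$ into subtrees $T_0,T_1$, observing that pairs within $T_0$ or within $T_1$ have strictly deeper LCAs than cross pairs and hence are merged first, recursing, and letting the final cross merge join the roots of $T_0$ and $T_1$. You instead argue bottom-up and directly on $Z$: a Kruskal-style invariant that every component of the forest is a leaf set $L(v)$ of a node of $T'$ and every merge joins siblings, proved by a first-violation exchange argument in which the witness $a$ in the sibling subtree of $v_A$ satisfies $s(i,a)>s(i,j)$, so $(i,a)$ was scanned earlier and $a$ would already lie in $i$'s component. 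This buys something real: the paper's first step --- that per-triple agreement of deepest LCAs forces identical algorithm output even though LCA agreement does not determine the global sorted order of pairs --- is precisely the robustness claim that needs justification, and the paper asserts it while your invariant actually proves it; your argument is therefore more self-contained, at the cost of more bookkeeping. One caveat: your extraction of the exact tie $d_o(z_i\vee z_k)=d_o(z_j\vee z_k)$ leans on the paper's literal min-based definition of LCA triplets; under the max-based relation $\{z_i,z_j|z_k\}$ used elsewhere in the paper (and the one its other lemmas actually establish for spread embeddings), that equality need not hold. Fortunately your argument never uses the equality --- only the strict domination $s(i,j)>\max\{s(i,k),s(j,k)\}$ whenever $\{i,j|k\}_{T'}$ holds, which follows under either reading --- so nothing breaks, but you should state the extracted consequence as the strict inequality alone rather than as an ultrametric-type identity.
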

\begin{proof}
To be a little more formal, define $\dec(\cdot)$ to depend on three things: a set of points ($Z$ or $T$), a base point ($z_0=o$ or $0$), and a LCA construction ($\vee_{\mathbb{B}_2}$ or $\vee_T$). 

Let $T'$ be the rooted binary tree that is obtained by removing the leaf $0$ from $T$, and relabelling its neighbor $r$ in $T$ as $0$, the root of $T'$~(see~\cref{fig:unrooted_tree} and~~\cref{fig:rooted_tree}).
Because $Z$ and $T$ are in LCA agreement, the tree constructed from $\dec(Z; o, \vee_{\mathbb{B}_2})$ is exactly the same as the tree constructed from $\dec(T; 0, \vee_T)$.
Thus we only have to show that $T'= \dec(T; 0, \vee_T)$.

We use an induction argument to show that  $T'= \dec(T; 0, \vee_T)$.
Let $r$ be the unique neighbor of $0$ in $T$, and let $T_0$ and $T_1$ be the two trees on the other edges of $r$ aside from the one pointing to $0$~(\cref{fig:decoding_proof}).
Note that any distances $d_T(0, i \vee j)$ for any $i, j$ both contained in $T_0$ or $T_1$ are strictly smaller than distances $d_T(0, i \vee j)$ for any $i \in T_0, j \in T_1$.
Therefore the decoding algorithm merges all pairs $(i, j) \in T_0$ and $(i, j) \in T_1$ first.
Inductively, this decoding algorithm will exactly return $T_0$ and $T_1$.

Finally, the decoding algorithm will merge any two $i \in T_0, j \in T_1$. This creates a new parent and connects them to the roots of $T_0, T_1$ (Algorithm 1, line 7).
The structure of this tree is therefore the same as the tree rooted at $r$, in other wise $T'$, as desired.
\end{proof}
\begin{figure}[t]
    \centering
    \begin{subfigure}[b]{0.19\textwidth}
         \centering
         \includegraphics[width=0.9\textwidth]{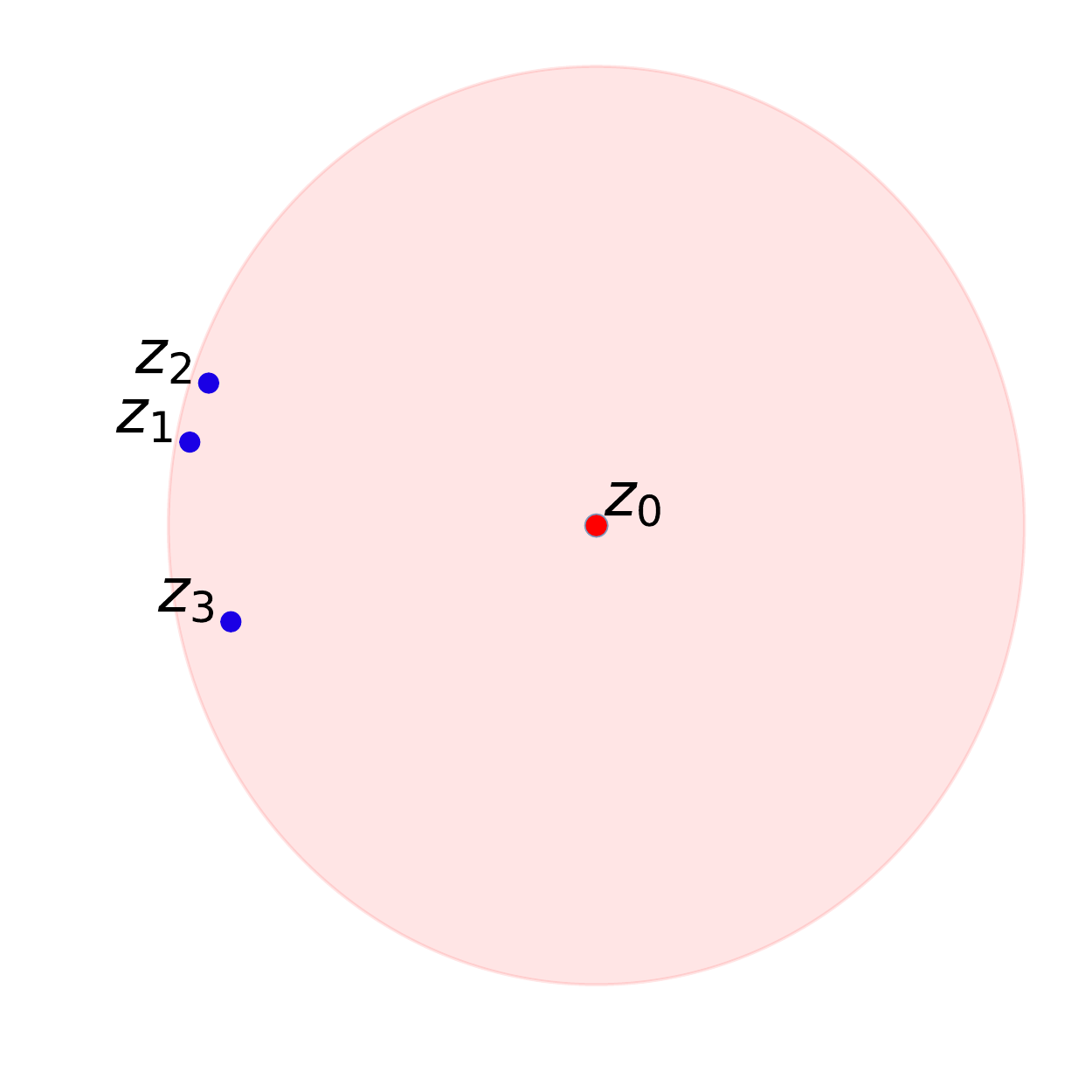}
         \caption{}\label{fig:leaved_embeddings}
     \end{subfigure}
    \begin{subfigure}[b]{0.19\textwidth}
         \centering
         \includegraphics[width=0.9\textwidth]{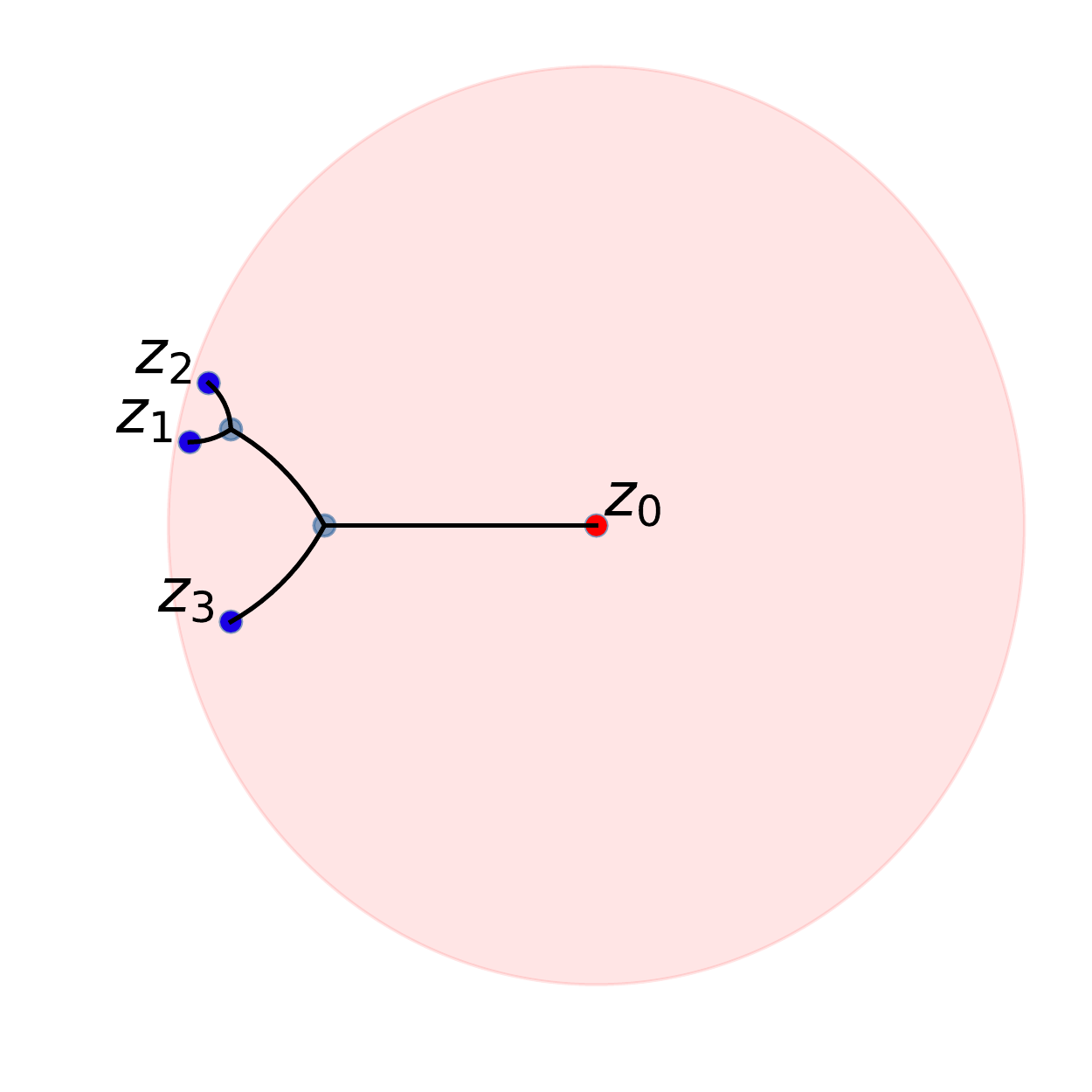}
         \caption{}\label{fig:quasi}
     \end{subfigure}
     \begin{subfigure}[b]{0.19\textwidth}
         \centering
         \begin{istgame}[font=\footnotesize]
        \xtdistance{7mm}{16mm}
        \istroot(0){$0$}
          \istb{}[l]{}[l]
          \endist
        \xtdistance{7mm}{10mm}
        \istroot(1)(0-1)<135>{$r$}[l]
          \istb{}[l]{$3$}[l]
          \istb{}[r]{}[r]
          \endist
        \xtdistance{7mm}{10mm}
        \istroot(2)(1-2)<45>{}
          \istb{}[l]{$2$}[l]
          \istb{}[r]{$1$}[r]
          \endist
        \end{istgame}
         \caption{}\label{fig:unrooted_tree}
     \end{subfigure}
        \begin{subfigure}[b]{0.19\textwidth}
         \centering
         \begin{istgame}[font=\footnotesize]
        \xtdistance{7mm}{10mm}
        \istroot(1){$0$}
          \istb{}[l]{$3$}[l]
          \istb{}[r]{}[r]
          \endist
        \xtdistance{7mm}{10mm}
        \istroot(2)(1-2)<45>{}
          \istb{}[l]{$2$}[l]
          \istb{}[r]{$1$}[r]
          \endist
        \end{istgame}
         \caption{}\label{fig:rooted_tree}
     \end{subfigure}
     \begin{subfigure}[b]{0.19\textwidth}
         \centering
         \resizebox{\textwidth}{!}{\renewcommand{\arraystretch}{0.95}
        \begin{tikzpicture}[
  inner/.style={fill = light blue,circle,draw,thick,minimum width=5mm,inner sep=0},
  small inner/.style={inner,minimum width = 3mm,yshift={1mm}},
  triangle/.style={fill = light blue,isosceles triangle,draw=,thick,shape border rotate=90,isosceles triangle stretches=true, minimum height=20mm,minimum width=15mm,inner sep=0,yshift={-5mm}},
  small triangle/.style={triangle, minimum height = 8mm, minimum width = 6mm },
  large triangle/.style={triangle,minimum width = 27mm,minimum height=36mm,yshift={-11mm}},
  very large triangle/.style={triangle,minimum width = 33mm,minimum height=44mm,yshift={-11mm}},
  level 1/.style={sibling distance=50mm},
  level 2/.style={sibling distance=25mm},
  level 3/.style={sibling distance=25mm},
  level 4/.style={sibling distance=25mm},
  level 4/.style={sibling distance=15mm},
  level 5/.style={sibling distance=7mm},
]
  \node[inner] {$0$}
     [child anchor=north]
    child {node[inner] {$r$}
        child {node[triangle,yshift={-3mm}] (a) {$T_0$}}
        child {node[triangle,yshift={-3mm}] (a) {$T_1$}}
        }
        ;
    \end{tikzpicture}}
         \caption{}\label{fig:decoding_proof}
     \end{subfigure}
    \caption{(a): Leaves embeddings and the origin in $\mathbb{B}_2$. (b): Binary tree that is a quasi-isometry of the leaves embeddings and the origin~(\cref{prop:treelike}) and corresponding discrete binary tree (not rooted) in (c). 
    (d): Rooted binary tree obtained by decoding hyperbolic leaves embeddings.
    Observe that the quasi-isometric tree in (c) and the decoded tree in (d) are LCA equivalent with respect to $0$.
    (e): Figure used in induction proof in~\cref{lemma:decoding}.
    }\label{fig:rooted_unrooted}
\end{figure}

\subsubsection{Any spread embedding is in LCA agreement with some binary tree}\label{subsec:lca_agreement_condition}
\cref{lemma:decoding} shows that under the LCA agreement condition, the decoding algorithm will produce a discrete tree that preserves the ordering of triplets (in terms of LCAs' depths), thus the continuous and discrete costs are close.
Furthermore, we know by~\cref{prop:treelike} that any embedding has a corresponding quasi-isometric binary tree.  
Using this, we seek a condition under which a quasi-isometric pair $(Z, T)$ is in LCA agreement. 

First, we show that under a quasi-isometric embedding~(\cref{defn:quasi}), the hyperbolic LCA depth $d_o(z_i \vee z_j)$ is close to the tree LCA depth $d_0(i \vee j)$~(\cref{lmm:depth}). 
Next, we derive a condition under which a quasi-isometric pair $(Z, T)$ is in LCA agreement~(\cref{lemma:lca_agreement}).
Finally, we show that any spread embedding satisfies this condition, and more specifically, we show that any spread embedding is in LCA agreement with some binary tree, such that the ``deepest'' LCA is always distinguishable from the embeddings~(\cref{lmm:decode-condition}).

\begin{lemma}[Hyperbolic LCA depth is close to Tree LCA depth]%
    \label{lmm:depth}
    Let $Z\subset\mathbb{B}_2^{n+1}$ be an embedding with $z_0 = o$ and $T$ a binary tree (not rooted) on $(n+1)$ leaves such that $(Z, T)$ is $(1+\varepsilon, \kappa)$-quasi-isometric. 
    If $M$ is the diameter of $T$, then for any leaves $1\le i, j, k, l\le n$:
    \begin{equation}
        \label{eq:lca-depths}
        \left| (d_o(z_i \vee z_j) - d_o(z_k \vee z_l))
        - (d_0(i \vee j) - d_0(k \vee l)) \right|
        \le \frac{3\varepsilon}{2}M + \frac{3}{2}\kappa + \delta
        .
    \end{equation}
\end{lemma}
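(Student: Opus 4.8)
The plan is to reduce everything to the Gromov product, where the quasi-isometry can be applied term by term. First I would invoke \cref{lmm:lca-gromov} to replace each LCA depth by a Gromov product with respect to the base point (the origin for $\mathbb{B}_2$, the node $0$ for $T$). For the tree this is exact, $d_0(i\vee j)=\langle i,j\rangle$, while for the embedding it costs an additive error in $[0,\delta]$ at each of the two LCA terms $d_o(z_i\vee z_j)$ and $d_o(z_k\vee z_l)$. Upon taking the difference of these two terms, the two individual errors combine into a single error of absolute value at most $\delta$, which is exactly the $\delta$ appearing on the right-hand side. After this step it suffices to bound $|(\langle z_i,z_j\rangle-\langle z_k,z_l\rangle)-(\langle i,j\rangle-\langle k,l\rangle)|$ by $\tfrac{3\varepsilon}{2}M+\tfrac{3}{2}\kappa$.

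Next I would expand each Gromov product using \cref{def:gromov} with the base point, so that $\langle z_a,z_b\rangle-\langle a,b\rangle=\tfrac12\big[(d_o(z_a)-d_0(a))+(d_o(z_b)-d_0(b))-(d(z_a,z_b)-d(a,b))\big]$, using that the node depths are just distances to the base point. Introducing the shorthand $\Delta_{ab}=d(z_a,z_b)-d(a,b)$ for every pair (including pairs with the base node $0$), the quasi-isometry inequality of \cref{def:quasi} gives $0\le\Delta_{ab}\le\varepsilon\,d(a,b)+\kappa\le\varepsilon M+\kappa$, where the last step uses that every tree distance is at most the diameter $M$. This is the only place the diameter enters, and it is precisely what converts the multiplicative distortion $\varepsilon\,d(a,b)$ into a uniform additive bound.

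Substituting, the quantity to bound becomes $\tfrac12\big[\Delta_{0i}+\Delta_{0j}+\Delta_{kl}-\Delta_{ij}-\Delta_{0k}-\Delta_{0l}\big]$, a signed sum of six nonnegative terms each lying in $[0,\varepsilon M+\kappa]$. Since every $\Delta_{ab}\ge 0$, the bracket is maximized (value $3(\varepsilon M+\kappa)$) when the three positively signed terms are maximal and the negatively signed ones vanish, and minimized symmetrically; hence the half-sum has absolute value at most $\tfrac{3}{2}(\varepsilon M+\kappa)=\tfrac{3\varepsilon}{2}M+\tfrac{3}{2}\kappa$. Adding back the $\delta$ from the first step yields the claimed bound. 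The step I would be most careful about is the sign bookkeeping in this six-term sum: one must combine nonnegativity of each $\Delta_{ab}$ (the lower half of the quasi-isometry) with the uniform upper bound $\varepsilon M+\kappa$ to obtain the constant $3/2$ rather than a weaker one, and to confirm that exactly three terms carry each sign so that the bound does not degrade.
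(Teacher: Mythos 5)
Your proposal is correct and follows essentially the same route as the paper's proof: both rest on \cref{lmm:lca-gromov} to pass from LCA depths to Gromov products (picking up a single one-sided $\delta$ in the difference), expand the Gromov products via \cref{def:gromov}, and apply the quasi-isometry of \cref{def:quasi} with the diameter bound $d(a,b)\le M$ to obtain $\frac{3\varepsilon}{2}M+\frac{3}{2}\kappa+\delta$. The only difference is presentational: the paper derives two-sided bounds on each hyperbolic LCA depth and subtracts, whereas you isolate the six deviations $\Delta_{ab}$ and count signs, which is an equivalent (and arguably cleaner) bookkeeping of the same estimate.
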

\begin{proof}%
    We bound the hyperbolic LCA depth by the corresponding tree depths:
     \begin{align*}
        d_o(z_i \vee z_j) &\ge \langle z_i, z_j \rangle
        \\
        &= \frac{1}{2}\left( d_o(z_i) + d_o(z_j) - d(z_i, z_j) \right)
        \\
        &\ge \frac{1}{2}\left( d_0(i) + d_0(j) - (1+\varepsilon)d(i, j) - \kappa \right)
        \\
        &= d_0(i \vee j) - \frac{\varepsilon}{2}d(i, j) - \frac{1}{2}\kappa
        .
    \end{align*}
    The first line is \cref{lmm:lca-gromov}, the second line is \cref{def:gromov}, and the third applies \cref{def:quasi}.

    In the other direction, again using \cref{lmm:lca-gromov}, \cref{def:gromov}, and \cref{def:quasi}:
        \begin{align*}
        d_o(z_i \vee z_j) &\le \langle z_i, z_j \rangle + \delta
        \\
        &= \frac{1}{2}\left( d_o(z_i) + d_o(z_j) - d(z_i, z_j) \right) + \delta
        \\
        &\le \frac{(1+\epsilon)\left( d_0(i) + d_0(j) \right) + 2\kappa}{2} - \frac{d(i, j)}{2} + \delta
        \\
        &\le \frac{1}{2}\left( d_0(i) + d_0(j) - d(i, j) \right) + \frac{\epsilon}{2}(d_0(i)+d_0(j)) + \kappa + \delta
        \\
        &= d_0(i\vee j)+ \frac{\varepsilon}{2}d_0(i) + \frac{\varepsilon}{2}d_0(j) + \kappa + \delta
        .
    \end{align*}
    Finally, \cref{eq:lca-depths} follows by adding these inequalities with $d(i, j) \le M$ for any $0 \le i, j \le n$.
\end{proof}

\cref{lmm:depth} allows us to provide a concrete condition for when $(Z, T)$ are in LCA agreement,
which will be our main tool for showing the consistency of our relaxation.
\begin{lemma}\label{lemma:lca_agreement}
    \label{lmm:isometry-agreement}
    Let $Z\subset\mathbb{B}_2^{n+1}$ be an embedding with $z_0 = o$ and $T$ a binary tree (not rooted) on $(n+1)$ leaves, such that $(Z, T)$ is $(1+\varepsilon, \kappa)$-quasi-isometric.
    Define $M$ to be the diameter of $T$ and $m$ to be the length of the smallest edge not including a leaf of $T$.
    If $m > \frac{3\varepsilon}{2} M + \frac{3}{2}\kappa + \delta$, then $(Z, T)$ are in LCA agreement.
\end{lemma}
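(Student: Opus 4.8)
The plan is to establish LCA agreement one triple at a time: for each triple $(i,j,k)$ I will show that the pair whose tree LCA is deepest (with respect to the root $0$) is exactly the pair whose hyperbolic LCA is \emph{strictly} deepest in the embedding (with respect to $o$). Since $T$ is binary, for each triple exactly one split holds, so matching this dominant split across $Z$ and $T$ for every triple is precisely what LCA agreement demands. The whole argument is driven by \cref{lmm:depth}, which controls the distortion of \emph{differences} of LCA depths under a quasi-isometry, so the strategy is to find a tree-side gap large enough to survive that distortion.

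First I would fix a triple and assume without loss of generality that the tree split is $\{i,j|k\}$, so that $i\vee j$ is a strict descendant of $i\vee j\vee k = i\vee k = j\vee k$ and $d_0(i\vee j) > d_0(i\vee k) = d_0(j\vee k)$. The crucial quantitative point is that the node $i\vee j$ sits strictly below the common node $i\vee k$: every edge on the path from $i\vee k$ down to $i\vee j$ joins two internal nodes (each an LCA of two leaves, hence of degree $3$), so none of these edges is incident to a leaf, and each therefore has length at least $m$. Consequently the gap satisfies
\[
    d_0(i\vee j) - d_0(i\vee k) = d_T(i\vee k,\, i\vee j) \ge m .
\]

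Next I transfer this gap to the embedding. Writing $\eta := \tfrac{3\varepsilon}{2}M + \tfrac{3}{2}\kappa + \delta$ for the bound in \cref{lmm:depth}, I apply that lemma to the pairs $(i,j),(i,k)$ and then to $(i,j),(j,k)$, whose tree-side depth differences are both at least $m$, to obtain
\[
    d_o(z_i\vee z_j) - d_o(z_i\vee z_k) \ge \big(d_0(i\vee j) - d_0(i\vee k)\big) - \eta \ge m - \eta,
\]
and the analogous inequality with $z_j\vee z_k$ in place of $z_i\vee z_k$. The hypothesis $m > \tfrac{3\varepsilon}{2}M + \tfrac{3}{2}\kappa + \delta = \eta$ makes both right-hand sides strictly positive, so $z_i\vee z_j$ is the strictly deepest hyperbolic LCA of the triple. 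Thus the embedding selects the same dominant split $\{i,j|k\}$ as the tree, for every triple, giving LCA agreement.

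The main obstacle is the gap step: one must argue that the non-dominant split is separated from the dominant one by a genuine internal edge, so that the separation is bounded below by $m$ rather than being arbitrarily small. It is exactly this structural lower bound—matched against the distortion $\eta$ of \cref{lmm:depth} through the assumption $m > \eta$—that lets the strict ordering of LCA depths pass from $T$ to $Z$. The only remaining care is bookkeeping in the invocation of \cref{lmm:depth}: choosing the leaf pairs so that the tree-side difference is the $\ge m$ gap for the dominant-versus-shallow comparisons (which forces $z_i\vee z_j$ to remain on top) and is $0$ for the shallow-versus-shallow comparison (which confirms that neither shallow LCA can rise to overtake the dominant one), thereby pinning down the deepest split unambiguously.
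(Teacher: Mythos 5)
Your proposal is correct and follows essentially the same route as the paper's proof: fix a triple with tree split $\{i,j|k\}$, observe that $i\vee j$ and $i\vee k$ are internal nodes so $d_0(i\vee j)-d_0(i\vee k)\ge m$, and apply \cref{lmm:depth} to conclude $d_o(z_i\vee z_j)-d_o(z_i\vee z_k)\ge m-\left(\frac{3\varepsilon}{2}M+\frac{3}{2}\kappa+\delta\right)>0$, and similarly against $z_j\vee z_k$. Your extra elaboration on why every edge of the path from $i\vee k$ to $i\vee j$ avoids leaves is a finer-grained justification of the same gap the paper invokes in one line, and the final "shallow-versus-shallow" check is harmless but unnecessary.
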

\begin{proof}%
    Consider an arbitrary triple $(i, j, k)$, and suppose $\{ij | k\}_T$ holds.
    It suffices to show that $\{z_i, z_j | z_k\}_{\mathbb{B}_2}$ holds to show that $(Z, T)$ are in LCA agreement~(\cref{defn:lca_agreement}).
    Applying \cref{lmm:depth},
    \begin{align*}
        d_o(z_i \vee z_j) - d_o(z_i \vee z_k)
        &\ge 
        d_0(i \vee j) - d_0(i \vee k) - \left( \frac{3\varepsilon}{2}M + \frac{3}{2}\kappa + \delta \right)
        \\
        &\ge m - \frac{3\varepsilon}{2}M - \frac{3}{2}\kappa - \delta
        > 0
        .
    \end{align*}
    In the second line, we used the fact that since $i \vee j$ and $i \vee k$ are both internal nodes in $T$, their distance is at least $m$ by assumption (also, $i\vee j$ is deeper than $i \vee k$ by assumption, so the sign of the difference is positive).
    Similarly $d_o(z_i \vee z_j) - d_o(z_j \vee z_k) > 0$, so $z_i \vee z_j$ is the deepest out of the 3 LCAs, as desired.
\end{proof}

\cref{lmm:isometry-agreement} gives a technical condition for when an embedding $Z$ is in LCA agreement with a tree $T$.
Next, we claim that the constrained set of spread embeddings $\mathcal{Z} \subseteq \mathbb{B}_2^n$ is such that for every embedding $Z \in \mathcal{Z}$, there is always a corresponding tree $T$ satisfying \cref{lmm:isometry-agreement}.
Intuitively, \cref{lmm:isometry-agreement} says that $m$ should be large, i.e.\ the internal edges of the tree should be spread far apart.
Using our notion of hyperbolic LCA, we can codify this by enforcing that different LCAs should be far from each other.
By leveraging global properties of hyperbolic space, this is in fact sufficient.
\begin{lemma}%
    \label{lmm:decode-condition}
    Suppose that embedding $Z\in\mathcal{Z}\subset\mathbb{B}_2^n$ is spread.
    Then there exist a binary tree (not rooted) on $(n+1)$ leaves $T$ such that $(Z\cup\{z_0=o\}, T)$ are in LCA agreement.
    In particular, for any $1\le i, j, k\le n$ such that $\{i, j|k\}_T$ holds, then:
    \begin{align*}
        d_o(z_i\vee z_j)&>\mathrm{max}\{d_o(z_i\vee z_k), d_o(z_j\vee z_k)\} + 1.
    \end{align*}
\end{lemma}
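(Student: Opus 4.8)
The plan is to take the tree supplied by the tree-likeness result and verify directly that it is the one we want, rather than appealing to \cref{lmm:isometry-agreement} (whose hypothesis on the minimal internal edge length $m$ is not guaranteed by \cref{prop:treelike}). First I would apply \cref{prop:treelike} to the augmented set of $n+1$ points $Z\cup\{z_0=o\}$; by the remark following it, this yields a binary tree $T$ on leaves $0,1,\dots,n$ for which $(Z\cup\{z_0=o\}, T)$ is $(1, C_n)$-quasi-isometric, i.e.\ \cref{defn:quasi} holds with $\varepsilon=0$ and $\kappa=C_n$. Specializing \cref{lmm:depth} to this case, every LCA-depth difference in the embedding agrees with the corresponding tree difference up to $\beta:=\tfrac{3}{2}C_n+\delta$. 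The key bookkeeping observation is that the spread threshold in \cref{def:spread-detailed} is exactly $3C_n+2\delta+1 = 2\beta+1$.

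Next I would fix a triplet $(i,j,k)$ and, since $T$ is binary, assume without loss of generality that $\{i,j|k\}_T$ holds, so that $d_0(i\vee j)\ge d_0(i\vee k)=d_0(j\vee k)$. Applying \cref{lmm:depth} to the pairs $(i,k)$ and $(j,k)$ and using $d_0(i\vee k)=d_0(j\vee k)$ gives $|d_o(z_i\vee z_k)-d_o(z_j\vee z_k)|\le\beta$, so two of the three hyperbolic LCA depths are within $\beta$ of each other. Combined with the spread gap $\max-\min>2\beta+1$, this forces the remaining depth $d_o(z_i\vee z_j)$ to be either strictly larger than both of the other two or strictly smaller than both, since otherwise all three would lie in a window of width at most $\beta<2\beta+1$, contradicting spread.

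I would then rule out the second alternative using the one-sided bound from \cref{lmm:depth} on the pairs $(i,j)$ and $(i,k)$: since $d_0(i\vee j)-d_0(i\vee k)\ge 0$, we get $d_o(z_i\vee z_j)-d_o(z_i\vee z_k)\ge -\beta$, i.e.\ $d_o(z_i\vee z_k)-d_o(z_i\vee z_j)\le\beta$. Were $d_o(z_i\vee z_j)$ the minimum of the three, the spread gap would force $d_o(z_i\vee z_k)-d_o(z_i\vee z_j)>\beta$, a contradiction. Hence $d_o(z_i\vee z_j)$ is the maximum. Finally, writing the spread gap as $d_o(z_i\vee z_j)-\min\{d_o(z_i\vee z_k), d_o(z_j\vee z_k)\}>2\beta+1$ and subtracting the $\le\beta$ gap between the two smaller depths yields $d_o(z_i\vee z_j)>\max\{d_o(z_i\vee z_k), d_o(z_j\vee z_k)\}+\beta+1>\max\{d_o(z_i\vee z_k), d_o(z_j\vee z_k)\}+1$, which is both the claimed $+1$ separation and, since the deepest embedded LCA matches the tree relation $\{i,j|k\}_T$ for every triplet, the statement that $(Z\cup\{z_0=o\}, T)$ are in LCA agreement.

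The main obstacle, and the reason the argument does not simply quote \cref{lmm:isometry-agreement}, is that \cref{prop:treelike} offers no control on the internal edge lengths of $T$, so the tree depth differences $d_0(i\vee j)-d_0(i\vee k)$ can be arbitrarily small and the spread hypothesis must do all the separating work. The device that makes this succeed is the binary structure, which guarantees that exactly two of the three tree LCAs coincide; this pins two of the hyperbolic depths together within $\beta$, and the $2\beta+1$ spread gap then singles out the deepest LCA with a full margin of $\beta+1>1$ to spare.
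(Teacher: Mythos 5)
Your argument is correct, and it takes a genuinely more direct route than the paper's for the core of the proof. Both proofs begin identically: apply \cref{prop:treelike} to $Z\cup\{z_0=o\}$ to obtain a binary tree $T$ that is $(1,C_n)$-quasi-isometric, then use \cref{lmm:depth} with $\varepsilon=0$, $\kappa=C_n$ so that LCA-depth differences transfer between $T$ and $\mathbb{B}_2$ up to $\beta=\tfrac{3}{2}C_n+\delta$. From there the paper takes a structural detour: it shows every internal edge of $T$ has endpoints $i\vee j$ and $i\vee j\vee k$ for some triplet with $\{i,j|k\}_T$, combines the spread gap with \cref{lmm:depth} to lower-bound every internal edge length by $\tfrac{3}{2}C_n+\delta+1$ (\cref{eq:min_edge}), and only then invokes \cref{lmm:isometry-agreement} to conclude LCA agreement, reusing the edge bound a second time for the $+1$ margin. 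You bypass \cref{lmm:isometry-agreement} and the edge-length characterization entirely, arguing triplet-by-triplet: binarity forces the tie $d_0(i\vee k)=d_0(j\vee k)$, so \cref{lmm:depth} pins $d_o(z_i\vee z_k)$ and $d_o(z_j\vee z_k)$ within $\beta$ of each other, and the spread gap $2\beta+1$ together with your one-sided bound $d_o(z_i\vee z_j)\ge d_o(z_i\vee z_k)-\beta$ leaves no possibility except that $d_o(z_i\vee z_j)$ is the strict maximum, with margin $\beta+1$; this yields LCA agreement and the $+1$ separation in a single pass (your case analysis and the bookkeeping $3C_n+2\delta+1=2\beta+1$ are both sound). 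What the paper's route buys is modularity and a reusable structural fact---spread embeddings correspond to quasi-isometric trees with uniformly long internal edges, and \cref{lmm:isometry-agreement} is stated for general $(1+\varepsilon,\kappa)$-quasi-isometries, which the paper needs again with $\varepsilon\neq 0$ in \cref{lmm:encode}---whereas your route is shorter, avoids the care needed around the leaf $0$ in the internal-edge characterization (the paper's footnote), and even gives the slightly stronger margin $\beta+1$. One correction to your framing: the paper does not leave the hypothesis of \cref{lmm:isometry-agreement} unverified, as your parenthetical suggests; it derives the minimum-edge-length condition from the spread assumption before quoting that lemma, so the two proofs differ in mechanism, not in rigor.
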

\begin{proof}%
    Append the origin $z_0$ to $Z$ (so we have a collection of $n+1$ points).
    By \cref{prop:treelike}, there is a binary tree $T$ (not rooted) on $(n+1)$ leaves that is $(1, C_n)$-quasi-isometric for $Z\cup \{z_0\}$, where $C_n=\delta\cdot\mathcal{O}(n)$.
    Consider an internal edge $e \in T$, i.e. an edge connecting non-leaf nodes.
    Since $T$ is binary, $e$ has endpoints $i \vee j$, $i \vee j \vee k$ for some triplet $\{ij | k\}_T$ with $1\le i, j, k\le n$.\footnote{Note that triplets here are defined on leaves $[n]$, excluding $0$. The reason is that, since the LCA is computed with respect to $0$ which is a leaf in $T$, $0\vee i = 0\ \forall i$, and therefore there is no internal edge whose endpoint is an LCA on $0$.}

    Let $a, b, c$ be the ordering of $i, j, k$ such that $d_o(z_a \vee z_b) > d_o(z_b \vee z_c) > d_o(z_a \vee z_c)$.
    Consider:
    \begin{align*}
        d_0(a \vee b) - d_0(a \vee c)
        &\ge d_o(z_a \vee z_b) - d_o(z_a \vee z_c) -\left( \frac{3}{2}C_n + \delta \right) 
        \\
        &> \frac{3}{2}C_n + \delta + 1
        ,
    \end{align*}
    where the first line applies \cref{lmm:depth} with $\varepsilon=0, \kappa=C_n$, and the second uses the definition of spread~(\cref{def:spread-detailed}), which holds for any $1\le a, b, c\le n$.
    In particular, $d_0(a \vee b) - d_0(a \vee c) > 0$, so clearly $\{a,b|c\}_T$ holds since $T$ is binary, and we must have:
    \[
        d_0(a \vee b) - d_0(a \vee c) = d_0(i \vee j) - d_0(i \vee j \vee k) = d(i\vee j, i \vee j \vee k),
    \]
    which is the length of the edge $e$ we are considering.
    Since this holds generically for any edge $e$ among internal nodes of $T$, this also holds for the minimum edge length $m$:
    \begin{equation}\label{eq:min_edge}
        m >\frac{3}{2}C_n + \delta +1.
    \end{equation}
    In particular, the conditions of \cref{lmm:isometry-agreement} apply;
    that is, $m$ satisfies $m > \frac{3\varepsilon}{2}M + \frac{3}{2}\kappa + \delta$ for $\varepsilon=0, \kappa=C_n$.
    Applying~\cref{lmm:isometry-agreement}, we get that $(Z\cup\{z_0\}, T)$ are in LCA agreement.
    
    We now turn to the second part of~\cref{lmm:decode-condition} which bounds the difference in LCA depth. Let $1\le i, j, k\le n$ such that $\{i, j|k\}_T$ holds. Since $(Z, T)$ are in LCA agreement, we can assume WLOG that $d_o(z_i\vee z_j) > d_o(z_i\vee z_k) > d_o(z_j\vee z_k)$.
    Applying~\cref{lmm:depth} again, we have:
    \begin{align*}
        d_o(z_i\vee z_j)-d_o(z_i\vee z_k) & \ge d_0(i\vee j) - d_0(i\vee k) - (\frac{3}{2} C_n + \delta)\\
        &\ge m - \frac{3}{2}C_n - \delta\\
        &> 1,
    \end{align*}
    where we used~\cref{eq:min_edge} in the last inequality.
\end{proof}

\subsubsection{The tree decoded from any spread embedding has a cost close to the \name{} cost}\label{subsubsec:proof_decode}
We now have all the tools to show that any spread embedding decodes to a tree such that the discrete and continuous costs are close~(\cref{lmm:spread-equivalence}).
\decodelemma*
\begin{proof}
Let $Z\in\mathcal{Z}\subset\mathbb{B}_2^n$ be a spread embedding.
Using~\cref{lmm:decode-condition}, we know that there exists a binary tree (not rooted) on $(n+1)$ leaves $T$, such that $(Z\cup\{z_0=o\}, T)$ are in LCA agreement. 
Let $T'$ be the rooted binary tree that is obtained by removing the leaf $0$ from $T$, and relabelling its neighbor $r$ in $T$ as $0$, the root of $T'$.
Using~\cref{lemma:decoding}, we know that $T' = \dec(Z)$ and:
\begin{align*}
    |C_{\mathrm{Dasgupta}}(\dec(Z);w) - C_{\mathrm{\name{}}}(Z;w,\tau)|&\le\sum_{ijk}|w_{ijk}(T';w)-w_{\mathrm{\name{}},ijk}(Z;w,\tau)|.
\end{align*}
Let $\delta_{ijk}\coloneqq|w_{ijk}(T';w)-w_{\mathrm{\name{}},ijk}(Z;w,\tau)|$. 
WLOG, assume that $\{i, j|k\}_{T'}$ holds for a triplet $(i, j, k)\in T'$.
$T$ and $T'$ are equivalent in the LCA agreement sense with respect to $0$, since LCA agreement in defined over leaves in $[n]$. 
That is, the LCA of any pair $i,j\in[n]$ with respect to $0$ in $T$ is the same as the LCA with respect to $0$ in $T'$. 
Therefore using the definition of LCA agreement, we have:
$$d_1\coloneqq d_o(z_i\vee z_j )\ge \mathrm{max}\{d_o(z_i\vee z_k ), d_o(z_j\vee z_k )\}\coloneqq d_2.$$ 
Denote $w^*_{ijk}\coloneqq\mathrm{max}\{|w_{ij}|, |w_{ik}|, |w_{jk}|\}$ and $\Sigma_{ijk}\coloneqq e^{d_o(z_i\vee z_j )/\tau} + e^{d_o(z_i\vee z_k )/\tau} + e^{d_o(z_j\vee z_k )/\tau}$. 
Then:
\begin{equation}
    \begin{split}
        \delta_{ijk}&=\bigg|w_{ij}\bigg(1-\frac{e^{d_o(z_i\vee z_j )/\tau}}{\Sigma_{ijk}}\bigg) + w_{ik}\bigg(\frac{e^{d_o(z_i\vee z_k )/\tau}}{\Sigma_{ijk}}\bigg)+w_{jk}\bigg(\frac{e^{d_o(z_j\vee z_k )/\tau}}{\Sigma_{ijk}}\bigg)\bigg|\\
        &\le 2w^*_{ijk}\bigg(\frac{e^{d_o(z_i\vee z_k )/\tau} + e^{d_o(z_j\vee z_k )/\tau}}{\Sigma_{ijk}}\bigg)\\
        &\le 4 w^*_{ijk} e^{(d_{2}-d_1)/\tau}\\
        &\le 4 w^*_{ijk} e^{-1/\tau}.
    \end{split}
\end{equation}
In the last line, we applied the second part of~\cref{lmm:decode-condition}.
\end{proof}

\subsection{Proof of~\cref{lmm:encode-decode}}\label{sec:theory-encoding}
We have shown in~\cref{lmm:decode-condition} that any spread embedding gets decoded into a tree such that the continuous and discrete costs are close. 
We now show the other direction, that any rooted binary tree has a corresponding spread embedding which decodes back to it~(\cref{lmm:encode-decode}), and therefore the discrete and continuous costs are close.  

We first recall a result by Sarkar for low-distortion hyperbolic embeddings of trees~\cite{sarkar2011low}.
\begin{proposition}%
    \label{prop:sarkar}
    Any unit-weight tree $T$ can be embedded into $Z$ with scale $\zeta = O(1/\varepsilon)$  and worst-case distortion at most $1+\varepsilon$, i.e.
    \begin{align*}
        \zeta d_T(i, j)\le d(z_i,z_j)\le \zeta (1+\varepsilon)d_T(i, j).
    \end{align*}
\end{proposition}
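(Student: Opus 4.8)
The plan is to prove \cref{prop:sarkar} constructively, reproducing Sarkar's embedding \cite{sarkar2011low} and bounding its distortion by hand. The construction is a recursive \emph{cone embedding} of $T$ into the Poincar\'e disk $\mathbb{B}_2$: place the root at the origin $o$ and process the nodes top-down. To embed the children of an already-placed node $u$, first apply the hyperbolic isometry (a M\"obius transformation of $\mathbb{B}_2$) carrying $z_u$ to $o$; in these coordinates the edge back to the parent of $u$ points along some ray from $o$, and we place each child on the hyperbolic circle of radius $\zeta$ about $o$, assigning the children and the parent-direction equally spaced angular slots of width $\beta = 2\pi/(\deg(u)+1)$; finally we undo the isometry. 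Every tree edge is then realized by a geodesic segment of hyperbolic length exactly $\zeta$, so for adjacent leaves $d(z_i,z_j) = \zeta = \zeta\, d_T(i,j)$ and the bound holds with no distortion.

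For non-adjacent $i,j$ the upper bound $d(z_i,z_j) \le \zeta\, d_T(i,j)$ is immediate from the triangle inequality along the broken geodesic that realizes the tree path $i \leadsto j$. The work is the matching lower bound. I would first analyze a single branch: if two points sit at hyperbolic distance $r$ from a common vertex and subtend an angle $\gamma \ge \beta$, the hyperbolic law of cosines gives $\cosh d = \cosh^2 r - \sinh^2 r \cos\gamma$, and expanding for large $r$ shows the embedded distance undershoots the summed edge lengths $2r$ by at most an additive constant $\kappa(\beta) = O(1)$ depending only on the minimum angle. The tree path $i \leadsto j$ makes at most $d_T(i,j)$ such turns, so accumulating the per-turn deficits yields $d(z_i,z_j) \ge \zeta\, d_T(i,j) - \kappa\, d_T(i,j) = (\zeta-\kappa)\, d_T(i,j)$.

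Combining the two bounds gives multiplicative distortion $\zeta/(\zeta-\kappa) = 1/(1-\kappa/\zeta)$. Since $\kappa$ is independent of $\zeta$, choosing $\zeta \ge \kappa(1+\varepsilon)/\varepsilon = O(1/\varepsilon)$ forces this ratio below $1+\varepsilon$, i.e. $(\zeta/(1+\varepsilon))\, d_T(i,j) \le d(z_i,z_j) \le \zeta\, d_T(i,j)$; rescaling $\zeta$ (equivalently realizing each edge slightly longer than $\zeta$) puts this in the stated normalization $\zeta\, d_T(i,j) \le d(z_i,z_j) \le \zeta(1+\varepsilon)\, d_T(i,j)$ with $\zeta = O(1/\varepsilon)$.

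The main obstacle is the angular control underlying the lower bound, and this is exactly where hyperbolicity is indispensable. I must keep the separation $\beta$ bounded below uniformly so that $\kappa(\beta)$ stays $O(1)$; this is possible because the circumference of a hyperbolic circle of radius $\zeta$ grows like $e^{\zeta}$, leaving room for any bounded fan-out at angles bounded away from $0$, whereas in Euclidean space the circumference grows only linearly and the angles would collapse. The remaining delicate point is turning the single-branch estimate into the path estimate: one must track inductively that the geodesic arriving at each internal node from a distant leaf makes an angle with the outgoing edge that stays bounded away from a straight line, so that the per-turn deficit can be reapplied at the next branch without the errors compounding. This controlled angle-tracking along the path --- equivalently, the statement that the broken geodesic is a uniform quasi-geodesic and hence stays within bounded Hausdorff distance of the true geodesic --- is precisely the thin-triangle tree-likeness of $\delta$-hyperbolic space already exploited in \cref{prop:treelike}.
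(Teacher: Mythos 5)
The paper never proves \cref{prop:sarkar}: it is recalled as a black-box result from Sarkar~\cite{sarkar2011low} (``We first recall a result by Sarkar\dots'') and immediately translated into the quasi-isometry terminology, so there is no internal proof to compare yours against. What you have written is instead a reconstruction of Sarkar's own argument, and the outline is the right one and faithful to that source: the recursive cone embedding with every edge realized as a geodesic of length $\zeta$, the upper bound $d(z_i,z_j)\le \zeta\, d_T(i,j)$ by the triangle inequality, the lower bound via the hyperbolic law of cosines $\cosh d=\cosh^2 r-\sinh^2 r\cos\gamma$ giving a per-turn additive deficit of roughly $-2\log\sin(\gamma/2)$, and the final choice $\zeta=O(\kappa/\varepsilon)$ with a rescaling to match the stated normalization all check out.

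Two caveats. First, your $\kappa(\beta)$ grows like $\log(1/\beta)$, i.e.\ like $\log\Delta$ for maximum degree $\Delta$, so the scale your argument actually delivers is $O((\log\Delta)/\varepsilon)$, not $O(1/\varepsilon)$, for general trees; indeed for a star on $n$ leaves no scale independent of $n$ can achieve distortion $1+\varepsilon$, since $n$ points that are pairwise $2\zeta$ apart inside a ball of radius $\zeta(1+\varepsilon)$ force $n\le e^{O(\zeta)}$ by a packing argument. The proposition as printed shares this looseness, and it is only ever invoked for binary trees (where $\Delta\le 3$ and your bound is genuinely $O(1/\varepsilon)$), but your proof should state the bounded-degree hypothesis explicitly rather than leave it implicit in ``bounded fan-out.'' Second, the sentence ``the tree path makes at most $d_T(i,j)$ such turns, so accumulating the per-turn deficits yields $d(z_i,z_j)\ge(\zeta-\kappa)\,d_T(i,j)$'' is not yet a proof: the law-of-cosines estimate applies to a single triangle, and iterating it along a path with many turns requires carrying an inductive invariant---that the geodesic from $z_i$ to the current vertex arrives within a controlled angle of the incoming edge, or equivalently Sarkar's formulation that distinct subtrees occupy disjoint cones---precisely so the deficits add rather than compound. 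You correctly flag this as the delicate point and name the right tool (thin triangles / quasi-geodesic stability, the same tree-likeness exploited in \cref{prop:treelike}), but it is the core of the argument and remains unexecuted; as it stands the proposal is a sound plan that mirrors the cited proof, not a complete one.
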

In our terminology, this says that if every edge of $T$ is weighed with a scalar $\zeta=\mathcal{O}(1/\varepsilon)$, then there is an embedding $Z$ such that $(Z, T)$ is $(1+\varepsilon, 0)$-quasi-isometric, that is:
\begin{align*}
    d_T(i, j)\le d(z_i, z_j)\le (1+\varepsilon)d_T(i, j).
\end{align*}

We now rely on Sarkar's result to find a spread embedding for a given rooted binary tree.
\begin{lemma}%
    \label{lmm:encode}
    Let $T$ be any unit-weight binary tree (not rooted) on $(n+1)$ leaves.
    Then there is a spread embedding $Z\in\mathcal{Z}\subset\mathbb{B}_2^{n+1}$ with $z_0=0$, such that $(Z, T)$ are in LCA agreement.
\end{lemma}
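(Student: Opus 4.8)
The plan is to produce the required embedding by invoking Sarkar's low-distortion construction (\cref{prop:sarkar}) at a carefully chosen distortion parameter, and then to verify both required properties—LCA agreement and spreadness—by transferring the (scaled) combinatorial structure of $T$ through the quasi-isometry bounds of \cref{lmm:depth}. Concretely, I would first weigh every edge of $T$ by a common scalar $\zeta = \mathcal{O}(1/\varepsilon)$ and apply \cref{prop:sarkar} to obtain an embedding $Z$ with $d_T(i,j) \le d(z_i, z_j) \le (1+\varepsilon) d_T(i,j)$, i.e.\ a $(1+\varepsilon, 0)$-quasi-isometry between $Z$ and the scaled tree. Since hyperbolic isometries act transitively, I may post-compose with an isometry sending the image of leaf $0$ to the origin, so that $z_0 = o$ and the hyperbolic depths $d_o(\cdot)$ are measured from the designated root; this preserves all pairwise distances and hence the quasi-isometry. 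In the scaled metric every internal edge has length exactly $\zeta$ and the diameter is $M = \zeta M_0$, where $M_0 = \mathcal{O}(n)$ is the hop-diameter of $T$.

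For LCA agreement I would apply \cref{lmm:isometry-agreement} with $\kappa = 0$, $m = \zeta$ (the smallest internal edge), and $M = \zeta M_0$: its hypothesis $m > \frac{3\varepsilon}{2}M + \delta$ reads $\zeta\bigl(1 - \frac{3\varepsilon}{2}M_0\bigr) > \delta$, which holds once $\varepsilon$ is small. For spreadness I would fix any triplet $(i,j,k)$ among the leaves $[n]$; since $T$ is binary, one relation, say $\{i,j|k\}_T$, holds, and the two internal nodes $i\vee j$ and $i\vee j\vee k = i\vee k = j\vee k$ are joined by a path containing at least one internal edge, so $d_0(i\vee j) - d_0(i\vee k) \ge \zeta$. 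Applying \cref{lmm:depth} (with this $\varepsilon$, $\kappa = 0$, $M = \zeta M_0$) to the pairs $(i,j)$ and $(i,k)$ transfers this gap to the embedding:
\[
d_o(z_i\vee z_j) - d_o(z_i\vee z_k) \ge \zeta\Bigl(1 - \tfrac{3\varepsilon}{2}M_0\Bigr) - \delta,
\]
and the identical bound holds against $(j,k)$, so the max-minus-min of the three hyperbolic LCA depths is at least $\zeta(1 - \frac{3\varepsilon}{2}M_0) - \delta$. It then remains to force this quantity above the spread threshold $3C_n + 2\delta + 1$ of \cref{def:spread-detailed}.

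The crux is choosing $\varepsilon$ (equivalently $\zeta$) so that both requirements hold at once, and the key observation is that the distortion penalty stays bounded while the scale grows: because $\zeta = \mathcal{O}(1/\varepsilon)$, the product $\frac{3\varepsilon}{2}\zeta M_0 = \mathcal{O}(M_0) = \mathcal{O}(n)$ is independent of $\varepsilon$, whereas $\zeta \to \infty$ as $\varepsilon \to 0$. Thus $\zeta(1 - \frac{3\varepsilon}{2}M_0) = \zeta - \mathcal{O}(n)$, and since the spread threshold $3C_n + 2\delta + 1 = \delta\cdot\mathcal{O}(n)$ (with $C_n$ from \cref{prop:treelike}) is likewise only $\mathcal{O}(n)$, taking $\varepsilon = \mathcal{O}(1/n)$—so $\zeta = \Omega(n)$ with a large enough constant—makes $\zeta(1-\frac{3\varepsilon}{2}M_0) - \delta$ exceed both $\delta$ (for LCA agreement) and $3C_n + 2\delta + 1$ (for spreadness). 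With such a choice $Z$ is spread and $(Z, T)$ are in LCA agreement, which is exactly the claim. I expect the main obstacle to be precisely this parameter bookkeeping—checking that a single scale $\zeta$ can be pushed large enough to dominate the $\delta\cdot\mathcal{O}(n)$ additive terms inherited from \cref{prop:treelike} without the accompanying distortion $\varepsilon$ reintroducing an $\varepsilon$-dependent error that cancels the gain; the resolution is the inverse relationship $\zeta\varepsilon = \mathcal{O}(1)$ noted above, which keeps the penalty additive and $\mathcal{O}(n)$ rather than growing with $\zeta$.
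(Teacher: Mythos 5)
Your proposal is correct and follows essentially the same route as the paper's proof: embed the $\zeta$-scaled tree via Sarkar's construction (\cref{prop:sarkar}), move $z_0$ to the origin by an isometry, transfer the tree's $\zeta$-sized LCA-depth gaps to hyperbolic LCA depths via \cref{lmm:depth}, and invoke \cref{lmm:isometry-agreement} for LCA agreement. Your parameter bookkeeping ($\varepsilon = \mathcal{O}(1/n)$, $\zeta = \Omega(n)$ with a large constant, using $\zeta\varepsilon = \mathcal{O}(1)$ to keep the distortion penalty additive) matches the paper's explicit choices $\varepsilon \le \frac{1}{3n}$ and $\zeta > 6C_n + 6\delta + 2$.
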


\begin{proof}%
    Let $\varepsilon >0$.
    Put a weight $\zeta$ ($\zeta$ to be decided later) on every edge of $T$.
    Using \cref{prop:sarkar}, embed $T$ to an embedding $Z \subset\mathbb{B}_2^{n+1}$,
    such that $(Z, T)$ is $(1+\varepsilon, 0)$-quasi-isometric,
    and WLOG reflect the embeddings (isometric transformation) so that $z_0$ is at the hyperbolic origin.

    Consider any triplet $1\le i, j, k\le n$, and WLOG let $\{ij|k\}_T$. 
    Applying \cref{lmm:depth} with $\kappa=0$,
    \begin{align*}
        d_o(z_i \vee z_j) - d_o(z_i \vee z_k)
        &\ge d_0(i \vee j) - d_0(i \vee k) - \frac{3\varepsilon}{2}\zeta n - \delta
        \\
        &\ge \zeta \left(1 - \frac{3\varepsilon}{2} n\right) - \delta
    \end{align*}
    In the second line we used the fact that the diameter of $T$ is at most $\zeta n$ (since there are $n+1$ nodes and all edges are equally weighted),
    and in the third that $i \vee j \neq i \vee k$ and the minimum tree distance between any distinct nodes is $\zeta$.

    Finally, we could choose $\varepsilon \le \frac{1}{3n}$ and $\zeta > 6C_n + 6\delta + 2$.
    Note that choosing such $\zeta = \Theta(n)$ works since 
    \cref{prop:sarkar} says $\zeta = \Theta(1/\varepsilon) = \Theta(n)$ is possible for the embedding,
    and \cref{prop:treelike} says it is sufficient since $C_n = O(n)$.
    Then:
    \begin{align*}
        & \max \{d_o(z_i \vee z_j), d_o(z_i \vee z_k), d_o(z_j \vee z_k) \} - \min \{d_o(z_i \vee z_j), d_o(z_i \vee z_k), d_o(z_j \vee z_k) \}
        \\
        &\ge d_o(z_i \vee z_j) - d_o(z_i \vee z_k)
        \\
        &\ge \zeta \left(1 - \frac{3\varepsilon}{2} n\right) - \delta
        \\
        &> 3C_n + 2\delta + 1.
    \end{align*}
    Since $i, j, k$ were arbitrary, this shows that $Z$ is spread (as defined in~\cref{def:spread-detailed}).
    Finally, note that the min edge length of $T$ is $m=\zeta$, the maximum path length is at most $M=n\zeta$, and $\zeta > \frac{3}{2}\varepsilon (n\zeta) + \delta$ by choice of $\varepsilon$ and $\zeta$.
    Since $(Z, T)$ were a $(1+\epsilon, 0)$-quasi-isometry, \cref{lmm:isometry-agreement} then implies that $(Z, T)$ are in LCA agreement.
\end{proof}

Finally, we show~\cref{lmm:encode-decode} using the previous Lemma and~\cref{lemma:decoding}.
\encodelemma*
\begin{proof}
Attach a leaf $0$ to the root of $T$, apply \cref{lmm:encode}, then
apply \cref{lemma:decoding}.
\end{proof}

\section{Experimental details}
\begin{figure}[t]
    \centering
    \begin{subfigure}[b]{0.3\textwidth}
         \centering
         \includegraphics[width=0.9\textwidth]{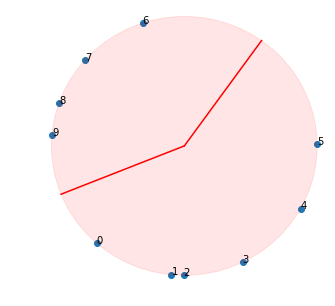}
         \caption{First split (red).}\label{fig:decoding_1}
     \end{subfigure}
    \begin{subfigure}[b]{0.3\textwidth}
         \centering
         \includegraphics[width=0.9\textwidth]{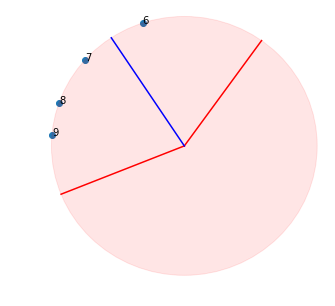}
         \caption{Second split (blue).}\label{fig:decoding_2}
     \end{subfigure}
     \begin{subfigure}[b]{0.3\textwidth}
         \centering
         \includegraphics[width=0.9\textwidth]{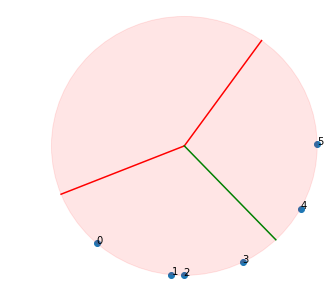}
         \caption{Third split (green).}\label{fig:decoding_3}
     \end{subfigure}
    \caption{An illustration of the greedy decoding algorithm.}\label{fig:angle_split}
\end{figure}
\begin{table}
  \centering
  \begin{tabular}{llccccc}
    \toprule
    & & \multicolumn{1}{c}{\textbf{Zoo}} & \multicolumn{1}{c}{\textbf{Iris}}
    &  \multicolumn{1}{c}{\textbf{Glass}} & \textbf{Segmentation} &  \multicolumn{1}{c}{\textbf{Spambase}} 
    \\
    \cmidrule(r){2-7}
     & \# Points & \multicolumn{1}{c}{101} & \multicolumn{1}{c}{150} 
     & \multicolumn{1}{c}{214} & 2310 & \multicolumn{1}{c}{4601} 
     \\
     & \# Clusters & 7 & 3 & 6 & 7 & 2     \\
    \midrule
    \multirow{5}{*}{Discrete} & SL & \underline{97.7 }& \underline{76.7} 
    & \underline{50.3} & 51.1 & 61.2\\
    & AL & 90.1 & 73.7 
    & 46.3 & 58.2 & {73.4}
    \\
    & CL & 96.6 & 76.1 & 46.9 & 55.1 & 69.1 
    \\
    & WL & 90.0 & 74.9 & 48.3 & \underline{61.3} & 68.3 
    \\
    \cmidrule(r){2-7} 
    & BKM & 86.5 & 66.1 & 43.5 & 57.2 & \underline{74.9} 
    \\
    \midrule
    \multirow{3}{*}{Continuous} & UFit & 97.2 & 76.8 & \textbf{51.0} & \textbf{61.2} & 61.6
    \\
    & gHHC & - & - & 46.3 & - & 61.4 
    \\
    \cmidrule(r){2-7}
    & \name{} & \textbf{98.7} & \textbf{77.3} & 49.2 & 55.9 & \textbf{79.2} 
    \\
    \bottomrule
    \end{tabular}
    \vspace{5pt}
    \captionof{table}{Clustering quality results measured in dendrogram purity (DP). 
    Best score in bold, second best underlined. gHHC scores are directly taken from~\cite{monath2017gradient}.}\label{table:den_purity}
\end{table}

\subsection{More experimental details}\label{subsec:appendix_exp_details}
\paragraph{Datasets}
In our experiments, we compute similarities using the cosine similarity measure on the datapoints' features. 
For all datasets in the UCI machine learning repository, we use the available features and normalize them so that each attribute has mean zero and standard deviation one. 
For CIFAR-100, where the raw data comes in the form of images, we used a pretrained BiT~\cite{kolesnikov2019large} convolutional neural network to compute 2048-dimensional image features (one to last layer). 

\paragraph{Baselines} To compute numbers for agglomerative clustering methods, we used the corresponding implementation in the scipy Python library.\footnote{\url{https://docs.scipy.org/doc/scipy/reference/cluster.hierarchy.html}}
We implemented our own version of BKM following the description in~\cite{moseley2017approximation} since no open-source version was available. 
For UFit, we used the open-source implementation and reused the same hyper-parameters as in the original paper~\cite{chierchia2019ultrametric}.\footnote{\url{https://github.com/PerretB/ultrametric-fitting}}
\paragraph{Evaluation metrics}
Dasgupta’s cost is a well-studied objective with known guarantees when there is an underlying ground-truth hierarchy~\cite{cohen2019hierarchical} (such results have not been established for metrics like dendrogram purity (DP)~\cite{heller2005bayesian}).
We therefore measure the clustering quality in terms of the discrete Dasgupta Cost.\footnote{For completeness, we also report DP scores in~\cref{table:den_purity}, observing that this metric does not correlate well with DC on some datasets (see~\cref{appendix:ghhc_comparison} for a discussion of the results).}
For randomized algorithms and our method (which produce a different solution for every run), we report the best cost over five random seeds.
This is standard since all methods can be viewed as \emph{search} algorithms for the latent minimizer of the Dasgupta cost,
analogous to how standard combinatorial search algorithms for NP-hard problems rely on random restarts and global perturbations when they reach local optima.

\paragraph{Cost Bounds}
We also report upper and lower bounds on the discrete Dasgupta cost, computed as:
\begin{equation}\label{eq:up_lb}
    \begin{split}
     \mathrm{UB}(w)&\coloneqq\sum_{ijk}\mathrm{max}(w_{ij}+w_{ik}, w_{ij}+w_{jk},w_{ik}+w_{jk}) + 2\sum_{ij}w_{ij}\ge C_\mathrm{Dasgupta}(T;w)\\
     \mathrm{LB}(w)&\coloneqq\sum_{ijk}\mathrm{min}(w_{ij}+w_{ik}, w_{ij}+w_{jk},w_{ik}+w_{jk}) + 2\sum_{ij}w_{ij}\le C_\mathrm{Dasgupta}(T;w),
    \end{split}
\end{equation}
for all rooted binary tree $T$.
For datasets with more than a thousand of nodes, we sample triplets uniformly at random for both the lower and upper bounds, and report the average over 10 random seeds.

Note that datasets where the relative gap between the upper and lower bound is larger indicate datasets where similarities induce a more hierarchical structure.
As noted in~\cite{wang2018improved}, an instance for which the lower bound can be achieved by a tree, is a ``perfect'' HC instance (termed ``perfect HC-structure'' in~\cite{wang2018improved}), in the sense that for every three points, the tree decomposes them in the most preferred way, i.e., by cutting the highest similarity weight last, towards the bottom of the tree. As the optimum tree gets higher costs, approaching the upper bound above, the instance loses its hierarchical structure; for example, if the given graph is a unit weight clique with no hierarchy to be found, both upper and lower bounds coincide and actually this implies that any tree gets the same cost, as was shown in Dasgupta~\cite{dasgupta2016cost}. 

In~\cref{table:small_exp}, we note that on dataset where there is a large relative gap between the upper and lower bounds (e.g. Iris or Segmentation), the relative improvement of \name{} compared to the best baseline is more important compared to datasets with a smaller gap (e.g. CIFAR-100). 
This suggest that when there is a good underlying HC in the data, \name{} is able to get closer to it than heuristic algorithms. 

\subsection{Greedy decoding}\label{appendix:greedy}
To provide more intuition about the greedy decoding, we illustrate different steps of greedy decoding on a small example in~\cref{fig:angle_split}.
The first split is computed using the two largest angles splits (red lines in~\cref{fig:decoding_1}). Then, the algorithm recurses on the two created subsets and uses the largest angle in each subset to split the data (blue and green lines in~\cref{fig:decoding_2} and~\cref{fig:decoding_3}).

\subsection{Comparison with gHHC}\label{appendix:ghhc_comparison}
\paragraph{Models' comparison}
For completeness, we discuss in more details the comparison between \name{} and the gHHC model. 
Monath et al. \cite{monath2019gradient} propose a differentiable HC objective, which yields improvements in scalability and downstream task performance.
This model learns representations for a fixed number of intermediate nodes using hyperbolic embeddings, and optimizes such embeddings for the HC task.
Once learned, the embeddings can be decoded into a discrete tree using heuristics and post-processing rules.
gHHC differs from the traditional similarity-based HC setting, assuming additional information about the optimal clustering, in the form of hyperbolic leaves' embeddings.
These are computed using normalized Euclidean features, which is mismatched to the geometry of the data.
In contrast, \name{} does not rely on input leaves' embeddings and directly optimizes the entire tree structure, via the hyperbolic LCA construction.
The \name{} decoding does not require post-processing and directly produces a dendrogram which matches the underlying geometry of the embeddings. 

\paragraph{Experiments}
While we compare to similarity-based HC methods in our main experiments, we also include a comparison to gHHC (which uses features) for completeness in~\cref{table:den_purity}. 
gHHC evaluates the clustering quality using the dendrogram purity (DP) measure~\cite{heller2005bayesian}. Given ground truth flat clusters, DP measures how well the clusters are preserved in the hierarchy.
Note that this metric will only be a good indicator for clustering quality when the ground truth flat clusters correlate with the hierarchy. A proxy to measure such a correlation is analyzing if methods that do well on DC also do well on DP. 
On Spambase, \name{} and BKM have the best DC and DP scores, while SL does poorly for both metrics. 
This suggests that ground truth flat clusters do correlate well with the optimal hierarchy on this dataset, and we note that \name{} significantly outperforms gHHC for the DP metric. 
On Glass, the correlation is not so obvious; for instance SL performs poorly on DC but does very well on DP. 
We conjecture that the ground truth flat clusters in glass are not directly correlated with the optimal clustering on this dataset.

\end{document}